\newtheorem{thm}{Theorem}
\newtheorem{lem}{Lemma}[section]
\newtheorem{cor}{Corollary}
\newtheorem{prop}[lem]{Proposition}
\theoremstyle{definition}
\newtheorem{defn}[lem]{Definition}
\theoremstyle{remark}
\newtheorem{rem}{Remark}[section]
\numberwithin{equation}{section}
\newcommand{\norm}[1]{\left\Vert#1\right\Vert}
\newcommand{\set}[1]{\left\{#1\right\}}
\newcommand{\pd}[2]{\frac{\partial #1}{\partial #2}}
\newcommand{\pr}[2]{\langle  #1,#2\rangle}
\newcommand{\vphi}{{\varphi}}
\newcommand{\calF}{\mathcal{F}}
\newcommand{\calG}{\mathcal{G}}
\newcommand{\calI}{\mathcal{I}}
\newcommand{\calJ}{\mathcal{J}}
\newcommand{\calK}{\mathcal{K}}
\newcommand{\calP}{\mathcal{P}}
\newcommand{\calS}{\mathcal{S}}
\newcommand{\bbZ}{\mathbb{Z}}
\newcommand{\bbR}{\mathbb R}
\newcommand{\bbC}{\mathbb C}
\newcommand{\bbN}{\mathbb N}
\newcommand{\bbH}{\mathbb{H}}
\newcommand{\G}{\mathcal{G}}
\newcommand{\bfK}{\mathcal{K}}
\newcommand{\bfP}{\mathcal{P}}
\newcommand{\bfL}{\mathbf{L}}
\newcommand{\bfM}{\mathbf{M}}
\newcommand{\Tr}{ \mbox{Tr}}
\newcommand{\Op}{ \operatorname{Op}}
\newcommand{\Sl}{ \mathrm{sl}}
\newcommand{\PSL}{ \mathrm{PSL}}
\newcommand{\PGL}{ \mathrm{PGL}}
\newcommand{\dv}{d\mathrm{vol}}
\newcommand{\vol}{\mathrm{vol}}
\newcommand{\im}{\mathrm{Im}}
\newcommand{\var}{\mathrm{Var}}
\newcommand{\calH}{\mathcal{H}}
\newcommand{\bfE}{\mathbf{E}}
\newcommand{\lap}{\triangle}
\newcommand{\bs}{\backslash}
\newcommand{\id}{1\!\!1}
\renewcommand{\Im}{\mathrm{Im}}
\begin{document}
\title[Quantum Ergodicity for hyperbolic planes]
{Quantum Ergodicity for products of hyperbolic planes}%
\author{Dubi Kelmer }%
\address{Raymond and Beverly Sackler School of Mathematical Sciences,
Tel Aviv University, Tel Aviv 69978, Israel}
\email{kelmerdu@post.tau.ac.il}

\thanks{}%
\subjclass{81Q50 (43A85)}%
\keywords{quantum ergodicity, hyperbolic plane}%

\date{\today}%
\dedicatory{}%
\commby{}%

\begin{abstract}
    For manifolds with geodesic flow that is ergodic on the unit tangent
    bundle, the quantum ergodicity theorem implies that almost all
    Laplacian eigenfunctions become equidistributed as the
    eigenvalue goes to infinity. For a locally symmetric space with a universal cover that is a product of several upper half
    planes, the geodesic flow has constants of motion so it can not be
    ergodic. It is, however, ergodic when restricted to the submanifolds defined by these constants.
    In accordance, we show that almost all eigenfunctions become
    equidistributed on these submanifolds.
\end{abstract}

\maketitle
\section*{introduction}
    The Quantum Ergodicity Theorem
    ~\cite{ColinDeVerdiere85,Snirelman74,Zelditch87}, is a
    celebrated result concerning the behavior of Laplacian
    eigenfunctions on compact manifolds with an ergodic geodesic
    flow, stating that most eigenfunctions become
    equidistributed on the unit tangent bundle with respect to the volume measure.
    We currently lack a general understanding of the situation when
    the geodesic flow is not ergodic (but still not integrable).
    In this paper we look at a special example, that of a locally
    symmetric space with a universal cover that is a product of upper half planes.
    The geodesic flow on this space is no longer ergodic, yet, it
    does posses some chaotic features, suggesting that Laplacian eigenfunctions still become equidistributed (on the correct space).

    In the special case of one half plane, the geodesic flow is
    ergodic, so that the Quantum Ergodicity Theorem applies.  In fact, in this case
    it is believed that a much stronger result holds, that is, that
    all eigenfunctions become equidistributed as the eigenvalue goes to infinity. This notion is referred to as
    \emph{Quantum Unique Ergodicity} and is conjectured to hold for surfaces of negative curvature \cite{RudSarnak94}.
    Perhaps the strongest evidence for the QUE conjecture comes from the analysis on
    arithmetic surfaces, i.e., $X=\Gamma\bs\bbH$ with $\Gamma$ a congruence subgroup. For
    arithmetic surfaces there are additional symmetries, Hecke operators commuting
    with each other and with the Laplacian. A joint eigenfunction of the Laplacian and all Hecke operators
    is called a Hecke eigenfunction. In \cite{Linden06}, Lindenstrauss showed that indeed, for any sequence of
    Hecke eigenfunctions the corresponding quantum measures converge to the volume measure.
    We note that by Watson's formula for triple integrals \cite{Watson01},
    the Grand Riemann Hypothesis implies QUE for Hecke eigenfunctions
    with an effective rate of convergence.

    We now proceed to the high rank case and consider the locally symmetric space $X=\Gamma\bs\calH$, with
    $\calH=\bbH\times\cdots\times\bbH$ a product of $d$ hyperbolic planes,
    and $\Gamma$ an irreducible co-compact lattice in $\calG=\PSL(2,\bbR)^d$.
    We note that in this case the geodesic flow (on the tangent bundle) has $d>1$ independent constants of
    motion given by the partial energy functions
    $E_j(z,\xi)=\norm{\xi_j}^2_{z_j}$. Consequently, the geodesic flow can not be
    ergodic on the unit tangent bundle. It is, however, ergodic when restricted to the generalized energy
    shells $$\Sigma(\bfE)=\set{(z,\xi)\in TX|
    E_j(z,\xi)=E_j}\subset SX,$$ for any level $\bfE\in [0,1]^d$ with $\sum_j E_j=1$
    (we normalize the energy so that the energy shell lies in $SX$).
    Notice, that the structure and the dynamics on each energy shell
    is determined by the set of singularities $\set{j|E_j=0}$, so that any two energy
    shells with the same singularities can be identified.

    The algebra of invariant differential operators
    in this case is generated by $d$ partial Laplacians $\lap_j=y_j^2(\pd{^2}{x_j^2}+\pd{^2}{y_j^2})$ acting on each hyperbolic
    plane. Any Laplacian eigenfunction $\phi_k$ (that is, a joint eigenfunction of all the partial Laplacians $\lap_j\phi_k+\lambda_{k,j}\phi_k=0$),
    can be interpreted as a distribution on the unit tangent bundle $SX$ (via a corresponding Wigner distribution).
    This distribution is concentrated on a corresponding energy shell $\Sigma(\bfE_k)\subseteq SX$,
    with $\bfE_k=\frac{(\lambda_{k,1},\ldots,\lambda_{k,d})}{\lambda_{k,1}+\cdots+\lambda_{k,d}}$.
    The projection of this distribution to the base manifold $X$ is the measure defined by the density $d\mu_k=|\phi_k|^2dz$.

    In this paper we show an analogous result to the
    Quantum Ergodicity Theorem in this setting.
    For every energy level $\bfE$ the flow on the energy shell $\Sigma(\bfE)$ is
    ergodic. Correspondingly, we show that for almost any sequence of eigenfunctions, $\phi_k$,
    with (normalized)
    eigenvalues
    $\frac{(\lambda_{k,1},\ldots,\lambda_{k,d})}{\lambda_{k,1}+\cdots+\lambda_{k,d}}=\bfE_k\to\bfE$,
    the corresponding distributions converge to the volume measure of the energy shell $\Sigma(\bfE)$.
    Note that the projection to the base of the volume measure from any energy shell is the volume measure of $X$,
    hence, the above result implies that for almost any sequence of eigenfunctions,
    the measures $\mu_k$ converge to the volume measure of $X$.

\begin{rem}
It is reasonable that similar results could be proved more generally with pseudodifferential calculus. 
See \cite{Zelditch92} for analogous results on reduced quantum ergodicity in the presence of symmetries.
\end{rem}

    The above Quantum Ergodicity result holds for any irreducible co-compact lattice and for any orthonormal basis of
    Laplacian eigenfunctions (without taking into account action of Hecke operators).
    We note that, as in rank one, much more is known in the arithmetic setting. When
    $\Gamma$ is a congruence subgroup (coming from a Quaternion
    algebra over a corresponding number field), there are Hecke
    operators acting on $L^2(\Gamma\bs\calH)$ commuting with all the partial Laplacians.
    If one considers Hecke eigenfunctions then it is likely that, again, the only limiting measure on $X$ obtained as a quantum limit
    (respectively its lift to $\Gamma\bs\calG$), is the volume
    measure\footnote{As pointed out by Elon Lindenstrauss, such a result should follow from the same arguments applied in
    \cite{BourgainLinden03,Linden01,Linden06}.}.
    In particular, this would imply that for \textbf{any} sequence of Hecke
    eigenfunctions, with eigenvalues $\bfE_k\to\bfE$, the
    corresponding distributions converge to the volume measure of
    $\Sigma(\bfE)$.

    \begin{rem}
    In \cite{SilbermanVenkatesh04I}, Silberman and Venkatesh generalized
    the lift of the limiting measures to the more general setting of higher rank locally symmetric spaces $\Gamma\bs\calG/K$
    with $\calG$ a semi-simple connected Lie group.
    In \cite{SilbermanVenkatesh04II}, for
    the special case of $\calG=\PGL(d,\bbR)$ with $d$ prime, they used this lift to
    generalize the results of \cite{Linden06},
    and show that for any sequence of (non-degenerate)
    Hecke eigenfunctions, the limiting measure is the Haar measure.
    \end{rem}

\subsection*{Results}
Let $X=\Gamma\bs\calH$, with
$\calH=\bbH\times\cdots\times\bbH$ a product of $d$ hyperbolic planes,
and $\Gamma$ an irreducible co-compact lattice in $\calG=\PSL(2,\bbR)^d$.
Let $\{\phi_k\}$ be an orthonormal basis of $L^2(X)$ consisting
of (joint) Laplacian eigenfunctions $(\lap_j+\lambda_{k,j})\phi_k=0,\;\lambda_{k,j}=\frac{1}{4}+r_{k,j}^2$.
For any $\bfL=(L_1,\ldots, L_d)\in[\frac{1}{2},\infty)^{d}$, let
$$\calI(\bfL)=\set{k\colon \norm{r_{k}-\bfL}_\infty\leq 1/2}$$
denote the set of eigenfunctions with eigenvalues in a window around
$\bfL$, and denote by $N(\bfL)=\sharp\calI(\bfL)$ the number of such
eigenfunctions.
\begin{rem}
The choice for the window, $\calI(\bfL)$, to be of volume one in $\bbR^d$ is mainly cosmetic. The same results (with essentially the same proofs) also holds if we take $\calI(\bfL)$ to be a window of any given size.
\end{rem}

For any $\bfE\in[0,1]^d$ (with $\sum E_j=1$) we can identify the generalized energy shells, $\Sigma(\bfE)$,
with the quotients $\Gamma\bs
\calG/\prod_{E_j=0}K_j$. So for instance when $d=2$ and $\bfE=(1,0)$ the energy shell $\Sigma(1,0)$ is identified
with $\Gamma\bs \PSL(2,\bbR)\times\bbH$.
Under this identification the volume measure of $\Sigma(\bfE)$ is (up to normalization)
the measure induced from the Haar measure of $\calG$.

To each eigenfunction $\phi_k$ we attach a distribution $S_{\phi_k}$ on
$\Gamma\bs\calG$  (coming from the Wigner distribution on $TX$),
that coincides with $\mu_k$ on $\calK$-invariant function (see sections \ref{sLift}).
If we take $\bfL\to\infty$ so that $\frac{(L_1^2,\ldots,L_d^2)}{\norm{\bfL}^2}\to
\bfE$, then for $k\in\calI(\bfL)$ the normalized eigenvalues
$\bfE_k\sim\bfE$ and the distributions $S_{\phi_k}$ become close to probability measures on $\Sigma(\bfE)$.
Given a smooth test function $a\in C^\infty(\Sigma(\bfE))$,
we evaluate how far are these distributions (equivalently measures) from the volume measure.
We first show that when (at least one of) the eigenvalues go to infinity, on average, the distributions $S_{\phi_k}$ converge to the volume measure on $\Gamma\bs \calG$.
\begin{thm}\label{tWL}
    For any $a\in C^\infty(\Gamma\bs \calG)$,
    \[\lim_{\norm{\bfL}\to\infty}\frac{1}{N(\bfL)}\sum_{k\in\calI(L)}S_{\phi_k}(a)=\frac{1}{\vol(\Gamma\bs \calG)}\int_{\Gamma\bs \calG}a(g)dg.\]
\end{thm}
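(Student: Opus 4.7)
The plan is to reduce the averaged equidistribution on $\Gamma\bs\calG$ to a local Weyl law for the joint partial-Laplace eigenbasis on $X$, via a $\calK$-type decomposition of the test function. Since $\calK=\SO(2)^d$ is abelian with character group $\widehat\calK\simeq\bbZ^d$, I would decompose $a=\sum_{\vec m\in\bbZ^d}a_{\vec m}$ by right $\calK$-types, where $a_{\vec m}(gk_{\vec\theta}) = e^{i\vec m\cdot\vec\theta}a_{\vec m}(g)$. Smoothness of $a$ gives rapid decay of the norms $\|a_{\vec m}\|_\infty$, so it suffices to establish the convergence for each fixed $\vec m$ separately and then sum.

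For $\vec m=\vec 0$ the function $a_0$ descends to a smooth function on $X$, and by the characterizing property of the lift recalled in Section~\ref{sLift} one has $S_{\phi_k}(a_0)=\int_X a_0(z)|\phi_k(z)|^2\,dz$. Interchanging sum and integral gives $\int_X a_0(z)N_\bfL(z)\,dz$ where $N_\bfL(z):=\sum_{k\in\calI(\bfL)}|\phi_k(z)|^2$. Applying the Selberg pre-trace formula for the joint spectrum of the partial Laplacians on $\calH=\bbH\times\cdots\times\bbH$ (whose spherical kernel factors as a tensor product over the $d$ hyperbolic planes, so classical one-dimensional heat/wave asymptotics combine multiplicatively) yields the pointwise local Weyl law $N_\bfL(z) = N(\bfL)/\vol(X) + o(N(\bfL))$ uniformly in $z$, which produces the main term $\vol(X)^{-1}\int_X a_0\,dz = \vol(\Gamma\bs\calG)^{-1}\int_{\Gamma\bs\calG} a_0\,dg$.

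For $\vec m\neq\vec 0$ the right-hand side vanishes, since $\int_{\Gamma\bs\calG}a_{\vec m}\,dg=0$ by $\calK$-character orthogonality, so one must show $\sum_{k\in\calI(\bfL)}S_{\phi_k}(a_{\vec m}) = o(N(\bfL))$. Using the representation-theoretic construction of $S_{\phi_k}$, I would write $S_{\phi_k}(a_{\vec m})$ as a matrix coefficient involving $\phi_k$ and the $\calK$-weight vector $\phi_k^{(\vec m)}$ in the same irreducible subrepresentation $V_{\pi_k}\subset L^2(\Gamma\bs\calG)$, obtained from $\phi_k$ by the appropriate raising/lowering operators. Summing over $k\in\calI(\bfL)$ and applying the analogue of the pre-trace formula for the modified kernel $\sum_{k\in\calI(\bfL)}\phi_k(g)\overline{\phi_k^{(\vec m)}(g)}$ yields an expression of the form $\int a_{\vec m}(g) K^{(\vec m)}_\bfL(g,g)\,dg$. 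The on-diagonal kernel $K^{(\vec m)}_\bfL$ has $\calK$-weight $-\vec m$ and its geometric expansion is controlled by orbital integrals weighted by oscillating matrix-coefficient factors, which produce enough cancellation to give the $o(N(\bfL))$ bound.

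The main obstacle will be establishing the local Weyl law uniformly in $\bfL$, since $\|\bfL\|\to\infty$ is allowed to hold with some coordinates $L_j$ remaining bounded. To handle this I would approximate the sharp spectral window $\calI(\bfL)$ by smooth weights that factor as tensor products over the $d$ hyperbolic planes, apply the classical Selberg pre-trace formula on each factor separately, and verify that the non-identity orbital contributions on the geometric side remain of lower order regardless of which subset of the $L_j$'s tends to infinity. The product structure of $\calH$ is essential here: the Selberg kernel factors as a tensor product, which lets one localize in the spectral parameters coordinate by coordinate while keeping the cross-terms on the geometric side under control.
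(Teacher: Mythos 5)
Your architecture --- decompose $a$ into $\calK$-types, treat the type-$\vec 0$ part by a local Weyl law on $X$, and handle each nonzero type by a pre-trace formula for a kernel of that type with spectral weights factoring over the $d$ planes --- is exactly the paper's (Theorem \ref{tTRACE} is precisely your ``analogue of the pre-trace formula for the modified kernel''). But the step you compress into ``orbital integrals weighted by oscillating matrix-coefficient factors produce enough cancellation'' is where the entire content of the proof lives, and as stated it is a gap. The actual mechanism is concrete and twofold. First, the identity class --- the only term of size $L_1\cdots L_d$ --- vanishes identically for $\vec m\ne 0$, because the point-pair kernel $f=\calS_{n}^{-1}h$ of nonzero $K$-type satisfies $f(i)=0$. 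Second, the finitely many nontrivial classes meeting the (compact) support of $f$ are bounded via the stationary-phase estimate $\Phi_{r,n}(z)\ll r^{-1/2}$ for the generalized spherical functions, uniform on compacta away from $i$ (Lemma \ref{lBoundSPherical}); this gives $O(\sqrt{L_1\cdots L_d})$ for purely hyperbolic classes. For mixed classes the elliptic factors have fixed points where $\Phi_{r,n}=O(1)$ only, so one must excise an $\epsilon$-neighborhood and accept $O(\epsilon L_1\cdots L_d)$ --- which is exactly why the theorem carries no rate. Nothing in your sketch identifies this dichotomy or the elliptic difficulty, so the claim that the geometric side is $o(N(\bfL))$ is unsupported. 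Relatedly, for nonzero type $n_j$ your factorized smooth spectral weights must satisfy the functional equation defining $PW_{n_j}(\bbC)$ in order for their inverse spherical transforms to be admissible compactly supported kernels of the correct $K$-type; arranging this (Proposition \ref{pSmooth1}) is not automatic.

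A second genuine gap is the exceptional spectrum. In rank $d\ge 2$ there may be infinitely many eigenfunctions with some $\lambda_{k,j}<\frac14$, i.e.\ $r_{k,j}$ imaginary; the trace formula sums over all of them while your window $\calI(\bfL)$ does not, so one must prove these have density zero relative to $L_1\cdots L_d$ (Lemma \ref{lSmall1}, via positive Gaussian test functions in the Selberg trace formula) before comparing the smoothed sum to the sharp count. Your per-factor ``classical one-dimensional asymptotics'' do not see this, nor do they supply the lower bound $N(\bfL)\gg L_1\cdots L_d$ (Proposition \ref{pCOUNT2}), which is itself needed to convert $o(L_1\cdots L_d)$ into $o(N(\bfL))$. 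Finally, the pointwise-uniform local Weyl law you invoke for $\vec m=\vec 0$ fails at elliptic fixed points of $\Gamma$ (torsion is permitted here); only the integrated statement is true, and it is all that is needed.
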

Note that this theorem holds for any smooth function on $\Gamma\bs \calG$, and any limit $\bfL\to\infty$. In particular, if we start from $a\in C^\infty(\Sigma(\bfE))$ and take $\bfL\to\infty$ with $\frac{(L_1^2,\ldots,L_d^2)}{\norm{\bfL}^2}\to
\bfE$, we get that on average 
$$\frac{1}{N(\bfL)}\sum_{k\in\calI(L)}S_{\phi_k}(a)\to \int_{\Sigma(\bfE)}a\dv.$$

Next, we study the variation from the average. To do this, define the
variance of the distributions $S_{\phi_k},\;k\in\calI(\bfL)$ (with respect to the test
function $a\in C^\infty(\Sigma(\bfE)$) as
\[\var_\bfL(a)=\frac{1}{N(\bfL)}\sum_{k\in\calI(\bfL)}\left|S_{\phi_k}(a)-\int_{\Sigma(\bfE)}a\dv\right|^2.\]
\begin{thm}\label{tVAR2}
For any $a\in C^\infty(\Sigma(\bfE))$, as $\bfL\to\infty$ with
$\frac{(L_1^2,\ldots,L_d^2)}{\norm{\bfL}^2}\to \bfE$,
$$\lim_{\norm{\bfL}\to\infty}\var_\bfL(a)=0.$$
\end{thm}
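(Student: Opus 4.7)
The plan is to run the Zelditch variance argument in the present higher-rank product setting. After subtracting the constant $\int_{\Sigma(\bfE)}a\,\dv$ we may assume it equals zero. The key input is an approximate Egorov-type statement: since each $\phi_k$, $k\in\calI(\bfL)$, is a joint eigenfunction of the partial Laplacians with spectral parameters of size $\norm\bfL$, the distribution $S_{\phi_k}$ should be approximately invariant under right translation by the full Cartan subgroup $A=A_1\times\cdots\times A_d\subset\calG$; concretely one expects, uniformly on compact sets of $\underline t\in\bbR^d$,
\[
S_{\phi_k}\bigl(a(\,\cdot\,\alpha_{\underline t})\bigr)=S_{\phi_k}(a)+o_{\norm\bfL\to\infty}(1),
\]
where $\alpha_{\underline t}\in A$ is the element with $j$-th coordinate $\mathrm{diag}(e^{t_j/2},e^{-t_j/2})$. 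Fix a direction $\underline v\in\bbR^d_{\ge 0}$ whose one-parameter subgroup $\{\alpha_{t\underline v}\}$ projects to the geodesic flow on $\Sigma(\bfE)$, and set $\langle a\rangle_T^{\underline v}(g):=\frac{1}{T}\int_0^T a(g\,\alpha_{t\underline v})\,dt$; combining with the approximate invariance yields $S_{\phi_k}(a)=S_{\phi_k}(\langle a\rangle_T^{\underline v})+o(1)$.

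The remaining steps are standard. A Cauchy--Schwarz inequality, using that $S_{\phi_k}$ is (up to lower-order terms) a positive state, gives $\abs{S_{\phi_k}(b)}^2\le S_{\phi_k}(\abs{b}^2)+o(1)$. Taking $b=\langle a\rangle_T^{\underline v}$ and feeding the resulting pointwise bound into Theorem~\ref{tWL} applied to the smooth function $\abs{b}^2$ on $\Gamma\bs\calG$ gives
\[
\limsup_{\norm\bfL\to\infty}\var_\bfL(a)\le\int_{\Sigma(\bfE)}\bigl|\langle a\rangle_T^{\underline v}\bigr|^2\,\dv.
\]
The one-parameter subgroup $\{\alpha_{t\underline v}\}$ acts ergodically on $\Sigma(\bfE)$ (in fact mixingly, by Howe--Moore for each non-singular simple factor of $\calG$), so the von Neumann mean ergodic theorem forces the right-hand side to $\abs{\int a\,\dv}^2=0$ as $T\to\infty$. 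Since $T$ was arbitrary, this yields $\var_\bfL(a)\to 0$.

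The main technical obstacle is making both the approximate Cartan invariance and the Cauchy--Schwarz step quantitative, with a genuine $o(1)$ error as $\norm\bfL\to\infty$. The delicate point is the treatment of the singular factors: when $E_j=0$ the partial eigenvalue $L_j$ stays bounded and no high-frequency oscillation is available in the $j$-th factor, so the error in those directions must be controlled by the smoothness of $a$ under the $K_j$-action rather than by the size of $L_j$.
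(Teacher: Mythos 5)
Your proposal follows essentially the same route as the paper: approximate invariance of $S_{\phi_k}$ under the geodesic flow $A_\bfE(t)=\prod_{j:E_j\neq 0}A_j(\sqrt{E_j}t)$, positivity of $S_{\phi_k}$ as an approximate state giving the Cauchy--Schwarz step, the local Weyl law (Theorem~\ref{tWL}) applied to $|\langle a\rangle^T|^2$, and Moore ergodicity plus the mean ergodic theorem to kill the limit as $T\to\infty$. The singular-factor issue you flag is resolved in the paper exactly as you anticipate: invariance is only used in the non-singular directions ($v_j=\sqrt{E_j}=0$ kills the others), and the positivity statement is proved for test functions invariant under $\prod_{E_j=0}K_j$, with errors controlled by $\min_{E_j\neq 0}r_{k,j}\gg\norm{\bfL}$.
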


In particular, using a diagonalization argument (see \cite{Zelditch87}),
this variance estimate implies that for almost any sequence of $\phi_k$,
with normalized eigenvalues converging to some energy $\bfE$,
the distributions $S_{\phi_k}$ converge to the volume measure of
the corresponding energy shell.

Moreover, since the projection of the volume measure of any
energy shell to $X$ is the volume measure of $X$, we get the following
corollary:
\begin{cor}
For any $a\in C^\infty(X)$, as $\bfL\to\infty$
$$\lim_{\norm{\bfL}\to\infty}\frac{1}{N(\bfL)}\sum_{k\in\calI(\bfL)}\left|\int_X a(z)|\phi_k(z)|^2dz-\int_{X}a(z)dz\right|^2=0.$$
\end{cor}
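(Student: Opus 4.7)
The plan is to deduce this from Theorem \ref{tVAR2} by a compactness argument that removes the hypothesis fixing the direction of $\bfL$. The key observation is that a test function $a \in C^\infty(X)$ is insensitive to the choice of energy shell: it lifts to a $\calK$-invariant smooth function on $\Gamma\bs\calG$ (with $\calK = \prod_j K_j$), which, since $\prod_{E_j=0} K_j \subseteq \calK$, descends to every $\Sigma(\bfE) = \Gamma\bs\calG/\prod_{E_j=0}K_j$ and gives the same integral, namely (with the stated normalizations) $\int_X a(z)\,dz$. Moreover, since $S_{\phi_k}$ coincides with $\mu_k$ on $\calK$-invariant functions, one has $S_{\phi_k}(a) = \int_X a(z)|\phi_k(z)|^2 dz$.

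Combining these two facts, the quantity being squared in the corollary is exactly $|S_{\phi_k}(a) - \int_{\Sigma(\bfE)} a\,\dv|^2$ for \emph{every} choice of $\bfE$. Consequently, the left-hand side of the corollary equals $\var_\bfL(a)$, interpreted on any energy shell we like, so it suffices to show $\var_{\bfL_n}(a) \to 0$ along an arbitrary sequence $\bfL_n$ with $\norm{\bfL_n} \to \infty$, without any assumption on the direction of $\bfL_n$.

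For this final step, given any such sequence, the normalized vectors $\bfE_n = (L_{n,1}^2,\ldots,L_{n,d}^2)/\norm{\bfL_n}^2$ lie in the compact simplex $\{\bfE \in [0,1]^d : \sum_j E_j = 1\}$. Extracting a subsequence $\bfL_{n_k}$ along which $\bfE_{n_k} \to \bfE$ for some $\bfE$ in the simplex and applying Theorem \ref{tVAR2} to $a$ (viewed as a smooth function on $\Sigma(\bfE)$) yields $\var_{\bfL_{n_k}}(a) \to 0$. Since every subsequence of $\bfL_n$ admits a further subsequence along which the variance vanishes, the full sequence converges to zero, proving the corollary.

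There is no real obstacle here; the argument is essentially a repackaging of Theorem \ref{tVAR2}. The only points that require care are the measure-theoretic identifications (the pushforward of $\dv$ on $\Sigma(\bfE)$ to $X$ recovers $dz$, and the Wigner distribution collapses to the density on $\calK$-invariant functions), both of which are invariances already recorded in the introduction, together with the compactness of the simplex that upgrades the directional statement of Theorem \ref{tVAR2} to the direction-free statement of the corollary.
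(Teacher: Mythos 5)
Your proposal is correct and follows essentially the same route as the paper: the corollary is deduced from Theorem \ref{tVAR2} using the facts that $S_{\phi_k}$ agrees with $\mu_k$ on $\calK$-invariant functions and that the volume measure of every energy shell projects to that of $X$, so the quantity in the corollary is $\var_\bfL(a)$ regardless of the energy shell. Your compactness-of-the-simplex subsequence argument correctly fills in the one detail the paper leaves implicit, namely upgrading the directional hypothesis of Theorem \ref{tVAR2} to the direction-free statement of the corollary.
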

This implies that almost all of the measures $\mu_k$ converge
to the volume measure of $X$.

\section*{Acknowledgments}
    I would like to thank Lior Silberman for explaining his results on the micro local lift for locally symmetric spaces.
    I also thank Elon Lindenstrauss for his remarks regarding arithmetic
    quantum unique ergodicity. I thank Mikhail Sodin for his helpful suggestions. Finally, I thank Zeev Rudnick for sharing his
    insights and for his comments on an early draft of this paper. This work was supported in part by the Israel Science Foundation founded by the Israel Academy of Sciences and Humanities.

\section{Background and Notation}\label{sNote}
\subsection{The hyperbolic plane}
Let $\bbH=\set{z\in\bbC\colon \im(z)>0}$ denote the upper half plane.
This is a symmetric space with group of isometries
$G=\PSL(2,\bbR)$ (acting by linear transformations). Let $K\subseteq G$ be the stabilizer of
$i\in\bbH$ and $P\subseteq G$ be the stabilizer of
$\infty\in\partial\bbH$. We use coordinates corresponding to the identification
of $\bbH=G/K\cong P$. For $z=x+iy\in\bbH$ and $\theta\in[0,\pi)$ we let
$p_{z}=\begin{pmatrix}
\sqrt{y}  & x/ \sqrt{y}  \\
0 & 1/ \sqrt{y}
\end{pmatrix}=\begin{pmatrix}
1 & x \\
0 & 1\end{pmatrix}\begin{pmatrix}
\sqrt{y} & 0 \\
0 & \sqrt{1/y}    \end{pmatrix}
\in P$ and $k_{\theta}=\begin{pmatrix} \cos(\theta)& \sin(\theta)\\
-\sin(\theta) & \cos(\theta) \end{pmatrix}\in K$. In these coordinates the (normalized) Haar measures of
$P$ and $K$ are given by $dp=dz=\frac{dxdy}{y^2}$ and
$dk=\frac{d\theta}{\pi}$. The Haar measure of $G$ is then
$dg=dpdk$.

The Lie algebra $\Sl(2,\bbR)$, is generated by
$W=\begin{pmatrix}0 & 1\\-1&
0\end{pmatrix},\;H=\begin{pmatrix} 1 & 0\\ 0& -1\end{pmatrix}$
and $X=\begin{pmatrix} 0 & 1\\ 0 & 0\end{pmatrix}$.
In the $(x,y,\theta)$ coordinates these elements are given
by the following differential operators (c.f. \cite[Chapter  IV \S 4]{Lang85}):
\begin{eqnarray*}
W&=&\pd{}{\theta}\\
H&=&-2y\sin(2\theta)\pd{}{x}+2y\cos(2\theta)\pd{}{y}+\sin(2\theta)\pd{}{\theta}\\
X&=&y\cos(2\theta)\pd{}{x}+y\sin(2\theta)\pd{}{y}+\sin^2(\theta)\pd{}{\theta}\\
\end{eqnarray*}

We consider the function $E(z,\xi)=\norm{\xi}^2_z=\frac{\xi_x^2+\xi_y^2}{y^2}$ as an energy function. For any $E>0$ we can identify
the energy shell $$T\bbH_E=\set{(z,\xi)\colon E(z,\xi)=E}$$ with $G$ (and also with the unit tangent bundle $T\bbH_1$) via the map $g(z,\xi)=p_zk_\theta$ with
$\theta=\tan^{-1}(\xi_y/\xi_x)$. The zero section $T\bbH_0\cong\bbH$ is identified with $G/K$.
The energy shells are invariant under the geodesic flow, that (under the above identification) is given by the action of
$A(t)=e^{Ht}=\begin{pmatrix}e^t & 0\\ 0 & e^{-t}\end{pmatrix}$
on $G$ from the right.

\subsection{Products of hyperbolic planes}
Let $\calH=\bbH\times\cdots\times\bbH$ be a product of $d$ hyperbolic
planes. This is a symmetric space with group of isometries $\G=\prod_{j=1}^d G_j$ (where $G_j=G=\PSL(2,\bbR)$).
We have a decomposition $\G=\bfP\bfK$, where
$\bfK=\prod K_j$ and $\bfP=\prod P_j$. For $z=(z_1,\ldots, z_d)\in\calH$ and $\theta=(\theta_1,\ldots,\theta_d)\in [0,\pi)^d$
we let $p_z=(p_{z_1},\ldots ,p_{z_d})\in \calP$ and
$k_\theta=(k_{\theta_1},\ldots,k_{\theta_d})\in \calK$. The Haar measures of $\bfP,\bfK$, and
$\G$ are then $dp=dp_1\cdots dp_d,\;dk=dk_1\cdots dk_d$ and $dg=dg_1\cdots dg_d$ respectively.
We will denote by $W_j,H_j$ and $X_j$ the action of the
corresponding differential operator on the $j$'th factor. 
For every $\bfE=(E_1,\ldots,E_d)\in [0,\infty)^d$ we can identify $\calG/\prod_{E_j=0}K_j$ 
with a corresponding energy shell inside the tangent bundle $T\calH$. The geodesic flow through the point
$(g,\bfE)=((g_1,\ldots,g_d),(E_1,\ldots, E_d))\in \calG\times
[0,\infty)^d$ is then given by the
right action of $A_\bfE(t)=\prod_j A_j(\sqrt{E}_jt)$ (where each
$A_j(t)=e^{tH_j}$ is the diagonal action on the corresponding
factor).

\subsection{Irreducible lattices}
    A discrete subgroup $\Gamma\subset \calG$ is called a lattice if the quotient $\Gamma\bs \calG$
    has finite volume, and co-compact when $\Gamma\bs \calG$ is
    compact. We say that a lattice $\Gamma\subset\calG$ is
    irreducible, if for every (non-central) normal subgroup $N\subset \calG$ the projection of $\Gamma$ to $\calG/N$ is dense.
    An equivalent condition for irreducibility, is that for any nontrivial
    $1\neq\gamma\in\Gamma$, none of the projections $\gamma_j\in G_j$ are trivial \cite[Theorem 2]{Shimizu}.
    Examples of irreducible (co-compact) lattices can be constructed from
    norm one elements of orders inside an appropriate quaternion algebra over a totally real number field \cite{Shavel76}.
    In fact when $d\geq 2$, Margulis's arithmeticity theorem states that, up to commensurability, these are the only examples.

    Let $\Gamma\subset \G$ be an irreducible co-compact lattice,
    we now go over the classification of the different elements of $\Gamma$ (see e.g., \cite{Efrat87,Shimizu}).
    Recall that an element $g_j\in G_j=\PSL(2,\bbR)$ is called hyperbolic if $|\Tr(g_j)|>2$, elliptic if $|\Tr(g_j)|<2$,
    and parabolic if $|\Tr(g_j)|=2$. Then for any nontrivial $1\neq\gamma\in\Gamma$, the projections to the different factors are either
    hyperbolic or elliptic. There are purely hyperbolic elements (where all projections are hyperbolic),
    and mixed elements (where some projections are hyperbolic and other
    elliptic). There could also be a finite number of torsion points that are purely elliptic.

\subsection{Spectral decomposition}
    Let $\Gamma$ be an irreducible co-compact lattice in $\calG$,
    so that $X=\Gamma\bs\calH$ is a compact Riemannian
    manifold. The algebra of invariant differential operators is generated by the $d$ partial hyperbolic
    Laplacians $\lap_j=y_j^2(\pd{^2}{x_j^2}+\pd{^2}{y_j^2})$. The
    space $L^2(X)$ has an orthonormal basis, $\{\phi_k\}$, composed of Laplacian
    eigenfunctions, $(\lap_j+\lambda_{k,j})\phi_k=0$,
    with all partial eigenvalues $\lambda_{k,j}\geq 0$ (in fact $\lambda_{k,j}>0$ unless $\phi_k\equiv 1$).
    For each eigenfunction $\phi_k$, we think of the normalized eigenvalues
    $\bfE_k=\frac{(\lambda_{k,1},\ldots,\lambda_{k,d})}{\lambda_{k,1}+\cdots+\lambda_{k,d}}$ as a quantum energy level.
    We use the standard parametrization
    $\lambda_{k,j}=r_{k,j}^2+\frac{1}{4}$ with $r_{k,j}\geq 0$ for $\lambda_{k,j}\geq \frac{1}{4}$ and $r_{k,j}\in i(0,\frac{1}{2})$ otherwise.

    In the case where some of the eigenvalues are small, $\lambda_{k,j}<\frac{1}{4}$, we say that $\phi_k$ is
    exceptional. We note that in this setting there could be infinitely many exceptional eigenfunctions
    (in contrast to the rank one case where there could be only finitely many).
    When $\Gamma$ is a congruence subgroup, it is
    conjectured that there are no exceptional eigenfunctions at all.
    However, in our general setting the most we can say is that
    the exceptional eigenfunctions are of density zero (Lemma \ref{lSmall1}).

\subsection{Fourier decomposition}
    Consider the homogeneous space $Y=\Gamma\bs\calG$.
    We can identify our space $X=\Gamma\bs \calH$ with
    the double quotient $\Gamma\bs \G/\bfK=Y/\calK$, and think of functions on $X$ as
    $\calK$-invariant functions on $Y$. For $n\in\bbZ^d$, let $\calF_{n}(Y)$ denote the joint eigenspaces
    for $W_j$ with eigenvalues $2in_j$ respectively. That is
    \[\calF_n(Y)=\set{f\in C^\infty(Y)| f(p_zk_\theta)=e^{2i(n\cdot \theta)}f(p_z)}.\]
    Any $a\in C^\infty(Y)$ has a $\calK$-Fourier
    decomposition $a=\sum_{n\in\bbZ^d}a_{n}$ with $a_n\in\calF_n(Y)$.
    For any integer $s\geq 1$, the functions $a_n$ in this decomposition are uniformly bounded $\norm{a_{n}}_\infty\leq\frac{
    \norm{W_j^sa}_\infty}{|2n_j|^{s}}$ for any $1\leq j\leq d$.

    Consider the operators $E_j^\pm=H_j\pm i(2X_j-W_j)$. These operators satisfy
    $[W_j,E_j^\pm]=2iE_j^\pm$, so that they act as raising an lowering
    operators (i.e.,
    $E_j^\pm:\calF_{n}(Y)\to\calF_{n\pm e_j}(Y)$). Furthermore, on
    $K_j$ invariant functions the action of $E_j^-E_j^+$ coincides with
    the action of ($4$ times) the $j$'th partial Laplacian.

\subsection{Reduced Ergodicity}
The geodesic flow on $TX$ is the flow induced from the geodesic flow on $T\calH$.
The unit tangent bundle,
\[SX=\set{(z,\xi)\in X| \sum E_j(z_j,\xi_j)=1},\]
is invariant under this flow. However, in contrast to the rank one
case, the flow on $SX$ is no longer ergodic (because the functions $E_j(z_j,\xi_j)$ are $d$ independent constants of
motion).
Instead of the unit tangent bundle, for any $\bfE=(E_1,\ldots, E_d)\in [0,\infty)^d$ we consider a generalized energy shell
\[\Sigma(\bfE)=\set{(z,\xi)\in TX|\forall j,\;E_j(z_j,\xi_j)=E_j}.\]
We can naturally identify these energy shells
$\Sigma(\bfE)\cong \Gamma\bs\calG/\prod_{E_j=0}K_j$,
and think of functions on $\Sigma(\bfE)$ as $\prod_{E_j=0}K_j$ invariant functions on $Y$.
By Moore's ergodicity theorem \cite[Theorem 2.2.6]{Zimmer84},
the geodesic flow restricted to $\Sigma(\bfE)$ is ergodic (with
respect to the volume measure on $\Sigma(\bfE)$).
In fact, each one of the flows $A_j(t)=e^{H_jt}$ is already ergodic on $Y$.
In particular, for any $a\in C^\infty(Y)$ let
\[\langle a\rangle_\bfE^T=\frac{1}{T}\int_0^T a\circ A_\bfE(t)dt,\]
denote the time average of $a$ with respect to the geodesic flow $A_{\bfE}(t)=\prod_j A_j(\sqrt{E_j}t)$ on $Y$.
Then as $T\to\infty$, this time averages converges, in
$L^2(Y)$, to the phase space average
$$\langle a\rangle_\bfE^T\stackrel{L^2(Y)}{\longrightarrow}\int_{Y} a(g)dg.$$
Note that this property is stronger then ergodicity on $\Sigma(\bfE)$, as we do not assume that $a$ is $K_j$ invariant when $E_j=0$.

\subsection{Notations}
    We make use of the following notation: Given a positive function $g(x)$, we denote $f(x)=O(g(x))$
    or $f(x)\ll g(x)$ if there exists a constant $c>0$ such that
    $\forall x,\;|f(x)|\leq cg(x)$. If the constant depends on some
    parameters, say $\epsilon,\delta$, then they will appear as a subscript
    $f(x)=O_{\epsilon,\delta}(g(x))$ or $f(x)\ll_{\epsilon,\delta}
    g(x)$. We also use the notation $f(x)=o(g(x))$ meaning that
    $\lim_{x\to\infty}\frac{|f(x)|}{g(x)}=0$.
\section{Outline of proof}
    We now describe the outline for the proof of the main result (Theorem \ref{tVAR2}).
    For each eigenfunction
    $\phi_k$, we identify the corresponding Wigner distribution with a distribution $S_{\phi_k}$ on $Y=\Gamma\bs\calG$.
    On $\calK$-invariant functions $a\in C^\infty(\Gamma\bs
    \calG/\calK)=C^\infty(X)$ these distributions coincide with the
    quantum measures,
    \[S_{\phi_k}(a)=\int_X a(z)|\phi_k(z)|^2dz.\]
    We will show that the distributions $S_{\phi_k}$ satisfy the following
    properties:\\
    \textbf{1.(Invariance)}
     For $j=1,\ldots, d$ consider the ergodic flows $A_j(t)$ on $Y$
     (given by the right action of
     $\begin{pmatrix} e^{t} & 0\\ 0 & e^{-t}\end{pmatrix}$
     on the $j$'th factor). Then, as
     $r_{k,j}\to\infty$ the distributions $S_{\phi_k}$ becomes invariant
     in the sense that for $a\in C^\infty(Y)$,
    \[|S_k(a\circ A_j(t))-S_k(a)|\ll_{a,t} \frac{1}{r_{k,j}}.\]
    \textbf{2.(Positivity)}
    When the normalized eigenvalue $\bfE_k$ is close to an energy level $\bfE$, the
    distribution $S_{\phi_k}$ is close to positive measure on
    $\Sigma(\bfE)$. More precisely, for every eigenfunction $\phi_k$, and any $\calJ\subseteq\{1,\ldots,d\}$
    there is normalized function $\Phi_{k,\calJ}\in L^2(Y)$,
    such that for functions $a\in C^\infty(\Gamma\bs\calG/\prod_{j\not\in \calJ}K_{j})$
    \[S_{\phi_k}(a)=\pr{a\Phi_{k,\calJ}}{\Phi_{k,\calJ}}+O_{a}(R_{k,\calJ}^{-1/4}),\]
    with $R_{k,\calJ}=\min_{j\in\calJ}r_{k,j}$.\\
    \textbf{3.(Local Weyl's law)} The last property, stated in Theorem \ref{tWL}, deals with the average of the distributions $S_{\phi_k}$
    over the window $\calI(\bfL)$. That is, for $a\in C^\infty(Y)$,
    \[\lim_{\norm{\bfL}\to\infty}\frac{1}{N(\bfL)}\sum_{k\in\calI(\bfL)}S_k(a)=\frac{1}{\vol(Y)}\int_Y a(g)dg .\]

    Assuming these properties are satisfied the proof goes as follows:
    \begin{proof}[\textbf{Proof of Theorem \ref{tVAR2}}]
    Fix $\bfE\in[0,1)^d$ (with $\sum_j E_j=1$) and assume that $\bfL\to\infty$ with
    $\frac{(L_1^2,\ldots,L_d^2)}{\norm{\bfL}^2}\to\bfE$. If we set $\calJ=\set{j|E_j\neq
    0}$, then for any $j\in\calJ$ we have $L_j\gg\norm{\bfL}$.
    Fix a smooth function $a\in C^\infty(\Sigma(\bfE))$ and assume $\int_{\Sigma(\bfE)}a\dv=0$.
    Identify $\Sigma(\bfE)=\Gamma\bs \calG/\prod_{i\not\in \calJ}K_i$, and
    think of $a$ as a function in $C^\infty(Y)$ invariant under $\prod_{i\not\in\calJ}K_i$.
    Now consider the ``time average"
    with respect to the geodesic flow $A_\bfE(t)=\prod_{j\in\calJ}A(\sqrt{E}_jt)$,
    \[\langle a\rangle_\bfE^T=\frac{1}{T}\int_0^T a\circ A_\bfE(t)dt.\]
    For any $k\in\calI(\bfL)$ and $j\in\calJ$, the distribution $S_{\phi_k}$ is $A_j(t)$ invariant (up to $O(\frac{1}{\norm{\bfL}})$).
    Hence, for $k\in\calI(\bfL)$
    \[S_{\phi_k}(\langle
    a\rangle_\bfE^T)=S_{\phi_k}(a)+O_{a,T}(\frac{1}{\norm{\bfL}}).\]
    Consequently,
    \[\frac{1}{N(\bfL)}\sum_{k\in\calI(\bfL)}|S_{\phi_k}(a)|^2
    = \frac{1}{N(\bfL)}\sum_{k\in\calI(\bfL)}|S_{\phi_k}(\langle
    a\rangle_j^T)|^2+O_{a,T}(\frac{1}{\norm{\bfL}}).\]

    Since the action of $A_{j}(t),\;j\in\calJ$ commutes with the action of $K_{i}$ for any $i\not\in\calJ$ the time average $\langle
    a\rangle_\bfE^T$ remains invariant under $\prod_{i\not\in\calJ}K_{i}$.
    We can use this invariance to show for any $k\in\calI(\bfL)$
    $$|S_{\phi_k}(\langle
    a\rangle_\bfE^T)|^2\leq S_{\phi_k}(|\langle
    a\rangle_\bfE^T|^2)+O_{a,T}(\norm{\bfL}^{-1/4}).$$ 
    Indeed, by the positivity property, $$|S_{\phi_k}(\langle a\rangle_\bfE^T)|^2=
    |\pr{\langle a\rangle_\bfE^T\Phi_{k,\calJ}}{\Phi_{k,\calJ}}|^2+O_{a,T}(\norm{\bfL}^{-1/4}),$$
    and by Cauchy-Schwartz
    \[|\pr{\langle a\rangle_\bfE^T\Phi_{k,j}}{\Phi_{k,j}}|^2\leq
   \pr{|\langle
    a\rangle_\bfE^T|^2\Phi_{k,j}}{\Phi_{k,j}}= S_{\phi_k}(|\langle
    a\rangle_\bfE^T|^2)+O_{a,T}(\norm{\bfL}^{-1/4}).\]

    Consequently, for the average also
    \[\frac{1}{N(\bfL)}\sum_{k\in\calI(\bfL)}|S_{\phi_k}(a)|^2\leq
     \frac{1}{N(\bfL)}\sum_{k\in\calI(\bfL)}S_{\phi_k}(|\langle a\rangle_\bfE^T|^2)+O_{a,T}(\norm{\bfL}^{-1/4}).\]
    Taking $\norm{\bfL}\to\infty$ the local Weyl's law then implies
    \[\limsup_{\norm{\bfL}\to\infty}\frac{1}{N(\bfL)}\sum_{k\in\calI(\bfL)}|S_{\phi_k}(a)|^2\leq
    \frac{1}{\vol(Y)}\int_Y|\langle a\rangle_\bfE^T|^2dg.\]
    Finally, the ergodicity of the flows imply that in the limit
    $T\to\infty$,
    \[\langle a\rangle_\bfE^T\stackrel{L^2(Y)}{\longrightarrow}\int_Ya(g)dg=\int_{\Sigma(\bfE)}a\dv=0,\]
    concluding the proof.
    \end{proof}

    It thus remains to verify that the distributions $S_{\phi_k}$ indeed satisfy the desired properties.
    In section \ref{sLift} we follow the arguments used by Lindenstrauss in
    \cite{Linden01} to show the invariance and positivity
    properties. Then in sections \ref{sQuant},\ref{sLW} we will follow Zelditch's formalism~\cite{Zelditch87}
    for the Wigner distribution via the Helgason-Fourier transform to give a local Weyl's law.

\section{Micro Local Lift}\label{sLift}
    In this section we recall the construction of \cite{Linden01,Zelditch87}, lifting a quantum measure $\mu_{\phi_k}$
    on $X$ to a distribution $S_{\phi_k}$ on $Y$. We then verify that
    these distributions satisfy the desired properties of invariance and
    positivity. This is essentially the content of \cite[Theorem 4.1]{Linden01} and \cite[Theorem 3.1]{Linden01},
    however, since the formulation we need is slightly different we will include the proofs.
    Throughout this section the eigenfunction $\phi=\phi_k$ is fixed,
    and for notational convenience the subscript $k$ will be
    omitted.
\subsection{Definition}
    For $r=(r_1,\ldots,r_d)$ let $\phi\in C^\infty(X)\equiv \calF_0(Y)$
    be a joint eigenfunction of all the $\lap_j$'s with
    eigenvalues $\lambda_j=(\frac{1}{4}+r_j^2)$ respectively.
    We construct from $\phi$ by induction a sequence of functions
    $\phi_{n}\in\calF_{n}(Y),\; n\in\bbZ^d$: Let
    $\phi_{0}(g)=\phi(g(i))$, and define
    \begin{equation}\label{ePHInm}
    \phi_{n\pm e_j}=\frac{1}{2ir_j+1\pm
    2n_j}E_j^\pm \phi_{n}
    \end{equation}

    \begin{defn}
    Define distribution $S_{\phi}$ on $C^\infty(Y)$ by
    \[S_{\phi}(a)=\lim_{N\to\infty}\langle
    a \sum_{\norm{n}_\infty\leq N}\phi_{n},\phi_{0}\rangle_Y,\]
    where $\langle
    a,b\rangle_Y=\int_Y a(g)\bar b(g)dg$.
    \end{defn}
     Note that the rapid decay of $\norm{a_n}_\infty$ as $\norm{n}\to\infty$, imply that the sum absolutely
    converges and the distributions $S_{\phi}(a)$ are bounded by
    \[\sum_n \norm{a_n}_\infty\ll\max_j\norm{W_j^{2d}a}_\infty.\]
    \begin{rem}
        This definition coincides with the Wigner distribution
        constructed by Zelditch \cite{Zelditch87} via Helgason's Fourier calculus (see section \ref{sQuant2}).
        Also, see \cite{SilbermanVenkatesh04I} for a representation theoretic interpretation of this construction that is more
        natural when generalizing it to locally symmetric spaces.
    \end{rem}

\subsection{Invariance}
    Recall the family of one parameter (ergodic) flows, $A_j(t)=e^{H_jt},\;1\leq j\leq d$ on $Y$.
    We now show that when the $j$'th eigenvalue $\lambda_j=(\frac{1}{4}+r_j^2)$ becomes large the
    distribution $S_{\phi}$ becomes invariant under the corresponding flow $A_j(t)$
    (c.f., \cite[Theorem 4.1]{Linden01})
    \begin{prop}
    For fixed $a\in C^\infty(Y)$,
    \[|S_{\phi}(a)-S_{\phi}(a\circ A_j(t))|\ll_{a,t}\frac{1}{r_j}\]
    \end{prop}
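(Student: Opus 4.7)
The plan is to reduce the invariance statement to an infinitesimal bound $|S_\phi(H_j b)| \ll_b 1/r_j$, valid for every $b \in C^\infty(Y)$ with implicit constant depending on finitely many derivatives of $b$. Assuming this bound, the proposition follows from the identity $\frac{d}{ds}S_\phi(a \circ A_j(s)) = S_\phi((H_j a) \circ A_j(s))$ and the fundamental theorem of calculus, together with the observation that the relevant Sobolev seminorms of $a \circ A_j(s)$ are controlled uniformly for $s \in [0,t]$.

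To establish the infinitesimal bound, I would write $H_j = \tfrac{1}{2}(E_j^+ + E_j^-)$ and decompose $b$ via its $\calK$-Fourier series $b = \sum_m b_m$ with $b_m \in \calF_m(Y)$. For each mode, $E_j^\pm b_m \in \calF_{m\pm e_j}$ and $S_\phi(b_m) = \langle b_m \phi_{-m},\phi_0\rangle$, so the Leibniz rule
\[(E_j^\pm b_m)\phi_{-m\mp e_j} = E_j^\pm(b_m \phi_{-m\mp e_j}) - b_m\,E_j^\pm \phi_{-m\mp e_j},\]
combined with integration by parts on the compact quotient $Y$ (using $(E_j^\pm)^\ast = -E_j^\mp$, valid since $E_j^\pm$ is a complex combination of divergence-free vector fields) and the ladder identity $E_j^\pm\phi_n = (2ir_j + 1 \pm 2n_j)\phi_{n \pm e_j}$, reduces $S_\phi(E_j^\pm b_m)$ to an expression of the form
\[(2ir_j - 1 \mp 2m_j)\Bigl[\int_Y b_m\,|\phi_{-m\mp e_j}|^2\,dg - S_\phi(b_m)\Bigr].\]

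Summing over the two signs then expresses $S_\phi(H_j b_m)$ as a $(2ir_j + O(1))$-multiple of the ``discrete Laplacian along the ladder''
\[D_m := \tfrac{1}{2}\int_Y b_m\bigl(|\phi_{-m+e_j}|^2 + |\phi_{-m-e_j}|^2\bigr)\,dg - \int_Y b_m\,|\phi_{-m}|^2\,dg,\]
and the crux is to show $D_m = O_{b_m}(1/r_j^2)$. For this I would substitute $|\phi_{n\pm e_j}|^2 = |2ir_j+1\pm 2n_j|^{-2}|E_j^\pm \phi_n|^2$, integrate by parts twice to transfer $E_j^\pm$ onto $b_m$, and use the Casimir-type identity $E_j^\mp E_j^\pm \phi_n = -(4r_j^2 + (1\pm 2n_j)^2)\phi_n$ (two applications of the ladder relation) together with $\lambda_j = r_j^2 + 1/4$. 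The leading $4r_j^2$ contribution cancels, leaving two kinds of residual terms: one involving $(H_j^2 - (2X_j - W_j)^2)b_m$ paired with $|\phi_{-m}|^2$, carrying the explicit coefficient $1/(4r_j^2+1)$ and therefore of size $O(\|b_m\|_{C^2}/r_j^2)$; and ``cross'' integrals of the form $\int \phi_{-m}(E_j^\mp b_m)\bar\phi_{-m\pm e_j}$, which vanish by $\calK$-weight incompatibility, since the $W_j$-weights of the factors fail to sum to zero. This yields $D_m = O(\|b_m\|_{C^2}/r_j^2)$ and hence $S_\phi(H_j b_m) = O(\|b_m\|_{C^2}/r_j)$; summing over $m$ using the rapid decay $\|b_m\|_{C^2} \ll_{b,N} \|m\|^{-N}$ proves $|S_\phi(H_j b)| \ll_b 1/r_j$.

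The main obstacle is identifying and verifying all the $\calK$-weight cancellations among the ``cross'' integrals for general $m \neq 0$, since in the higher-rank setting several independent $W_i$-weights must balance simultaneously. This is a combinatorial bookkeeping task: one must track the $\calF_n$-membership of each factor produced by the double integration by parts, and confirm that the only surviving non-diagonal combinations are those of size $O(1/r_j^2)$ or smaller. Once this is in place, the overall bound follows mechanically from the Casimir cancellation described above.
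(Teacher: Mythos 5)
Your overall architecture --- differentiate along the flow, reduce to the infinitesimal bound $|S_\phi(H_jb)|\ll_b 1/r_j$, and integrate --- is exactly the paper's route; the paper then obtains the infinitesimal bound in one line from the exact identity $S_\phi\bigl((4ir_jH_j+H_j^2+4X_j^2)a\bigr)=0$, quoted from \cite{Linden01}. Where you diverge is in deriving the infinitesimal bound from scratch, and there the step you yourself identify as the crux fails: the cross terms produced by the double integration by parts do \emph{not} vanish by $\calK$-weight incompatibility. Take $b=b_0$ $\calK$-invariant for concreteness. Transferring one $E_j^\pm$ in $\langle b_0E_j^\pm\phi_0,E_j^\pm\phi_0\rangle_Y$ produces the Casimir term $(4r_j^2+1)\int_Y b_0|\phi_0|^2$ \emph{plus} $-(2ir_j+1)\langle (E_j^\mp b_0)\phi_{\pm e_j},\phi_0\rangle_Y$, whose integrand has total $W_j$-weight $\mp 1\pm 1=0$ and therefore survives (your written cross term carries $\bar\phi_{-m\pm e_j}$ where the computation actually yields $\bar\phi_{-m\mp e_j}$; the sign flip is what makes your version look weight-incompatible). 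However one redistributes the two derivatives, a weight-zero first-order cross term of this type remains; it is generically of size $O(1)$ and enters $D_m$ with coefficient $O(r_j)/(4r_j^2+1)=O(1/r_j)$, so $D_m=O(1/r_j)$ only, and after multiplying by the prefactor $2ir_j-1$ you recover nothing better than the trivial $S_\phi(H_jb)=O(1)$. A further problem: for $m\neq 0$ the ``diagonal'' integrals $\int_Y b_m|\phi_{-m\pm e_j}|^2\,dg$ in your $D_m$ vanish identically by weight, so the quantities you propose to estimate are not the ones that arise; what does arise are off-diagonal matrix elements $\langle b_m\phi_n,\phi_{n+m}\rangle_Y$, which is precisely what the paper's Lemma \ref{l1} is designed to control.

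The computation can be salvaged, but by a different mechanism than the one you describe. The surviving cross terms, after shifting indices via Lemma \ref{l1} (each shift costs $O(1/r_j)$), reproduce $S_\phi(E_j^\mp b_0)$ and hence sum to $2S_\phi(H_jb_0)$, giving $D_0=\tfrac{-1}{2ir_j+1}S_\phi(H_jb_0)+O(1/r_j^2)$; meanwhile the single-Leibniz step gives the exact relation $S_\phi(H_jb_0)=(2ir_j-1)D_0$. These two coefficients, $\tfrac{-1}{2ir_j+1}$ and $\tfrac{1}{2ir_j-1}$, differ by $\tfrac{4ir_j}{4r_j^2+1}$, and equating the two expressions and solving the resulting linear equation for $S_\phi(H_jb_0)$ yields $S_\phi(H_jb_0)=O(1/r_j)$. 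So the gain comes from a self-consistency between two near-tautological identities, not from the cross terms vanishing. You should either carry out this bootstrap (including its general-$\calK$-type version, which needs Lemma \ref{l1} throughout), or simply invoke the second-order differential equation from \cite{Linden01} as the paper does.
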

    \begin{proof}
    The flow $A_j$ is generated by $H_j\in\Sl_2(\bbR)$ in the sense that
    $\frac{d}{dt}(a\circ A_j(t))=H_j(a\circ A_j(t))$.
    Let $F(t)=S_{\phi}(a\circ A_j(t))$, then its derivative is given by
    $F'(t)=S_{\phi}(H_j(a\circ A_j(t))$.

    Now use the differential equation \cite[Proposition 4.2]{Linden01}
    \[S_{\phi}((4ir_jH_j+H_j^2+4X_j^2)a)=0,\]
      to deduce
    \[F'(t)=-\frac{1}{4ir_j}S_{\phi}((H_j^2+4X_j^2)(a\circ A_j(t))).\]
    Let $c_{a,t}=t\sup_{0\leq s\leq t} |S_{\phi}((H_j^2+4X_j^2)(a\circ
    A_j(s)))|$, then
    \[|S_{\phi}(a\circ A_j(t))-S_{\phi}(f)|=|F(t)-F(0)|=|\int_0^tF'(s)ds|\leq
    \frac{c_{a,t}}{4r_j}.\]
    \end{proof}
    \subsection{Positivity}
    Fix a subset $\calJ\subseteq\{1,\ldots,d\}$.
    We show that if for all $j\in\calJ$, $r_j$ becomes
    large the distribution $S_\phi$ is close to a positive measure on $\Gamma\bs
    \calG/\prod_{i\not\in\calJ}K_{i}$ (c.f., \cite[Theorem 3.1]{Linden01}).
    \begin{prop}\label{pPOSITIV}
    There are normalized functions $\Phi_\calJ\in L^2(Y)$,
    such that for any fixed $a\in C^\infty(Y)$
    that is invariant under $\prod_{j\not\in \calJ}K_j$,
    \[S_\phi(a)=\langle a\Phi_{\calJ},\Phi_\calJ\rangle_Y+O_{a}\left(\max_{j\in\calJ}\{r_j^{-1/4}\}\right).\]
    \end{prop}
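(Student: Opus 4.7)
The plan is to realize $S_\phi(a)$ as a diagonal matrix coefficient $\langle a\Phi_\calJ,\Phi_\calJ\rangle$ of a coherent state $\Phi_\calJ$ built out of the sequence $\{\phi_n\}$, up to a small error. Three ingredients drive the argument: (i) approximate (in fact exact) orthonormality of the $\phi_n$; (ii) $\bfK$-type bookkeeping that separates the comparison into one Fourier component of $a$ at a time; (iii) a quantitative shift-invariance estimate for the matrix elements $\int a_m\phi_n\bar\phi_{n+m}$, obtained by telescoping the raising/lowering recursion \eqref{ePHInm}.

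\emph{Step 1 (Definition and normalization).} Write $R:=\min_{j\in\calJ}r_j$ and fix a truncation $N$ (to be chosen as $N\sim R^{1/4}$). Set
\[B=\{n\in\Z^d:n_j=0\text{ for }j\notin\calJ,\ |n_j|\le N\text{ for }j\in\calJ\},\quad M=|B|=(2N+1)^{|\calJ|},\]
and define $\Phi_\calJ:=M^{-1/2}\sum_{n\in B}\phi_n$. Because $\phi_n\in\calF_n(Y)$ and distinct $\calF_n$ are orthogonal, $\|\Phi_\calJ\|_2^2=M^{-1}\sum_{n\in B}\|\phi_n\|_2^2$. A direct computation using $(E_j^+)^*=-E_j^-$ together with $[E_j^-,E_j^+]=-4iW_j$ and the adjoint recursion $E_j^-\phi_{n+e_j}=(2ir_j-1-2n_j)\phi_n$ shows that the normalization factor $2ir_j+1+2n_j$ in \eqref{ePHInm} is precisely the one that makes $\|\phi_{n+e_j}\|_2=\|\phi_n\|_2$. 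Hence $\|\phi_n\|_2=1$ for all $n$ and $\Phi_\calJ$ is already of unit norm.

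\emph{Step 2 ($\bfK$-type bookkeeping).} Expand $a=\sum_m a_m$ with $a_m\in\calF_m(Y)$; invariance of $a$ under $\prod_{j\notin\calJ}K_j$ forces $a_m\equiv0$ unless $m\in\Z^\calJ$. Since any $f\in\calF_n$ with $n\neq0$ has $\int_Y f\,dg=0$, the definitions give
\[S_\phi(a)=\sum_m\int_Y a_m\,\phi_{-m}\,\bar\phi_0\,dg,\qquad
\langle a\Phi_\calJ,\Phi_\calJ\rangle=\frac{1}{M}\sum_m\sum_{\substack{n,\,n+m\in B}}\int_Y a_m\,\phi_n\bar\phi_{n+m}\,dg.\]

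\emph{Step 3 (Shift invariance).} The technical heart is the estimate
\[\int_Y a_m\,\phi_n\bar\phi_{n+m}\,dg=\int_Y a_m\,\phi_{-m}\bar\phi_0\,dg+O_a\!\bigl((|n|+|m|)R^{-1/2}\bigr),\]
uniformly in $n\in B$, proved by telescoping one coordinate at a time. For each $j\in\calJ$ substitute $\phi_{p+e_j}=(2ir_j+1+2p_j)^{-1}E_j^+\phi_p$, integrate by parts using $(E_j^+)^*=-E_j^-$, and use the adjoint recursion. The leading $|2ir_j|^2=4r_j^2$ factors in numerator and denominator cancel, leaving a remainder of order $r_j^{-1/2}$ per step from the subleading real parts and from the commutator $[E_j^-,E_j^+]=-4iW_j$ acting on the smooth multiplier $a_m$.

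\emph{Step 4 (Summation and optimization).} The boundary count gives $\#\{n\in B:n+m\notin B\}=O(M|m|/N)$, and the smoothness of $a$ (applied to $a_m=\int_\bfK a(\cdot k)e^{-2i(m\cdot\theta)}dk$) yields the rapid decay $\|a_m\|_\infty\ll_{a,s}|m|^{-s}$ for every $s$. Summing Step~3 over $m$ and $n$ then gives
\[\langle a\Phi_\calJ,\Phi_\calJ\rangle-S_\phi(a)=O_a\bigl(N^{-1}+NR^{-1/2}\bigr).\]
Choosing $N\sim R^{1/4}$ balances the two terms and produces the claimed bound $O(R^{-1/4})=O(\max_{j\in\calJ}r_j^{-1/4})$. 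The main obstacle is Step~3: extracting the correct per-step rate from the recursion, which requires carefully tracking the cancellations left after the dominant $4r_j^2$ factors divide out. This is the product-space analogue of the key Egorov-type step in the proof of \cite[Theorem 3.1]{Linden01}, and it adapts directly because each $E_j^\pm$ acts only on the $j$-th factor.
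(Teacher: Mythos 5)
Your argument is correct and is essentially the paper's own proof: the same coherent state $\Phi_{N,\calJ}=(2N+1)^{-|\calJ|/2}\sum_{n}\phi_n$ summed over a box in $\Z^{\calJ}$, the same telescoping of the raising/lowering recursion via $(E_j^+)^*=-E_j^-$ (the paper's Lemma \ref{l1}, whose per-step error $O\bigl((\max(|n_j|,|m_j|)\norm{a}_\infty+\norm{E_j^+a}_\infty)/r_j\bigr)$ is in fact sharper than your asserted $O(r_j^{-1/2})$ and implies it throughout the range $|n|,|m|\ll R^{1/2}$ that you use), and the same boundary-count comparison of $\langle a\Phi_\calJ,\Phi_\calJ\rangle$ with $S_\phi(a)$. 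The only cosmetic differences are in the bookkeeping: the paper first truncates the $\calK$-Fourier series of $a$ at $|m|\le R^{\epsilon}$ and takes $N\sim R^{1/3}$, whereas you absorb the Fourier tail directly through the rapid decay of $\norm{a_m}_\infty$ and take $N\sim R^{1/4}$; both budgets close to the stated $O_a(R^{-1/4})$.
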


    For the proof we will use the following lemma
    \begin{lem}\label{l1}
    Let $a\in C^\infty(Y)$, then
    \[\langle a\phi_n,\phi_m\rangle_Y=\langle
    a\phi_{n-e_j},\phi_{m-e_j}\rangle_Y+O(\frac{N\norm{a}_\infty+\norm{E_j^+a}_\infty}{r_j}),\]
    where $N=\max(|n_j|,|m_j|)$
    \end{lem}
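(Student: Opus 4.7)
The plan is to start from the ``$+$'' case of the recursion \eqref{ePHInm} shifted down by $e_j$, namely $\phi_n=(2ir_j+2n_j-1)^{-1}E_j^+\phi_{n-e_j}$, and rewrite
\[\langle a\phi_n,\phi_m\rangle_Y=\frac{1}{2ir_j+2n_j-1}\langle aE_j^+\phi_{n-e_j},\phi_m\rangle_Y.\]
I would then apply the Leibniz rule for the first-order differential operator $E_j^+$ to split $aE_j^+\phi_{n-e_j}=E_j^+(a\phi_{n-e_j})-(E_j^+a)\phi_{n-e_j}$. The second piece immediately contributes the term $-(2ir_j+2n_j-1)^{-1}\langle (E_j^+a)\phi_{n-e_j},\phi_m\rangle_Y$, whose absolute value is $O(\|E_j^+a\|_\infty/r_j)$ once the $L^2$-norms of the $\phi_n$ are under control.

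For the first piece I would integrate by parts on $Y$. Since $H_j,X_j,W_j$ are real vector fields they are skew-adjoint with respect to the Haar measure, and a direct check on $E_j^+=H_j+i(2X_j-W_j)$ gives $(E_j^+)^*=-E_j^-$; hence $\langle E_j^+(a\phi_{n-e_j}),\phi_m\rangle_Y=-\langle a\phi_{n-e_j},E_j^-\phi_m\rangle_Y$. The ``$-$'' case of \eqref{ePHInm} at $n=m$ gives $E_j^-\phi_m=(2ir_j+1-2m_j)\phi_{m-e_j}$, and conjugating this scalar (to move it out of the second slot) yields $(2ir_j-1+2m_j)$. Assembling everything,
\[\langle a\phi_n,\phi_m\rangle_Y=\frac{2ir_j+2m_j-1}{2ir_j+2n_j-1}\langle a\phi_{n-e_j},\phi_{m-e_j}\rangle_Y-\frac{1}{2ir_j+2n_j-1}\langle (E_j^+a)\phi_{n-e_j},\phi_m\rangle_Y.\]
Writing the prefactor as $1+\frac{2(m_j-n_j)}{2ir_j+2n_j-1}$ separates the main term $\langle a\phi_{n-e_j},\phi_{m-e_j}\rangle_Y$ from an error of size $O(N\|a\|_\infty/r_j)$, and combined with the Leibniz error this is exactly the bound claimed in the lemma.

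The main (and only non-mechanical) obstacle is that Cauchy--Schwarz on both error terms requires $\|\phi_n\|_{L^2(Y)}$ to be uniformly bounded, so that $|\langle b\phi_n,\phi_m\rangle_Y|\ll\|b\|_\infty$ for $b\in\{a,E_j^+a\}$. This should be dispatched by induction on $n$: using $\|E_j^+\phi_n\|_2^2=-\langle E_j^-E_j^+\phi_n,\phi_n\rangle_Y$ and the Casimir-type identity expressing $E_j^-E_j^+$ in terms of the partial Laplacian $\lap_j$ and $W_j$ (whose eigenvalues on $\phi_n$ are $-(\tfrac{1}{4}+r_j^2)$ and $2in_j$), one gets $\|E_j^+\phi_n\|_2^2=|2ir_j+1+2n_j|^2\|\phi_n\|_2^2$ up to lower-order corrections, so the normalizing factors in \eqref{ePHInm} make the iterates approximately unit vectors. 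With this in hand the two error estimates above combine to give exactly $O((N\|a\|_\infty+\|E_j^+a\|_\infty)/r_j)$.
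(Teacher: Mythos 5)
Your proof is correct and follows essentially the same route as the paper's: use the recursion to write $\phi_n=(2ir_j+2n_j-1)^{-1}E_j^+\phi_{n-e_j}$, apply Leibniz, integrate by parts to move $E_j^+$ onto $\phi_m$ (your sign $(E_j^+)^*=-E_j^-$ is the correct one, and it cancels against the conjugated scalar to give the same prefactor $1+O(N/r_j)$), and absorb the remainder via the $\|E_j^+a\|_\infty$ bound. Your added remark that the normalization in \eqref{ePHInm} makes the $\phi_n$ unit vectors (in fact exactly, since $\|E_j^+\phi_n\|_2^2=|2ir_j+1+2n_j|^2\|\phi_n\|_2^2$) is the implicit ingredient the paper uses without comment when bounding the inner products by sup-norms.
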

    \begin{proof}
    By definition
    \[\langle
    a\phi_n,\phi_m\rangle_Y=\frac{\langle
    aE_j^+\phi_{n-e_j},\phi_{m}\rangle_Y}{(2ir_j+2n_j-1)}.\]
    Replace
    $aE^+\phi_{n-e_j}=E_j^+(a\phi_{n-e_j})-(E_j^+a)\phi_{n-e_j}$, and
    use the bound
    $$|\langle (E^+a)\phi_{n-e_j},\phi_{m}\rangle_Y|\leq
    \norm{E_j^+a}_\infty$$ to get that
    \[\langle
    a\phi_{n},\phi_{m}\rangle_Y=\frac{\langle
    E_j^+(a\phi_{n-e_j}),\phi_{m}\rangle_Y}{(2ir_j+2n_j-1)}+O(\frac{\norm{E_j^+a}_\infty}{r_j}).\]
    Finally notice,
    \begin{eqnarray*}
    \frac{\langle
    E_j^+(a\phi_{n-e_j}),\phi_{m}\rangle_Y}{(2ir_j+2n_j-1)}&=&\frac{\langle
    a\phi_{n-e_j},E_j^-\phi_{m}\rangle_Y}{(2ir_j+2n_j-1)}\\
    &=&\langle
    a\phi_{n-e_j},\phi_{m-e_j}\rangle_Y(1+O(\frac{N}{r_j}))\\
    &=& \langle
    a\phi_{n-e_j},\phi_{m-e_j}\rangle_Y+O(\frac{N\norm{a}_\infty}{r_j}).
    \end{eqnarray*}
    \end{proof}

    For any subset $\calJ$, let
    $\bbZ^\calJ=\set{n\in\bbZ^d|\forall j\not\in\calJ,\;n_j=0}$ and for
    any positive integer $N$ let $Z^\calJ_N=\set{n\in \bbZ^\calJ|\norm{n}_\infty\leq N}$.
    Define the function
    $$\Phi_{N,\calJ}=(\frac{1}{2N+1})^{\frac{|\calJ|}{2}}\sum_{n\in Z^\calJ_N}\phi_{n}.$$
    \begin{prop}
    Fix a subset $\calJ\subseteq\{1,\ldots,d\}$ with $|\calJ|=J>0$.
    Let $a\in C^\infty(Y)$ be invariant under $\prod_{j\not\in \calJ}K_j$, and let $R=\min_{j\in\calJ} r_j$. Then
    \[S_{\phi}(a)=\langle
    a\Phi_{N,\calJ},\Phi_{N,\calJ}\rangle_Y+O_{a}(\frac{N^2}{R})+O_{a}(\frac{R^{J\epsilon}}{N^J})+O_{a,\epsilon}(\frac{1}{R}),\]
    \end{prop}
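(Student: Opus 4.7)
The plan is to expand both $S_\phi(a)$ and the matrix element $\langle a\Phi_{N,\calJ},\Phi_{N,\calJ}\rangle_Y$ via the $\calK$-Fourier decomposition of $a$, and then to use Lemma~\ref{l1} to reduce every relevant bilinear form to the common reference shape $\langle a_\ell\phi_0,\phi_\ell\rangle_Y$. First decompose $a=\sum_{\ell\in\bbZ^\calJ}a_\ell$ with $a_\ell\in\calF_\ell(Y)$; the invariance of $a$ under $\prod_{j\notin\calJ}K_j$ restricts $\ell$ to $\bbZ^\calJ$, and $\norm{a_\ell}_\infty$ (together with $\norm{E_j^\pm a_\ell}_\infty$) enjoys rapid decay in $|\ell|$. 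By $\calK$-orthogonality, $\langle a_\ell\phi_n,\phi_m\rangle_Y=0$ unless $m=n+\ell$, so
\[
\langle a\Phi_{N,\calJ},\Phi_{N,\calJ}\rangle_Y=\frac{1}{(2N+1)^J}\sum_{\ell\in\bbZ^\calJ}\sum_{\substack{n\in Z^\calJ_N\\ n+\ell\in Z^\calJ_N}}\langle a_\ell\phi_n,\phi_{n+\ell}\rangle_Y,
\]
while $S_\phi(a)=\sum_{\ell\in\bbZ^\calJ}\langle a_\ell\phi_{-\ell},\phi_0\rangle_Y$.

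Next, I iterate Lemma~\ref{l1} (together with its $+e_j$ analogue) to shift every pair $(n,n+\ell)$ down to $(0,\ell)$, and similarly shift $(-\ell,0)\mapsto(0,\ell)$. A single coordinate shift costs $O\!\bigl(\tfrac{N\norm{a_\ell}_\infty+\norm{E_j^\pm a_\ell}_\infty}{R}\bigr)$, and at most $|n|_1\le JN$ such shifts are needed per term on the $\Phi_{N,\calJ}$ side (respectively $|\ell|_1$ shifts for $S_\phi$). Averaging over $n$, recognising the counting factor $c_{N,\ell}=\prod_{j\in\calJ}\max(0,1-|\ell_j|/(2N+1))$, and using rapid decay in $\ell$ to make the $\ell$-sum converge absolutely, gives
\[
\langle a\Phi_{N,\calJ},\Phi_{N,\calJ}\rangle_Y=\sum_{\ell\in\bbZ^\calJ}c_{N,\ell}\langle a_\ell\phi_0,\phi_\ell\rangle_Y+O_a(N^2/R).
\]
For the $S_\phi$ side I truncate at $|\ell|_\infty\le R^\epsilon$: the Fourier tail contributes $O_{a,\epsilon}(R^{-1})$ by taking $s\gg 1/\epsilon$ in the decay estimate, and the bounded number of shifts on the head gives another $O_{a,\epsilon}(R^{-1})$, so $S_\phi(a)=\sum_\ell\langle a_\ell\phi_0,\phi_\ell\rangle_Y+O_{a,\epsilon}(1/R)$.

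Subtracting the two displays cancels the common main term and leaves the weight-discrepancy $\sum_{\ell}(1-c_{N,\ell})\langle a_\ell\phi_0,\phi_\ell\rangle_Y$. Splitting again at $|\ell|_\infty\le R^\epsilon$ (the tail is again absorbed into $O_{a,\epsilon}(1/R)$) and bounding $(1-c_{N,\ell})$ via the explicit product expansion of $c_{N,\ell}$ while exploiting the rapid decay of $\norm{a_\ell}_\infty$ (used to absorb the factors $|\ell_j|$ that appear in the expansion) produces the $O_a(R^{J\epsilon}/N^J)$ term, completing the claimed identity.

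The main obstacle is the delicate three-way balance of error types: $O(N^2/R)$ grows with $N$, $O(R^{J\epsilon}/N^J)$ shrinks with $N$, and $O_{a,\epsilon}(1/R)$ is fixed. Careful bookkeeping of which norm of $a$ (sup, $C^1$, or a high-order Sobolev norm needed for the Fourier tail) enters each error is essential, as is ensuring that the rapid decay of the $\calK$-Fourier coefficients uniformly tames every $\ell$-sum in the argument.
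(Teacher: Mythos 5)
Your proposal follows essentially the same route as the paper: $\calK$-Fourier decomposition of $a$, orthogonality of $\calK$-types to isolate the contributing pairs, iterated application of Lemma \ref{l1} to move every bilinear form to a common reference shape, and a lattice-point counting discrepancy for the remaining main terms. The only structural difference is cosmetic: you normalize to $\langle a_\ell\phi_0,\phi_\ell\rangle_Y$ (so you must also shift the $S_\phi$ side, which costs an extra $O_a(1/R)$ absorbed by rapid decay), whereas the paper shifts everything to $\langle a_\epsilon\phi_m,\phi_0\rangle_Y$, which is already the form appearing in $S_\phi$, and truncates $a$ at $\calK$-type $R^\epsilon$ once at the outset. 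The expansions, the $O_a(N^2/R)$ bookkeeping, and the $O_{a,\epsilon}(1/R)$ tail estimates are all sound.

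The one step that does not hold up is the last one: you assert that the weight discrepancy $\sum_\ell(1-c_{N,\ell})\langle a_\ell\phi_0,\phi_\ell\rangle_Y$, with $c_{N,\ell}=\prod_{j\in\calJ}\bigl(1-\tfrac{|\ell_j|}{2N+1}\bigr)_+$, ``produces the $O_a(R^{J\epsilon}/N^J)$ term.'' It does not: the product expansion of $1-c_{N,\ell}$ is dominated by its first-order part $\sum_{j\in\calJ}\tfrac{|\ell_j|}{2N+1}$, and whenever $a$ has a nonzero component $a_\ell$ with a single nonzero coordinate $\ell_j$, the discrepancy for that $\ell$ is exactly $\tfrac{|\ell_j|}{2N+1}\langle a_\ell\phi_0,\phi_\ell\rangle_Y\asymp 1/N$, with no $J$-th power saving. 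So for $J\geq 2$ your argument yields only $O_a(R^{\epsilon}/N)$ (or $O_a(1/N)$ after exploiting the rapid decay of $\norm{a_\ell}_\infty$), not $O_a(R^{J\epsilon}/N^J)$. You should be aware that the paper's own proof makes the identical leap — it bounds the complement count $\sharp\{n\in Z^\calJ_N: n+m\notin Z^\calJ_N\}$ by a product of one-coordinate complement counts, which is a lower bound rather than an upper bound — so this is a flaw in the stated proposition rather than a defect peculiar to your write-up. The honest first-power bound $O_a(1/N)$ is still amply sufficient for the intended application: with $N\sim R^{1/3}$ it gives $O_a(R^{-1/3})$, which beats the $O_a(R^{-1/4})$ required in Proposition \ref{pPOSITIV}. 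I would therefore either prove the proposition with the error term $O_a(R^{\epsilon}/N)$ in place of $O_a(R^{J\epsilon}/N^J)$, or restrict the $J$-th power claim to the case where all low-$\calK$-type components of $a$ with fewer than $J$ nonzero indices vanish.
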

    (Taking $N\sim R^{1/3}$ and $\epsilon=\frac{1}{3}-\frac{1}{4J}$ gives the result of
    Proposition \ref{pPOSITIV}.)
    \begin{proof}
    Since $a$ is invariant under $\prod_{j\not\in\calJ}K_j$, its $\calK$-Fourier decomposition
    is of the form $a=\sum_{n\in \bbZ^\calJ}a_{n}$. Let
    $a_\epsilon=\sum_{n\in Z^{\calJ}_{R^\epsilon}}a_{n}$, then
    $S_j(a)=S_j(a_\epsilon)+O_{a,\epsilon}(\frac{1}{R})$ and $\langle
    a\Phi_{N,\calJ},\Phi_{N,\calJ}\rangle_Y=\langle
    a_\epsilon\Phi_{N,\calJ},\Phi_{N,\calJ}\rangle_Y+O_{a,\epsilon}(\frac{1}{R})$, so it
    is sufficient to prove this for $a_\epsilon$.

    By repeating Lemma \ref{l1} at most $N$ times for each $j\in\calJ$, we get
    \[\langle
    a_\epsilon\Phi_{N,\calJ},\Phi_{N,\calJ}\rangle_Y=(\frac{1}{2N+1})^{J}\sum_{n,m\in Z^\calJ_N}\langle
    a_\epsilon\phi_{(n-m)},\phi_0\rangle_Y+O_a(\frac{N^2}{R})=\]
    \[=(\frac{1}{2N+1})^{J}\sum_{n\in Z^\calJ_N}\sum_{n+m\in Z^\calJ_N}\langle
    a_\epsilon\phi_{m},\phi_0\rangle_Y+O_a(\frac{N^2}{R}).\]
    Now note that $\langle a_\epsilon\phi_{m},\phi_0\rangle_Y=0$ unless $m\in
    Z^\calJ_{R^\epsilon}$. Consequently,
    \begin{eqnarray*}
    \lefteqn{\langle a_\epsilon\Phi_{N,\calJ},\Phi_{N,\calJ}\rangle_Y=}\\
    &=&(\frac{1}{2N+1})^{J}\sum_{m\in
    Z^\calJ_{R^\epsilon}}\langle a_\epsilon\phi_{m},\phi_0\rangle_Y\sharp\set{n,n+m\in Z^\calJ_N}+O_a(\frac{N^2}{R})
    \end{eqnarray*}
    and
    \[S_\phi(a_\epsilon)=(\frac{1}{2N+1})^J\sum_{m\in Z^\calJ_{R^\epsilon}}\langle a_\epsilon\phi_m,\phi_0\rangle_Y\sharp\set{n\in Z^\calJ_N},\]
    We can thus bound the difference
     \begin{eqnarray*}
    \lefteqn{|S_\phi(a_\epsilon)-\langle a_\epsilon\Phi_{N,\calJ},\Phi_{N,\calJ}\rangle_Y|\ll_a}\\
        &\ll_a& \left(\frac{1}{2N+1}\sum_{|m_j|\leq
        R^\epsilon}\sharp\set{n_j\colon |n_j|\leq
        N<|n_j+m_j|}\right)^{J}+O_a(\frac{N^2}{R})\\
        &=& O_a(\frac{R^{J\epsilon}}{N^J})+O_a(\frac{N^2}{R}).
    \end{eqnarray*}
    \end{proof}

\section{Quantization procedure}\label{sQuant}
We now wish to relate the micro local lift defined above, to the
lift obtained via a quantization procedure. That is, for smooth
functions $a\in C^\infty(TX)$ we assign operators $\Op(a)$ on
$L^2(X)$, and for any Laplacian eigenfunction $\phi_k$, we assign
the distribution $a\mapsto \langle \Op(a)\phi_k,\phi_k\rangle$. We
show that this functional is supported on $\Sigma(\lambda_k)$ and
that after identifying $\Sigma(\lambda_k)\cong\Sigma(\bfE_k)\cong
\Gamma\bs\calG$ this functional coincides with the functional $S_{\phi_k}$
defined above.
\subsection{Spherical Transforms}\label{sSpherical}
    Before proceeding with the construction, we digress and go over some of
    Helgason's results on hyperbolic harmonic analysis on
    $\PSL(2,\bbR)$ that we will need \cite{Helgason81}.
    In particular we will make use of the generalized spherical
    functions and spherical transforms. For the rest of this section
    we will concentrate on a single factor $G_j=\PSL(2,\bbR)$, and
    for notational convenience the subscript $j$ will be omitted.

    For $n\in\bbZ$ let $\chi_n$ be the character of $K$ given by $\chi_n(k_\theta)=e^{2in\theta}$ and complete it to a function on $G$ by
    $\chi_n(pk)\equiv \chi_n(k)$. The generalized spherical
    functions $\Phi_{r,n}\in C^\infty (\bbH)$ are given by
    \[\Phi_{r,n}(z)=\int_K \vphi_r(k^{-1}z)\chi_n(k)dk,\]
    where $\vphi_r$ is the Laplacian eigenfunction
    $\vphi_r(x+iy)=y^{ir+\frac{1}{2}}$. Note that both $\Phi_{r,n}$ and $\Phi_{-r,n}$ are Laplacian
    eigenfunctions (with the same eigenvalue $\lambda=r^2+\frac{1}{4}$) and they both satisfy
    $\Phi_{\pm r,n}(kz)=\chi_n(k)\Phi_{\pm r,n}(z)$. Therefore, $\Phi_{r,n}$ and $\Phi_{-r,n}$ differ by some
    constant, which can be computed explicitly as a quotient of $\Gamma$ functions \cite[Proposition 4.17]{Helgason81}
    \[\Phi_{r,n}(z)=\frac{P_n(2ir)}{P_n(-2ir)}\Phi_{-r,n}(z),\]
    with $P_n(x)=\frac{\Gamma(\frac{x+1}{2}+|n|)}{\Gamma(\frac{x+1}{2})}=(\frac{x+1}{2})(\frac{x+1}{2}+1)\cdots(\frac{x+1}{2}+|n|-1)$.
\begin{rem}
For the interested reader, we remark that the spherical function, $\Phi_{r,n}$, can be expressed as a product of $\Gamma$ functions and the $|n|$'th order Legandre function \cite[Proposition 2.9]{Zelditch88}
\[\Phi_{r,n}(ie^t)=\frac{\Gamma(ir+\frac{1}{2}-|n|)}{\Gamma(ir+\frac{1}{2})}P_{ir-\frac{1}{2}}^{|n|}(\cosh(t)).\]
See also \cite[equation 59]{Helgason81} for another expression involving the hypergeometric function.
\end{rem}

    We will not make any direct use of these formulas, all we will use is the following asymptotic estimate on the spherical functions.
    \begin{lem}\label{lBoundSPherical}
    As $r\to\infty$
    \[|\Phi_{r,n}(z)|\ll \frac{1}{\sqrt{r}},\]
    uniformly in any compact set not containing $i$.
    \end{lem}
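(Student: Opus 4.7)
The plan is to reduce to the positive imaginary axis using the $K$-equivariance of $\Phi_{r,n}$, and then apply the method of stationary phase to the defining integral.

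First I would exploit the identity $\Phi_{r,n}(kz) = \chi_n(k)\Phi_{r,n}(z)$, which (since $|\chi_n(k)|=1$) gives $|\Phi_{r,n}(z)| = |\Phi_{r,n}(w)|$ whenever $z$ and $w$ lie in the same $K$-orbit. Any $z$ in a compact set $C \subset \bbH \setminus \{i\}$ is $K$-equivalent to $w = iy$ with $y = e^{d_\bbH(z,i)}$, and $d_\bbH(z,i)$ ranges over a compact subinterval of $(0,\infty)$ as $z$ varies in $C$. Hence it suffices to prove $|\Phi_{r,n}(iy)| \ll r^{-1/2}$ uniformly for $y$ in a compact subset of $(1,\infty)$.

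Next, a direct M\"obius computation gives $\mathrm{Im}(k_{-\theta}(iy)) = y/(\cos^2\theta + y^2\sin^2\theta) =: h_y(\theta)$, so that
\[
\Phi_{r,n}(iy) = \int_0^\pi h_y(\theta)^{1/2}\, e^{ir f_y(\theta)}\, e^{2in\theta}\, \frac{d\theta}{\pi},
\qquad f_y(\theta) := \log h_y(\theta).
\]
The integrand is smooth and $\pi$-periodic, so the standard stationary phase formula applies on $[0,\pi]$ with no boundary contributions. Differentiating yields
\[
f'_y(\theta) = -\frac{(y^2-1)\sin(2\theta)}{\cos^2\theta + y^2\sin^2\theta},
\]
so the critical points in $[0,\pi)$ are exactly $\theta_c \in \{0,\pi/2\}$, with $f''_y(0) = -2(y^2-1)$ and $f''_y(\pi/2) = 2(y^2-1)/y^2$. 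Since $|y^2-1|$ is bounded below on the compact subset of $(1,\infty)$, both $|f''_y(\theta_c)|$ stay bounded away from $0$ uniformly.

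Finally, invoking the stationary phase lemma will give
\[
\Phi_{r,n}(iy) = \sum_{\theta_c \in \{0,\pi/2\}} \sqrt{\tfrac{2\pi}{r\,|f''_y(\theta_c)|}}\, h_y(\theta_c)^{1/2}\, e^{2in\theta_c}\, e^{i r f_y(\theta_c) + i\tfrac{\pi}{4}\mathrm{sgn}\, f''_y(\theta_c)} + O_C(r^{-3/2}),
\]
and taking absolute values yields the desired bound $|\Phi_{r,n}(iy)| \ll_{C} r^{-1/2}$. The only real obstacle is checking that the stationary phase estimate is genuinely uniform over $C$; this reduces to the uniform lower bound on $|f''_y(\theta_c)|$ and to uniform control of the (smooth, bounded) amplitude, both of which follow from $|y^2-1|$ being bounded away from $0$ and $\infty$ on the compact set. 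An alternative route would be to use the Legendre-function expression from the preceding remark together with classical Mehler-type asymptotics for $P^{|n|}_{ir-1/2}(\cosh t)$, but the direct stationary phase argument is cleaner and self-contained.
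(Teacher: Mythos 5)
Your proposal is correct and follows essentially the same route as the paper: reduce to $z=iy$ on the imaginary axis via $K$-equivariance, write $\Phi_{r,n}(iy)$ as an oscillatory integral in $\theta$ with phase $\log\bigl(y/(\cos^2\theta+y^2\sin^2\theta)\bigr)$, and apply stationary phase at the non-degenerate critical points $\theta=0,\pi/2$, with uniformity coming from $|y^2-1|$ being bounded away from $0$ and $\infty$ on the compact set. The only difference is cosmetic: you record the full leading-order asymptotic, whereas the paper only invokes the $O(r^{-1/2})$ bound.
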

    \begin{proof}
    Since $|\Phi_{r,n}(kz)|=|\Phi_{r,n}(z)|$, it is sufficient to
    show the bound for $\Phi_{r,n}(iy)$ for $y\neq 1$.

    We can write, $\vphi_r(k_\theta(iy))=e^{(2ir+1)\psi(y,\theta)}$,
    with
    \[\psi(y,\theta)=\frac{1}{2}\log(\frac{y}{\sin^2(\theta)(y^2-1)+1}).\]

    Now, for fixed $y\neq 1$ the function $\psi_y(\theta)=\psi(y,\theta)$ is a smooth function, its first derivative
    $\psi_y'(\theta)$ vanishes only when $\theta=\frac{\pi l}{2},\;l\in\bbZ$
    and the second derivative $\psi_y''(\frac{\pi l}{2})=1-y^{\pm 2}\neq 0$ do not vanish at these points.
    We can now write
    $\Phi_{r,n}(iy)=\frac{1}{2\pi}\int_0^{2\pi}F_y(\theta)e^{ir\psi_y(\theta)}d\theta$,
    with $F_y(\theta)=e^{in\theta+\psi_y(\theta)}$ a smooth function. For such an integral by the method of stationary
    phase
    \[\frac{1}{2\pi}\int_0^{2\pi}F_y(\theta)e^{ir\psi_y(\theta)}d\theta=O(\frac{1}{\sqrt{r}}).\]
    The implied constant, can be given explicitly in terms of
    $\norm{F_y}_\infty$, $\norm{F_y'}_\infty$, $\norm{\psi_y'}$, $\norm{\psi_y''}$ and
    $\psi_y''(\frac{\pi l}{2})$, and hence can be chosen uniformly for any
    bounded segment not containing $1$.
    \end{proof}

\begin{defn}
    For $n\in\bbZ$, let $C^\infty_n(\bbH)$ denote the space of smooth
    compactly supported functions on $\bbH$ satisfying
    $f(kz)=\chi_n(k)f(z)$. Define the $n$-spherical transform on
    $C^\infty_n(\bbH)$ by
    \[\calS_n(f)(r)=\int_\bbH f(z)\Phi_{r,-n}(z)dz.\]
    (For $n=0$, this is also known as the Selberg transform.)
\end{defn}

    We say that a holomorphic function $h(r)$ is of uniform exponential
    type $R$, if $\forall N\in\bbN,\; h(r)\ll_N
    \frac{e^{R|\Im{(r)}|}}{(1+|r|)^N}$. Let $PW(\bbC)$ denote the space
    of holomorphic functions of uniform exponential type, and
    $PW_n(\bbC)$ the subspace of holomorphic functions of uniform
    exponential type satisfying the functional equation
    $P_n(2ir)h(-r)=P_n(-2ir)h(r)$.

\begin{prop}
    The $n$-spherical transform, $\calS_n$,
    is a bijection of $C^\infty_n(\bbH)$ onto $PW_n(\bbC)$,
    with inverse transform given by
    \[\calS_n^{-1}h(z)=\frac{1}{2\pi}\int_0^\infty h(r)\Phi_{-r,n}(z)r\tanh(\pi
    r)dr.\]
    Moreover, if $h\in PW_n(\bbC)$ is of uniform
    exponential type $R$, then $f=\calS_n^{-1}h\in C^\infty_n(\bbH)$ is supported
    in the disc $d(z,i)<R$.
\end{prop}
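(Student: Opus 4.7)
The plan is to reduce the statement to the classical Plancherel/Paley--Wiener theorem on $G=\PSL(2,\bbR)$ by exploiting the $K$-equivariance built into $C^\infty_n(\bbH)$. First I would simplify the definition: unfolding the $K$-average in $\Phi_{r,-n}$ and using $f(kz)=\chi_n(k)f(z)$ together with $\int_K \chi_n(k)\chi_{-n}(k)dk=1$ shows that for $f\in C^\infty_n(\bbH)$,
\[
\calS_n(f)(r)=\int_\bbH f(z)\,\vphi_r(z)\,dz,
\]
so $\calS_n$ is essentially the Helgason--Fourier transform at the cusp $\infty$ restricted to the $n$-th $K$-type.

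For the forward containment $\calS_n(C^\infty_n(\bbH))\subseteq PW_n(\bbC)$, holomorphicity in $r$ is immediate. The exponential-type bound follows from $|\vphi_r(z)|=y^{1/2}e^{-\Im(r)\log y}$ and compactness of $\supp f$: if $f$ is supported in the disc $d(z,i)\le R$, then $|\log y|\le R$ on the support, yielding $|\calS_n(f)(r)|\ll e^{R|\Im r|}$. Rapid polynomial decay in $|\Re r|$ comes from repeated integration by parts using $\lap\vphi_r=-(r^2+\tfrac14)\vphi_r$, each application dividing by $r^2+\tfrac14$. The functional equation $P_n(-2ir)\calS_n(f)(r)=P_n(2ir)\calS_n(f)(-r)$ is immediate from the identity $\Phi_{r,n}=\tfrac{P_n(2ir)}{P_n(-2ir)}\Phi_{-r,n}$ recorded above, applied after rewriting $\calS_n(f)$ back in the form $\int f(z)\Phi_{r,-n}(z)dz$.

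For surjectivity and the inverse formula, I would appeal to the Plancherel decomposition of $L^2(G)$ restricted to the $n$-th $K$-type: the principal series contributes the standard Plancherel density $r\tanh(\pi r)dr$, while the (holo- and antiholomorphic) discrete series contribute only at points $r\in i(\bbZ+\tfrac12)$ where the $PW_n$ functional equation forces the images of $C^\infty_n(\bbH)$ to vanish (via the zeros of $P_n(\pm 2ir)$). On $C^\infty_n(\bbH)$ the inversion therefore reduces to the classical Helgason inversion formula, which is exactly the claimed formula for $\calS_n^{-1}$.

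The main obstacle is the support statement: given $h\in PW_n(\bbC)$ of uniform exponential type $R$, one must show that $\calS_n^{-1}h$ is supported in $d(z,i)\le R$. This is the genuine Paley--Wiener direction. I would write $\calS_n^{-1}h(ie^t)$ as a contour integral over $r\in\bbR$ and use the asymptotic expansion $\Phi_{-r,n}(ie^t)\sim c_n^\pm(r)e^{(ir-1/2)|t|}$ as $|t|\to\infty$ to control the integrand. For $|t|>R$, shifting the contour into the appropriate half-plane and invoking the exponential-type bound $|h(r)|\ll_N e^{R|\Im r|}(1+|r|)^{-N}$ drives the shifted integral to zero; the poles of $c_n^\pm(r)$ arising from the $P_n$-quotient are cancelled precisely by the $PW_n$ functional equation. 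This contour shift, and in particular the bookkeeping of the $P_n$-poles, is the most delicate step of the argument.
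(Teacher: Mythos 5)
Your proposal is correct in outline, but it takes a noticeably more laborious route than the paper. The paper's entire proof is a reduction: it records that for $f\in C^\infty_n(\bbH)$ the full Helgason--Fourier transform factors as $\tilde f(r,k)=\chi_n(k)\,\calS_nf(-r)$, so that the bijectivity, the inversion formula with density $r\tanh(\pi r)$, \emph{and} the support theorem all follow verbatim from Helgason's Paley--Wiener theorem for $C^\infty_c(\bbH)$ (\cite[Theorem 4.2]{Helgason81}); in particular the functional equation defining $PW_n(\bbC)$ is just the specialization of Helgason's functional equation $\int_K\vphi_r(k^{-1}z)\tilde f(r,k)dk=\int_K\vphi_{-r}(k^{-1}z)\tilde f(-r,k)dk$ to transforms of the form $\chi_n(k)h(-r)$. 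Your opening computation $\calS_n(f)(r)=\int_\bbH f(z)\vphi_r(z)\,dz$ is exactly this factorization evaluated at $k=e$, so you already hold the key to the short proof --- but instead of importing the hard direction from the cited theorem, you re-derive it: the easy containment via $|\vphi_r(z)|=y^{1/2}e^{-\Im(r)\log y}$ and integration by parts is fine, and the contour-shift/$c$-function argument you sketch for the support statement is the standard proof of the theorem you could simply quote. That step is genuinely delicate (as you say), and as written it remains a sketch rather than a proof; the paper sidesteps it entirely. One small inaccuracy: since $f$ lives on $\bbH=G/K$, its lift to $G$ is right-$K$-invariant, so discrete series (which have no $K$-fixed vector) do not enter the Plancherel decomposition at all --- there is nothing to be cancelled by the $PW_n$ functional equation at the points $r\in i(\bbZ+\tfrac12)$; the equation's role is only to match $\Phi_{r,n}$ with $\Phi_{-r,n}$ so that the inversion integral over $r>0$ is well defined.
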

\begin{proof}
    For any $f\in C^\infty_c(\bbH)$ it's Helgason-Fourier transform
    is given by
    $$\tilde{f}(r,k)=\int_\bbH f(z)\vphi_{-r}(k^{-1}z)dz.$$
    This transform is a bijection of $C_c^\infty(\bbH)$ onto the space of holomorphic functions with uniform exponential type
    satisfying the functional equation
    \[\int_K\vphi_r(k^{-1}z)\tilde{f}(r,k)dk=\int_K \vphi_{-r}(k^{-1}z)\tilde{f}(-r,k)dk.\]
    The inverse transform is given by
    \[f(z)=\frac{1}{2\pi}\int_0^\infty\int_K \tilde{f}(r,k)\vphi_{r}(k^{-1}z)r\tanh(\pi
    r)dr,\]
    and if $\tilde{f}(r,k)$ is of uniform exponential type $R$, then $f(z)$ is supported on $d(i,z)\leq R$ \cite[Theorem 4.2]{Helgason81}.
    The above proposition now follows directly from the identity (verified by a simple
    computation)
    $$\forall f\in C^\infty_n(\bbH),\quad \tilde{f}(r,k)=\chi_n(k)\calS_n f(-r).$$
\end{proof}

    For $f\in C_c^\infty(\bbH)$ let $L[f]\colon C^\infty(G/K)\to
    C^\infty(G)$ be the convolution operator defined by
    \[L[f]u(g)=\int_\bbH f(g^{-1}w)u(w)dw.\]
    \begin{lem}\label{lLf}
    Let $\phi\in C^\infty(\bbH)$ be a Laplacian eigenfunction with eigenvalue $(r^2+\frac{1}{4})$, and let
    $\phi_n\in C^\infty(G)$ satisfy $\phi_{n\pm 1}=\frac{1}{2ir+1\pm
    2n}E^\pm \phi_n$ with $\phi_0(g)=\phi(g(i))$.
    Then for any $f\in C^\infty_n(\bbH)$,
    \[L[f]\phi_0=\calS_nf(r)\phi_{-n}\]
    \end{lem}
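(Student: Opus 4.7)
The plan is to match both sides as elements of $\calF_{-n}(G)$: first check the right-$K$-type (reducing the problem to the slice $\{p_z : z\in\bbH\}\subset G$), then establish the identity on the slice for the model eigenfunction $\phi=\phi_r(z)=y^{ir+1/2}$, and finally extend to arbitrary $\phi$ via the Helgason--Fourier representation. The $K$-equivariance $f(k_\theta^{-1}w)=\chi_{-n}(k_\theta)f(w)$ gives directly
\[L[f]\phi_0(gk_\theta)=\int_\bbH f(k_\theta^{-1}g^{-1}w)\phi(w)dw=\chi_{-n}(k_\theta)L[f]\phi_0(g),\]
so $L[f]\phi_0\in\calF_{-n}(G)$, matching the $K$-type of $\phi_{-n}$. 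Hence it suffices to verify $L[f]\phi_0(p_z)=\calS_n f(r)\phi_{-n}(p_z)$ for all $z\in\bbH$.

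For the base case $\phi=\phi_r$, the cocycle relation $\phi_r(p_zw)=\phi_r(z)\phi_r(w)$ (immediate from $y(p_zw)=y(z)y(w)$) yields
\[L[f]\phi_0(p_z)=\int_\bbH f(w)\phi_r(p_zw)dw=\phi_r(z)\int_\bbH f(w)\phi_r(w)dw,\]
and a short unfolding of $\Phi_{r,-n}$ against $K$, together with the $\chi_n$-equivariance of $f$, identifies the remaining integral with $\calS_nf(r)$. Meanwhile, the ansatz $\phi_{-n}(p_zk_\theta)=\chi_{-n}(k_\theta)y^{ir+1/2}$ is verified by induction on $n$: the key algebraic step is to check, using the $(x,y,\theta)$-coordinate formulas for $H,X,W$, that $E^-\bigl(\chi_{-n}(k_\theta)y^{ir+1/2}\bigr) = (2ir+1+2n)\chi_{-(n+1)}(k_\theta)y^{ir+1/2}$, exactly cancelling the prescribed normalization $\tfrac{1}{2ir+1+2n}$. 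Specializing to $\theta=0$ gives $\phi_{-n}(p_z)=\phi_r(z)$ and closes the slice identity.

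To extend to a general Laplacian eigenfunction with eigenvalue $r^2+\tfrac{1}{4}$, invoke the Helgason Poisson representation $\phi(z)=\int_K\phi_r(k_0^{-1}z)dT(k_0)$; since the entire construction is linear in $\phi$, it suffices to verify the identity for each translate $\phi^{(k_0)}(z)=\phi_r(k_0^{-1}z)$. For such a translate, $\phi_0^{(k_0)}(g)=\phi_{0,r}(k_0^{-1}g)$ is a left-translate of the model, and since $E^\pm$ are left-invariant the lifts obey $\phi_{\pm n}^{(k_0)}(g)=\phi_{\pm n,r}(k_0^{-1}g)$; a change of variable $w\mapsto k_0w$ similarly gives $L[f]\phi_0^{(k_0)}(g)=L[f]\phi_{0,r}(k_0^{-1}g)$. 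Both sides of the identity for $\phi^{(k_0)}$ are thus left-translates by $k_0$ of the corresponding quantities for $\phi_r$, and the base case transports directly. The main obstacle is the inductive computation above: the coordinate formulas for $H,X,W$ each involve several trigonometric terms, and one must carefully verify that these combine to produce exactly the factor $(2ir+1+2n)$ when $E^-$ acts on the ansatz; once this is settled, the remaining steps ($K$-type matching, Helgason expansion, change of variable) are formal.
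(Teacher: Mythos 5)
Your proof is correct and follows essentially the same route as the paper's: reduce to the model eigenfunction $\vphi_r(z)=y^{ir+1/2}$ via the Helgason Poisson representation and left-$K$-equivariance, use the cocycle identity $\vphi_r(p_zw)=\vphi_r(z)\vphi_r(w)$ to evaluate $L[f]$ on the model, and identify $\phi_{-n}$ with $\chi_{-n}\cdot\vphi_r$. The only difference is that you propose to verify the recursion $E^-\bigl(\chi_{-n}y^{ir+1/2}\bigr)=(2ir+1+2n)\chi_{-(n+1)}y^{ir+1/2}$ by explicit coordinate computation (which does check out), whereas the paper simply asserts that $\vphi_r(g(i))\chi_n(g)$ satisfies the recursion.
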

    \begin{proof}
    First, by \cite[Theorem 4.3]{Helgason81}, any Laplacian
    eigenfunction $\phi$ can be expressed as an
    integral $\phi(z)=\int_K\vphi_{r}(kz)dT(k)$
    with respect to a suitable distribution on $K$. Hence, it is sufficient to show this in the special case where
    $\phi(z)=\vphi_{r}(k z)$ for arbitrary $k\in K$.
    Next, note that if $\tilde\phi(z)=\phi(kz)$ then $(L[f]\tilde \phi)(g) =L[f]\phi(k g)$ and also $\tilde\phi_n(g)=\phi_n(kg)$
    (because the left action of $K$ commutes with $E^\pm$). Hence it is sufficient to
    show the equality only for $\phi(z)=\vphi_r(z)$.
    Finally, note that the functions $\phi_n(g)=\vphi_r(g(i))\chi_n(g)$
    satisfy the above recursion relation. 
    It thus remains to show that $L[f]\vphi_r(g)=S_nf(r)\vphi_r(g(i))\chi_{-n}(g)$.

    Fix $g=p_zk$, then (after the change of variables $w\mapsto p_zw$)
    \[ L[f] \vphi_r(p_zk)=\int_\bbH f(k^{-1}w)\vphi_r(p_zw)dw.\]
    The function $\vphi_r$ satisfies $\vphi_r(p_zw)=\vphi_r(z)\vphi_r(w)$ so that
    \begin{eqnarray*}
    L[f] \vphi_r(p_zk)&=&\vphi_r(z)\int_\bbH f(w)\vphi_r(k w)dw=\\
    =\vphi_r(z)\tilde{f}(-r,k^{-1})&=&\calS_n f(r)\chi_{-n}(k)\vphi_r(z).
    \end{eqnarray*}
    concluding the proof.
    \end{proof}

\subsection{Quantization}\label{sQuant2}
    We now wish to relate the functionals
    $S_{\phi_ k}$ to functionals obtained by diagonal matrix elements of some quantization procedure.
    For this we use a generalization of Zeldich's quantization
    procedure via Helgasons Fourier transform \cite{Zelditch87}.
    For any smooth function $a\in C^\infty(TX)$ we assign its
    quantization which is an integral operator $\Op(a)$ acting on $L^2(X)$. Recall the
    map $(z,\xi)\mapsto (p_zk,E(z,\xi))$ from $TX$ to $\Gamma\bs\calG\times{[0,\infty)}^d$ and think of
    a function on $TX$ as a function $a=a(p_zk,r)$ with the parametrization $r_j=\sqrt{E_j-\frac{1}{4}}$. Let
    \[\tilde{a}(z,w)=\frac{1}{(2\pi)^d}\int_{{\bbR^+}^d}\int_\calK
    a(p_zk,r)\left(\prod_{j=1}^d \vphi_{r_j}(k_j^{-1}w_j)r_j\tanh(\pi
    r_j)\right)drdk,\]
    be the inverse Helgason-Fourier transform (in all of the
    $(k_j,r_j)$ coordinates). We then define the operator $\Op(a)$ by the kernel
    \[K(z,w)=\tilde{a}(z,p_z^{-1}w).\]
    Since $a$ is $\Gamma$ invariant, this kernel satisfies $K(\gamma z,\gamma
    w)=K(z,w)$, and hence defines an operator on $L^2(X)$.

    In the following lemma, we show that the Wigner distribution $a\mapsto \langle
    \Op(a)\phi_k,\phi_k\rangle_X$ is supported on $\Sigma(\lambda_k)\cong \Gamma\bs\calG$ and
    coincides there with $S_{\phi_k}$.
    \begin{lem}\label{lequiv}
    Let $a=a(g,r)\in C^\infty(TX)$ be
    holomorphic of uniform exponential type (in the $r_j$ variables)
    and satisfy that $\forall j=1,\ldots, d$ the expression
    \[\int_{K_j} a(p_zk,r)
    \vphi_{r_j}(k_j^{-1}w_j)dk_j,\]
    is invariant under the substitution $r_j\mapsto -r_j$. Then
    \[\langle \Op(a)\phi_k,\phi_k\rangle=S_{\phi_k}(a(\cdot,r_k)).\]
    \end{lem}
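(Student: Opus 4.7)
The plan is to unfold both sides into $\calK$-Fourier series and match them term by term, using the inverse spherical transform identification of $\tilde a$ with a superposition of convolution kernels to which Lemma~\ref{lLf} applies.

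First I would decompose $a(p_z k, r) = \sum_{n\in\bbZ^d} \alpha_n(z, r)\chi_n(k)$ in its $\calK$-Fourier series, where $\alpha_n(z,r) = \int_\calK a(p_z k, r)\chi_{-n}(k)\,dk$. Substituting into the definition of $\tilde a(z,w)$ and recognizing $\int_{\calK} \chi_n(k)\prod_j \vphi_{r_j}(k_j^{-1}w_j)\,dk = \prod_j \Phi_{r_j,n_j}(w_j)$, one obtains
\[
\tilde a(z,w) = \frac{1}{(2\pi)^d}\sum_n \int_{(\bbR^+)^d} \alpha_n(z,r)\prod_j \Phi_{r_j,n_j}(w_j)\, r_j\tanh(\pi r_j)\,dr.
\]
The symmetry hypothesis on $\int_{K_j} a(p_z k, r)\vphi_{r_j}(k_j^{-1}w_j)\,dk_j$ is precisely the functional equation needed so that, after applying $\Phi_{r_j,n_j} = \frac{P_{n_j}(2ir_j)}{P_{n_j}(-2ir_j)}\Phi_{-r_j,n_j}$, each summand is an inverse $n$-spherical transform. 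Thus $\tilde a(z, w) = \sum_n g_{z,n}(w)$, with $g_{z,n}\in C^\infty(\calH)$ transforming by $\chi_n$ under $\calK$, and whose (product) $n$-spherical transform is $\calS_n g_{z,n}(r) = \alpha_n(z,r)$.

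Next, after the change of variable $w\mapsto p_z w$, rewrite
\[
\Op(a)\phi_k(z) = \int_\calH \tilde a(z, p_z^{-1}w)\phi_k(w)\,dw = (L[\tilde a(z,\cdot)]\phi_0)(p_z),
\]
where $\phi_0(g)=\phi_k(g\cdot i)$. Applying the product version of Lemma~\ref{lLf} (applied one factor at a time using the commutation of $L[\cdot]$ with each factor) to each $g_{z,n}$ yields $L[g_{z,n}]\phi_0 = \alpha_n(z, r_k)\,\phi_{-n}$, so
\[
\Op(a)\phi_k(z) = \sum_n \alpha_n(z, r_k)\,\phi_{-n}(p_z).
\]

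Finally, to compute $\langle\Op(a)\phi_k, \phi_k\rangle_X$, I would use the $\calK$-invariance $\overline{\phi_0(p_z)} = \overline{\phi_0(p_z k)}$ and the equivariance $\phi_{-n}(p_z k)=\chi_{-n}(k)\phi_{-n}(p_z)$ to absorb the $\chi_{-n}(k)$ in the definition of $\alpha_n(z,r_k)$ into $\phi_{-n}(p_z)$, converting the $X$-integral plus $\calK$-integral into an integral over $Y=\Gamma\bs\calG$ against $dg = dp\,dk$:
\[
\langle\Op(a)\phi_k, \phi_k\rangle_X = \sum_n \int_Y a(g, r_k)\phi_{-n}(g)\overline{\phi_0(g)}\,dg.
\]
Reindexing $n\mapsto -n$ and comparing with the definition of $S_{\phi_k}(a(\cdot, r_k)) = \sum_n\langle a(\cdot,r_k)\phi_n,\phi_0\rangle_Y$ gives the claim.

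The main obstacle is step two: verifying that the stated symmetry hypothesis on $a$ in each $r_j$ matches exactly the functional equation $P_n(2ir)h(-r)=P_n(-2ir)h(r)$ defining $PW_n(\bbC)$, so that $\tilde a(z,\cdot)$ genuinely decomposes as a sum $\sum_n g_{z,n}$ with $g_{z,n}\in C^\infty_n$ realizing $\calS_n g_{z,n}=\alpha_n(z,\cdot)$; everything afterwards is essentially bookkeeping in the $\calK$-Fourier decomposition and a direct application of Lemma~\ref{lLf}.
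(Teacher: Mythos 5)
Your proposal is correct and follows essentially the same route as the paper: the paper proves the lemma for $a(g,r)=a(g)h(r)$ with $a$ of a fixed $\calK$-type $n$ (remarking that the general case follows by exactly the $\calK$-Fourier decomposition you perform), identifies the kernel as a tensor product of convolution operators $L[f_j]$ with $f_j=\calS_{n_j}^{-1}h_j$, applies Lemma~\ref{lLf} factor by factor, and unfolds the $X$-integral against $dk$ to an integral over $Y$. The ``main obstacle'' you flag --- that the stated symmetry hypothesis is the functional equation characterizing the image of the spherical transform --- is precisely the point the paper also asserts without further computation, so your argument is at the same level of detail as the original.
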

    \begin{proof}
     We will give the proof in the special case where the function
     $a$ is of the form $a(g,r)=a(g)h(r)$ with $h(r)=\prod_j
     h_j(r_j)$ and $a\in \calF_n(Y)$ is of some fixed $\calK$-type $n\in\bbZ^d$.
     (This is the only case that we will use, however, the general
     statement can be deduced by decomposing $a(g,r)$ into its $\calK$-Fourier series.)
     For $a$ of the above type, the functional equation is equivalent to the requirement
     that the functions $h_j\in PW_{n_j}(\bbC)$. We can now write the kernel as
     \[K_a(z,w)=a(p_z)\prod_{j=1}^d f_j(p_{z_j}^{ -1}w_j),\]
     with $f_j=\calS_{n_j}^{-1}(h_j)\in C_{n_j}^\infty(\bbH)$. In
     particular the operator $\Op(a)$ is given by a tensor product of convolution operators
    $$\Op(a)\phi(z)=a(p_z)L[f_1]\otimes\cdots \otimes L[f_d]\phi(z).$$
    Since $\phi_k(z)$ are joint eigenfunctions of all partial Laplacians, Lemma \ref{lLf} (applied separately to each coordinate) implies
    $$\Op(a)\phi_k(z)=a(p_z)h(r_k)\phi_{k,-n}(p_z).$$
    We thus get that
    \begin{eqnarray*}
    \langle
    \Op(a)\phi_k,\phi_k\rangle_X &=& \int_X \Op(a)\phi_k(z)\overline{\phi_k(z)}dz\\
    &=&\int_X
    a(p_z)h(r_k)\phi_{k,-n}(p_z)\overline{\phi_k(z)}dz\\
    &=&\int_Y
    a(g)h(r_k)\phi_{k,-n}(g)\overline{\phi_{k,0}(g)}dg\\
    &=&h(r_k)\langle a
    \phi_{k,-n},\phi_{k,0}\rangle_Y=S_{\phi_k}(a(\cdot,r_k)).
    \end{eqnarray*}
    \end{proof}

\section{A Local Weyl's Law}\label{sLW}
We now give the proof of Theorem \ref{tWL}, showing that for large eigenvalues, 
on average, the distributions $S_{\phi_k}$
defined above converge to the volume measure of $Y$. 

\subsection{A Trace Formula}
    The main ingredient in the proof will be a trace formula, relating the
    sum over the eigenvalues to a sum over conjugacy classes in
    $\Gamma$.
    Recall the setting: $X=\Gamma\bs \calH$, $\{\phi_k\}\in C^\infty(X)$ is an orthonormal
    basis for $L^2(X)$ composed of joint Laplacian eigenfunctions
    (with eigenvalues $\lambda_{k,j}=(r_{k,j}^2+\frac{1}{4})$ respectively) and $S_{\phi_k}$ the corresponding distributions.

    For any $1\leq j\leq d$, fix $f_j\in C_{n_j}^\infty(\bbH)$, and let
    $h_j(r_j)=\calS_{n_j}f_j\in PW_{n_j}(\bbC)$ be the corresponding spherical transforms.
    Denote by $h(r)=\prod_j h_j(r_j)$ and by
    $f(z)=\prod_j f_j(z_j)$. For any $\gamma\in\Gamma$, let $\Gamma_\gamma$
    be the centralizer of $\gamma$ in $\Gamma$ and let $\calF_\gamma\subseteq\calH$
    be a fundamental domain for $\Gamma_\gamma$.
    \begin{thm}\label{tTRACE}
    For any observable $a\in \calF_n(X)$
    \[\sum_k h(r_k)S_{\phi_k}(a)=\sum_{\{\gamma\}}\int_{\calF_\gamma} a(p_z)f(p_z^{-1}\gamma
    z)dz,\]
    where the right hand sum is over the conjugacy classes in $\Gamma$.
    \end{thm}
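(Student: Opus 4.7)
The plan is to apply the standard Selberg trace-formula machinery to the integral operator $T = \Op(a \cdot h)$ introduced in Section \ref{sQuant2}. By the computation carried out in the proof of Lemma \ref{lequiv}, this operator has kernel
$$K(z, w) = a(p_z)\, f(p_z^{-1} w)$$
on $\calH \times \calH$ and descends to an operator on $L^2(X)$ via the periodized kernel $\tilde K_X(z, w) = \sum_{\gamma \in \Gamma} K(z, \gamma w)$. Since each $f_j \in C^\infty_{n_j}(\bbH)$ is compactly supported, this sum is locally finite, and the smoothness of the kernel together with the compactness of $X$ makes $T$ trace class.

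I would then compute $\Tr(T)$ in two ways. On the spectral side, Lemma \ref{lequiv} gives directly $\langle T \phi_k, \phi_k \rangle_X = h(r_k)\, S_{\phi_k}(a)$, and summing over $k$ yields the left-hand side of the claimed formula. On the geometric side,
$$\Tr(T) = \int_\calF \tilde K_X(z, z)\, dz = \int_\calF \sum_{\gamma \in \Gamma} a(p_z) f(p_z^{-1} \gamma z)\, dz,$$
where $\calF \subset \calH$ is a fundamental domain for $\Gamma$. Grouping the sum over $\Gamma$ by conjugacy classes as $\sum_\gamma = \sum_{\{\gamma\}} \sum_{\delta \in \Gamma_\gamma \backslash \Gamma}$ (replacing the representative $\gamma$ by $\delta^{-1} \gamma \delta$ in the inner sum) and performing the standard unfolding substitution $z \mapsto \delta z$ will collapse $\sum_\delta \int_\calF$ into $\int_{\calF_\gamma}$, producing exactly $\sum_{\{\gamma\}} \int_{\calF_\gamma} a(p_z) f(p_z^{-1} \gamma z)\, dz$.

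The main point to verify, and the main obstacle of this plan, is that the integrand $a(p_z) f(p_z^{-1} \gamma z)$ is genuinely invariant under $z \mapsto \delta z$ coupled with $\gamma \mapsto \delta^{-1} \gamma \delta$, which is what makes the unfolding legitimate. Writing $\delta\, p_{\delta^{-1} w} = p_w k$ for the appropriate element $k = k(\delta, w) \in \calK$ determined by the Iwasawa-type decomposition, the $\Gamma$-invariance of $a$ on the left combined with the right $\calK$-equivariance $a \in \calF_n$ contributes a factor $\chi_n(k)$ to $a(p_z)$, while the identity $f(k^{-1} u) = \chi_n(k)^{-1} f(u)$ coming from $f_j \in C^\infty_{n_j}(\bbH)$ contributes an offsetting factor $\chi_n(k)^{-1}$ to $f(p_z^{-1} \gamma z)$. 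These factors cancel precisely, which is exactly why the theorem requires $a$ and the $f_j$ to carry the same $\calK$-type $n$. Once this invariance is established, equating the two expressions for $\Tr(T)$ yields the trace formula.
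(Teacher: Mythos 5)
Your proposal is correct and follows essentially the same route as the paper: compute the trace of $\Op(ah)$ spectrally via Lemma \ref{lequiv} and geometrically by periodizing the kernel $K(z,w)=a(p_z)f(p_z^{-1}w)$ and unfolding over conjugacy classes. The invariance $K(\delta z,\delta w)=K(z,w)$ that you single out as the main point to verify is exactly the property the paper asserts (with the matching $\calK$-types of $a$ and $f$ providing the cancelling characters), so no further changes are needed.
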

    \begin{rem}
    In the special case, when $n=0$ and $a\equiv 1$ is the constant
    function, the terms
    $\int_{\calF_\gamma}f(p_{z}^{-1}\gamma
    z)dz$ can be computed explicitly in terms of the Fourier transform of
    $h$, retrieving the Selberg trace formula.
    \end{rem}
    \begin{proof}
    Consider the operator $\Op(ah)$ given by the kernel
    $$K(z,w)=a(p_z)f(p_{z}^{ -1}w).$$
    We can think of $\Op(ah)$ as an operator on
    $L^2(\Gamma\bs\bbH)$ with kernel given by
    $$K_\Gamma(z,w)=\sum_\gamma K(z,\gamma w).$$
    Write the trace of this operator in two different ways.
    First, since $\phi_k$ is an orthonormal basis for
    $L^2(X)$, by Lemma \ref{lequiv}
    \[\Tr(\Op(ah))=\sum_k \langle
    \Op(a)\phi_k,\phi_k\rangle_X=\sum_k h(r_k)S_{\phi_k}(a).\]

    On the other hand, if $\calF\subseteq\calH$ is a fundamental domain for
    $\Gamma$ then
    \[\Tr(\Op(ah))=\int_\calF K_\Gamma(z,z)dz=\sum_{\gamma}\int_\calF  K(z,\gamma z)dz.\]
    Note that if $\gamma'=g^{-1}\gamma g$ are conjugated in $\Gamma$
    then
    \[\int_\calF  K(z,\gamma' z)dz=\int_\calF  K(gz,\gamma gz)dz=\int_{g\calF}  K(z,\gamma
    z)dz.\]
    We can thus write
    \begin{eqnarray*}
    \Tr(\Op(ah))&=&\sum_{\{\gamma\}}\int_{\calF_\gamma}K(z,\gamma
    z)dz\\
    &=&\sum_{\{\gamma\}}\int_{\calF_\gamma}a(p_z)f(p_{z}^{-1}\gamma
    z)dz
    \end{eqnarray*}
    where $\calF_\gamma=\sum_{g\in
    \Gamma/\Gamma_\gamma}g\calF_\gamma$ is the fundamental domain
    for $\Gamma_\gamma$.
    \end{proof}

\subsection{Smoothing}
    In order to use the trace formula to evaluate the sum
    $\sum_{k\in \calI(\bfL)}S_{\phi_k}(a)$, we need to approximate
    the window function by a smooth function admissible in the trace
    formula.
    \begin{defn}
        We say that a smooth function $h\in C^\infty(\bbR)$ is
        $\delta$-approximating the window function around $L\in[\frac{1}{2},\infty)$, if it satisfies for real $x>0$
        \[|h(x)-\id_{[L-\frac{1}{2},L+\frac{1}{2}]}(x)|=
        \left\lbrace\begin{array}{cc}
        O(\delta) & |x-L|\leq \frac{1}{2}-\sqrt{\delta}\\
        O(1) &  \frac{1}{2}-\sqrt{\delta}\leq |x-L|\leq
        \frac{1}{2}\\
        O_N(\delta(\frac{1}{|x-L|-1/2})^N) &
        |x-L|>\frac{1}{2}\\
        \end{array}\right.\]
        where $\id_{[\alpha,\beta]}$ is the indicator function of $[\alpha,\beta]$.
    \end{defn}
    Let $\Theta(r)=\prod_j \id_{[-\frac{1}{2},\frac{1}{2}]}(r_j)$ denote a
    window function around zero in $\bbR^d$. If $h_{L_j,\delta}$ are functions $\delta$-approximating the window functions around
    $L_j$ respectively, then their product
    $h_{\bfL,\delta}(r)=\prod_jh_{L_j,\delta}(r_j)$ is a good
    approximation to the window function $\Theta(r-\bfL)$
    around $\bfL=(L_1,\ldots L_j)$ in the following sense.
    \begin{prop}\label{pSmooth}
    \[\limsup_{\norm{\bfL}\to\infty}\frac{1}{L_1\cdots L_d}\sum_{r_{k}\in\bbR^d} |h_{\bfL,\delta}(r_k)-\Theta(r_{k}-\bfL)|=O(\sqrt{\delta})\]
    \end{prop}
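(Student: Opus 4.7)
The plan is to reduce to a one-dimensional analysis via a telescoping identity in the $d$ coordinates, and then to combine the pointwise estimate built into the definition of a $\delta$-approximation with a local Weyl-type counting estimate for the joint spectrum.

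First I would express the difference $h_{\bfL,\delta}(r)-\Theta(r-\bfL)$ as a telescoping sum of $d$ terms, the $j$-th term being
\[
D_j(r) := \bigl(h_{L_j,\delta}-\id_{[L_j-1/2,L_j+1/2]}\bigr)(r_j)\prod_{i<j}h_{L_i,\delta}(r_i)\prod_{i>j}\id_{[L_i-1/2,L_i+1/2]}(r_i).
\]
It then suffices to show $\sum_{r_k}|D_j(r_k)|\ll \sqrt{\delta}\,L_1\cdots L_d$ for each $j$. Since each factor in $\prod_{i\neq j}$ is uniformly bounded and essentially supported on a window of size $1$ around $L_i$ (with rapid decay outside for the $h$ factors), the coordinates $i\neq j$ are effectively confined. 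I would split the range of $r_{k,j}$ into the three regimes built into the definition of a $\delta$-approximation: the bulk $|r_{k,j}-L_j|\leq \tfrac12-\sqrt\delta$, the boundary strip $\tfrac12-\sqrt\delta\leq|r_{k,j}-L_j|\leq\tfrac12$, and the tail $|r_{k,j}-L_j|>\tfrac12$.

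For the counting I would invoke (or derive from Theorem \ref{tTRACE} applied with $a\equiv 1$ and a product of smooth bump functions) a Weyl-type estimate of the form
\[
\#\{k\colon r_{k,i}\in I_i,\ i=1,\dots,d\}\;\ll\;\prod_{i=1}^d (1+|I_i|)\bigl(1+\sup_{t\in I_i}|t|\bigr),
\]
uniform in the product box $\prod_i I_i$. Applying this in each regime yields: (i) the bulk contributes $O(\delta)\cdot L_1\cdots L_d$; (ii) the boundary strip, of width $\sqrt\delta$ in the $j$-th coordinate with pointwise error $O(1)$, contributes $O(\sqrt\delta)\cdot L_1\cdots L_d$, which is the dominant term; (iii) the tail, sliced into unit layers at distance $n\geq 1$ with pointwise weight $O_N(\delta\,n^{-N})$ and count $O((L_j+n)\prod_{i\neq j}L_i)$ per layer, contributes a convergent sum $\sum_{n\geq 1}\delta\, n^{-N}(L_j+n)\prod_{i\neq j}L_i = O(\delta)\cdot L_1\cdots L_d$ once $N\geq 2$. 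Summing the three regimes and dividing by $L_1\cdots L_d$ gives the claimed $O(\sqrt\delta)$. The density-zero set of exceptional eigenvalues (Lemma \ref{lSmall1}) is absorbed harmlessly.

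The principal obstacle will be establishing the Weyl-type count uniformly in $\bfL$ and in the box $\prod_i I_i$, especially in the tail regime where one coordinate ranges over arbitrarily large values while the others remain confined. To obtain this from Theorem \ref{tTRACE} one has to verify that the geometric side (the sum over non-identity conjugacy classes $\{\gamma\}$) contributes only lower-order corrections when the test function is localized on a moderate-size box far from the origin. This is the same technical ingredient that underlies the proof of Theorem \ref{tWL}, and I expect to handle it by the standard bounds on orbital integrals for hyperbolic and mixed conjugacy classes in $\Gamma$.
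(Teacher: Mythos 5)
Your overall architecture (telescoping in the coordinates, splitting the $j$-th coordinate into bulk, boundary strip and tail, layering the tail into unit annuli, and feeding in a local Weyl count) is sound and runs parallel to the paper's Proposition \ref{pSmooth2}, which instead decomposes the sum over shifted windows $\calI(\bfL-\bfM)$, $\bfM\neq 0$, plus the term $\sum_{k\in\calI(\bfL)}|\prod_j h_{L_j,\delta}(r_{k,j})-1|$. Your regimes (i) and (iii) are fine. The problem is regime (ii), which is the \emph{dominant} one and the sole source of the $\sqrt\delta$ in the statement. The counting estimate you commit to,
\[
\#\{k\colon r_{k,i}\in I_i\}\;\ll\;\prod_{i=1}^d (1+|I_i|)\bigl(1+\sup_{t\in I_i}|t|\bigr),
\]
does not give what you claim there: for the boundary slab, $I_j$ has length $\asymp\sqrt\delta$ and the factor $(1+|I_j|)(1+L_j)\asymp L_j$, so the bound yields $O(L_1\cdots L_d)$ eigenvalues in the slab, not $O(\sqrt\delta\,L_1\cdots L_d)$, and the whole estimate degenerates to $O(1)$ instead of $O(\sqrt\delta)$. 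What is actually needed is a count \emph{proportional to the width} of the short side, and such a bound cannot hold uniformly with a multiplicative constant (a shrinking interval that still contains an eigenvalue has count at least $1$); it only holds in the limit. This is exactly why the statement is a $\limsup$ and why the paper proves the scaled Weyl law of Proposition \ref{pCOUNT1},
$\limsup_{\norm{\bfL}\to\infty}N(\bfL,\epsilon)/(L_1\cdots L_d)\leq c_1\epsilon^d$, with an additive $O_\epsilon(1)+o(L_1\cdots L_d)$ error absorbed by the limit, and then covers the boundary region by $O(\delta^{-(d-1)/2})$ cubes of side $\sqrt\delta$ to get $\sharp(\calI(\bfL)\setminus\calI(\bfL,1-\sqrt\delta))=O(\sqrt\delta\,L_1\cdots L_d)+o(L_1\cdots L_d)$.

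So: replace your uniform box count by the asymptotic small-cube count of Proposition \ref{pCOUNT1} together with the covering argument (or prove the slab version directly from the trace formula with a test function $h(\cdot/\sqrt\delta)$ in the $j$-th variable, tracking that the geometric side is $O_{\delta}(1)$ and the identity term is $\asymp\sqrt\delta\,L_j\prod_{i\neq j}L_i$). With that replacement your regimes (i) and (iii) go through as written, and the telescoping reduction is a perfectly good alternative to the paper's decomposition. Two minor points: the sum in the proposition is restricted to $r_k\in\bbR^d$, so the exceptional spectrum never enters the spectral side you are bounding (it only needs handling inside the derivation of the Weyl count itself); and in the factors $\prod_{i<j}h_{L_i,\delta}(r_i)$ of your $D_j$ you still need the same layered-tail argument in the coordinates $i<j$, since those factors are not compactly supported — you gesture at this but it should be said explicitly.
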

    \begin{proof}
    Appendix \ref{scount}, Proposition \ref{pSmooth2}.
    \end{proof}

    For $n\in\bbZ$, recall that $PW_{n}(\bbC)$ is the space of
    holomorphic functions $h(x)$ of uniform exponential type, satisfying the
    functional equation $P_{n}(-2i)h(x)=P_n(2ix)h(-x)$ with
    $$P_n(x)=(\frac{x+1}{2})(\frac{x+1}{2}+1)\cdots(\frac{x+1}{2}+|n|-1).$$
    We will show
    that for any fixed $n\in\bbZ$, there are functions in $PW_n(\bbC)$ that
    $\delta$-approximate the window functions. For this we need
    the following lemma.
    \begin{lem}\label{lCORRECT}
    For fixed $n\in\bbZ$, there are holomorphic functions $F_\delta(x)$ satisfying
        \begin{itemize}
        \item The Fourier transform $\hat{F}_\delta\in C^\infty_c(\bbR)$ is compactly supported.
        \item $\forall |m|\leq
        |n|,\;F_\delta(\frac{im}{2})=1$
        \item $F_\delta(x)=O(\delta)$, uniformly for real $x\in\bbR$.
        \end{itemize}
    \end{lem}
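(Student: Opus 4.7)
The plan is to realize $F_\delta$ as a product $F_\delta(x)=G_T(x)\,P(x)$, where $G_T$ is a rescaled band-limited Schwartz function and $P$ is a Lagrange interpolating polynomial of degree at most $2|n|$, with the scaling parameter $T=T(\delta)$ growing logarithmically in $1/\delta$. This logarithmic growth is \emph{forced} by a Phragm\'en--Lindel\"of argument: any entire function $F$ of exponential type $R$ with $|F|\le C\delta$ on $\mathbb{R}$ satisfies $|F(x+iy)|\le Ce^{R|y|}\delta$, so the requirement $F(im/2)=1$ at a nonzero interpolation node forces $R\gtrsim (2/|m|)\log(1/\delta)$. The construction will therefore have $\widehat{F_\delta}$ supported in an interval whose length grows like $\log(1/\delta)$, but which is compact for each fixed $\delta>0$.

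For the building block, fix once and for all an even $\phi\in C^\infty_c(\mathbb{R})$ with $\phi\ge 0$, compactly supported in $[-a,a]$, and $\int\phi>0$. Its Fourier transform $\hat\phi$ is entire of exponential type $a$, lies in the Schwartz class on $\mathbb{R}$, and satisfies $\hat\phi(iy)>0$ for real $y$ together with $\hat\phi(iy)\asymp e^{a|y|}$ up to polynomial corrections as $|y|\to\infty$. For a parameter $T>0$ let $G_T(x)=\hat\phi(Tx)$. Then $G_T$ is entire of exponential type $aT$, so $\widehat{G_T}$ is supported in $[-aT,aT]$, and on $\mathbb{R}$ one has the rapid decay $|G_T(x)|\le C_N(1+T|x|)^{-N}$ for every $N\ge 0$.

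Next, define the Lagrange interpolant
\[
P(x)=\sum_{|m|\le |n|}\frac{1}{G_T(im/2)}\prod_{\substack{|m'|\le |n|\\ m'\ne m}}\frac{x-im'/2}{im/2-im'/2},
\]
so that $G_T(im/2)\,P(im/2)=1$ at every node. Set $F_\delta:=G_T\cdot P$; this is entire of exponential type $aT$, hence $\widehat{F_\delta}\in C^\infty_c([-aT,aT])$, and $F_\delta(im/2)=1$ at each prescribed node. To bound $|F_\delta|$ on the real axis, note that each Lagrange basis polynomial is bounded by $C_n(1+|x|)^{2|n|}$, while the coefficients satisfy $|1/G_T(im/2)|\le C_n\,e^{-aT|m|/2}$ up to polynomial factors for the nonzero nodes $|m|\ge 1$. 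Combining with $|G_T(x)|\le C_N(1+T|x|)^{-N}$ for $N=2|n|+2$ and splitting into the two regimes $T|x|\le 1$ and $T|x|\ge 1$ yields the uniform estimate $|F_\delta(x)|\le C_n\,e^{-aT/2}$ on $\mathbb{R}$. Choosing $T=2a^{-1}\log(1/\delta)+O_n(\log\log(1/\delta))$ then gives the desired bound $|F_\delta(x)|=O(\delta)$.

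The delicate point, and the main obstacle, is the coupling between the construction and the real-axis estimate: the coefficients of $P$ must grow like $e^{aT/2}$ in order to cancel the exponential growth of $G_T$ on the imaginary axis at the interpolation nodes, and this exponential factor must be absorbed by the Schwartz decay of $G_T$ on $\mathbb{R}$. Tying both through the single parameter $T$ and invoking the Phragm\'en--Lindel\"of lower bound shows that the choice $T\sim \log(1/\delta)$ is both sufficient and essentially sharp, so the Fourier support of $F_\delta$ inevitably expands with $\delta$ but remains compact for every fixed $\delta>0$.
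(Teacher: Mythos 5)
Your construction breaks down at the node $m=0$, and the claimed uniform bound $|F_\delta(x)|\le C_n e^{-aT/2}$ on $\bbR$ is false. Since you interpolate at \emph{all} nodes $|m|\le|n|$, including $m=0$, you have $F_\delta(0)=1$ by construction, and $0$ lies on the real axis, so $F_\delta$ cannot simultaneously be $O(\delta)$ uniformly on $\bbR$; no enlargement of the exponential type can help, because the offending node sits on $\bbR$ itself. Concretely, the $m=0$ term of your sum is $G_T(x)\ell_0(x)/\hat\phi(0)$: its coefficient $1/G_T(0)=1/\hat\phi(0)$ is of order one rather than exponentially small in $T$, the term equals $1$ at $x=0$, and for $|x|\asymp 1$ it is only $O\bigl((1+T|x|)^{-N}\bigr)=O\bigl((\log(1/\delta))^{-N}\bigr)$, far larger than $\delta$. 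The resolution is that the condition at $m=0$ must be dropped: the lemma is only invoked (in Proposition \ref{pSmooth1}) to cancel the poles of $P_n(2ix)/P_n(-2ix)$, which lie at nonzero purely imaginary points, and the paper's own proof accordingly sums only over $1\le|m|\le|n|$ (its $F_\delta$ satisfies $F_\delta(0)=0$). Read literally, with $m=0$ allowed, the statement is self-contradictory, so any proof of it as written must contain an error; yours does, in the final real-axis estimate.

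Once the node $m=0$ is excluded, your approach does work and is a genuinely different (and in one respect sharper) route than the paper's. The paper takes the nodal function $\sin(x/\delta)$, of exponential type $1/\delta$, and divides by its derivative at the nodes; you use an envelope $\hat\phi(Tx)$ of type $aT$ with $T\asymp\log(1/\delta)$, which your Phragm\'en--Lindel\"of observation shows is essentially the smallest admissible support for $\hat F_\delta$. One technical caveat: for $\phi\in C_c^\infty([-a,a])$ the asymptotic $\hat\phi(iy)\asymp e^{a|y|}$ ``up to polynomial corrections'' is false --- since $\phi$ vanishes to infinite order at $\pm a$, the correction is super-polynomial though subexponential (e.g.\ of size $e^{-c\sqrt{|y|}}$). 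This is harmless: the elementary lower bounds $\hat\phi(iy)\ge\int\phi>0$ and $\hat\phi(iy)\ge c_\eps e^{(a-\eps)|y|}$ suffice, at the cost of taking $T=\frac{2}{a-\eps}\log(1/\delta)+O(1)$. With these two repairs, the remaining estimates (Lagrange basis polynomials of size $(1+|x|)^{O_n(1)}$ absorbed by the Schwartz decay of $\hat\phi(Tx)$ on $\bbR$) go through.
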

    \begin{proof}
    For $0<\delta<1$, let $G_\delta(x)=\sin(x/\delta)\prod_{1\leq |m|\leq
    |n|}\frac{2x-im}{x/\delta-m\pi}$. Then $G_\delta(\frac{im}{2})=0$, the derivative
    $G'_\delta(\frac{im}{2})\gg e^{\frac{m}{\delta}}\gg
    \frac{1}{\delta}$,
    and for real $x$ the function $|G_\delta(x)|\leq 1$ is
    bounded. The function,
    \[F_\delta(x)=\sum_{1\leq|m|\leq |n|}
    \frac{G_\delta(x)}{G_\delta'(im)(x-im)},\]
   then satisfies the above properties \footnote{I thank Mikhail Sodin for showing me this construction.}.
    \end{proof}

    \begin{prop}\label{pSmooth1}
        For fixed $n\in\bbZ$, for any $L\geq\frac{1}{2}$ and $\delta>0$ there is a function $h_{L,\delta}\in
        PW_n(\bbC)$ (with exponential type depending on $\delta$ but not on
        $L$), that is $\delta$-approximating the window function around $L$.
    \end{prop}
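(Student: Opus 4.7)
The plan is to construct $h_{L,\delta}$ in two stages. First, build a Paley-Wiener approximation $g_{L,\delta}$ of the window around $L$ with exponential type depending only on $\delta$. Second, symmetrize $g_{L,\delta}$ using the functional-equation involution, correcting the resulting entireness obstruction via the auxiliary function $F_\delta$ supplied by Lemma \ref{lCORRECT}.

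For the first stage, I would take $g_{L,\delta}$ to be the convolution of $\id_{[L-1/2,\,L+1/2]}$ with a Paley-Wiener smoothing kernel whose Fourier transform is compactly supported in $[-T(\delta),T(\delta)]$ and which decays faster than any polynomial on the real line; a sufficiently high power of a rescaled Fej\'er kernel is a concrete choice. Standard Fourier estimates then yield the three quantitative bounds of the $\delta$-approximation with constants uniform in $L$.

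For the second stage, observe that the involution $(Tg)(r) := \frac{P_n(2ir)}{P_n(-2ir)}\,g(-r)$ satisfies $T^2=\mathrm{id}$, and that a Paley-Wiener function lies in $PW_n(\bbC)$ precisely when it is entire and $T$-fixed. I would therefore set
\begin{equation*}
    h_{L,\delta}(r) \,=\, \tilde g_{L,\delta}(r) + \frac{P_n(2ir)}{P_n(-2ir)}\,\tilde g_{L,\delta}(-r),
\end{equation*}
where $\tilde g_{L,\delta}$ is a correction of $g_{L,\delta}$ engineered to vanish at the points $r=i(k+1/2)$ for $k=0,\dots,|n|-1$, so that the poles of $1/P_n(-2ir)$ in the second summand are cancelled and $h_{L,\delta}$ becomes entire; the result is automatically $T$-fixed and hence lies in $PW_n(\bbC)$. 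To produce $\tilde g_{L,\delta}$, I subtract from $g_{L,\delta}$ a suitable interpolating combination built from $F_\delta$: since $F_\delta$ prescribes values at the relevant lattice points on the imaginary axis while being uniformly $O(\delta)$ on the real line, and the complex values $|g_{L,\delta}(i(k+1/2))|$ are bounded uniformly in $L$ (by the Paley-Wiener bound applied to the compactly Fourier-supported $g_{L,\delta}$), this correction perturbs the real-line approximation by only $O(\delta)$ and preserves the exponential type.

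The main obstacle is verifying the three $\delta$-bounds for $h_{L,\delta}$ uniformly in $L\geq 1/2$. For $r>0$ near the window, the second summand is negligible: $\tilde g_{L,\delta}(-r)$ is small by the off-window polynomial decay of $g_{L,\delta}$ (since $-r$ is far from the window, using $L\geq 1/2$), and the ratio $P_n(2ir)/P_n(-2ir)$ has modulus one on the real line; so $h_{L,\delta}(r) \approx \tilde g_{L,\delta}(r) \approx g_{L,\delta}(r)$ inherits the window approximation. To preserve the off-window decay $O_N\bigl(\delta\,(|x-L|-1/2)^{-N}\bigr)$ after the symmetrization step, the smoothing kernel in the first stage must be chosen with polynomial decay of order at least $|n|+N$, which a sufficiently high power of the Fej\'er kernel provides. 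Assembling these pieces, $h_{L,\delta}$ lies in $PW_n(\bbC)$ with the required uniform exponential type and $\delta$-approximation properties.
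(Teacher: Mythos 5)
Your overall strategy coincides with the paper's: smooth the indicator by convolving with a Paley--Wiener kernel of exponential type depending only on $\delta$, symmetrize under the involution $h(r)\mapsto \frac{P_n(2ir)}{P_n(-2ir)}h(-r)$ to enforce the functional equation, and use Lemma \ref{lCORRECT} to cancel the poles of the ratio. The identification of the points $i(k+\tfrac12)$ where the reflected factor must vanish, the near-window estimates, and the functional-equation check all match.

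The step that does not work as written is the pole cancellation. You produce $\tilde g_{L,\delta}$ by \emph{subtracting} a Lagrange-type interpolant with coefficients $g_{L,\delta}(i(k+\tfrac12))$, and you control the correction only by ``$O(\delta)$ uniformly on the real line.'' But the third requirement of $\delta$-approximation is $|h_{L,\delta}(x)|=O_N\bigl(\delta(|x-L|-1/2)^{-N}\bigr)$ for every $N$ when $|x-L|>1/2$, and an additive perturbation that is merely $O(\delta)$ destroys this: the basis functions available from the construction of $F_\delta$ decay only like $1/|x|$ on $\bbR$ (since $G_\delta$ does not decay at all), so far from the window your $h_{L,\delta}$ is of size $\delta/|x|$ rather than $O_N(\delta(|x-L|-1/2)^{-N})$. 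This rapid decay is exactly what Proposition \ref{pSmooth2} consumes when it sums $O_N(\delta M_j^{-N})$ against $\sharp\calI(\bfL-\bfM)\ll\prod_j(L_j-M_j)$ over all $\bfM$, so the loss is not cosmetic. The paper's fix is to make the correction \emph{multiplicative}: take $(1-F_\delta(x))\,\id_\delta(x-L)$, which still vanishes where $F_\delta=1$ (killing the poles of the reflected summand) but, since $1-F_\delta$ is bounded on $\bbR$, inherits the full $O_N$ off-window decay of $\id_\delta(x-L)$; equivalently, subtract $F_\delta(x)\,g_{L,\delta}(x)$ rather than a constant-coefficient interpolant. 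Two smaller points: the ratio $P_n(2ix)/P_n(-2ix)$ has modulus one on $\bbR$, so no extra decay of order $|n|$ is needed from the kernel; and a fixed power of the Fej\'er kernel gives only finite-order decay, whereas the definition demands the bound for all $N$, so one should take a kernel with $\hat\rho\in C_c^\infty$ as the paper does.
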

    \begin{proof}
        Fix a positive even holomorphic function $\rho\in PW_0(\bbC)$ with Fourier transform $\hat\rho$ supported in $[-1,1]$
        and $\hat\rho(0)=1$. For any $\delta>0$, let
        $\rho_\delta(x)=\frac{1}{\delta}\rho(\frac{x}{\delta})$ and
        define a smoothed window function by convolution with the window
        function $\id_{[-\frac{1}{2},\frac{1}{2}]}$.
        Then the smoothed function $\id_\delta=\rho_\delta*\id_{[-\frac{1}{2},\frac{1}{2}]}$
        satisfies
        \[\id_\delta(x)=\left\lbrace\begin{array}{cc}
        1+O(\delta) &|x|\leq 1/2-\sqrt{\delta}\\
        O((\frac{\delta}{|x|-1/2})^N)& |x|>\frac{1}{2}\\
        \end{array}\right.\]
        We now want to deform the smoothed function
        $\id_\delta(x-L)$ into a function in $PW_n(\bbC)$. For $n=0$ we can simply take
        \[h_{L,\delta}(x)=\id_\delta(-x-L)+\id_\delta(x-L)\in PW_0(\bbC).\]

        Otherwise, let
        $F_\delta(x)$ be as in Lemma \ref{lCORRECT}, and define the function
        \[h_{L,\delta}(x)=(1-F_\delta(x))\id_\delta(x-L)+\frac{P_n(2ix)}{P_n(-2ix)}(1-F_\delta(-x))\id_\delta(-x-L).\]
        The function $h_{L,\delta}$ obviously satisfies the functional
        equation.
        The zeros of $1-F_\delta(x)$ cancel the poles of
        $\frac{P_n(2ix)}{P_n(-2ix)}$, and since the Fourier
        transform $\hat{F}_{\delta}$ is compactly supported, $h_{L,\delta}$ is of uniform exponential type
        (depending only on $\delta$).
        It remains to show that it indeed approximates the window
        function.

        First, note that for $x\in\bbR$ the function $|1-F_\delta(x)|=O(1)$ and $|\frac{P_n(-ix)}{P_n(ix)}|=1$,
        so that
        $|h_{L,\delta}(x)|=O(1)$ is bounded. Now, for $|x-L|>1/2$ we can
        bound
        \[|h_{L,\delta}(x)|\leq
        |(1-F_\delta(x))||\id_\delta(x-L)|+|(1-F_\delta(-x))||\id_\delta(-x-L)|.\]
        The function $|1-F_\delta(x)|$ is bounded and
        $\id_\delta(\pm x-L)\ll(\frac{\delta}{|x-L|-\frac{1}{2}})^N$,
        hence
        $|h_{L,\delta}(x)|\ll(\frac{\delta}{|x-L|-\frac{1}{2}})^N$.

        Next, for $|x-L|\leq 1/2-\sqrt{\delta}$ we can bound
        $|h_{L,\delta}(x)-1|$ by
        \[|\id_\delta(x-L)-1|+
        |F_\delta(x)||\id_\delta(x-L)|+|(1-F_\delta(-x))||\id_\delta(-x-L)|.\]
        The first term is bounded by $O(\delta)$, the second term
        is bounded by $O(\delta)$ (because
        $F_\delta(x)=O(\delta)$), and the last
        term is also bounded by $O(\delta)$ (since $|x+L|\geq \frac{1}{2}+\sqrt{\delta}$).
    \end{proof}

\subsection{Proof of Theorem \ref{tWL}}
    Let $a\in C^\infty(Y)$. With out loss of generality we can assume that
    $\int_Y a=0$ and that $a$ is of some fixed $\calK$-type $n$. We thus need
    to show that
    \[\lim_{\norm{\bfL}\to\infty}\frac{1}{N(L)}\sum_kS_{\phi_k}(a)=0.\]

    For $\delta>0$ and $j=1,\ldots, d$ let $h_{L_j,\delta}\in
    PW_{n_j}(\bbC)$ (with exponential type depending on $\delta$ but not on $L_j$)
    $\delta$-approximate the window function around
    $L_j$, and let $h_{\bfL,\delta}(r)=\prod h_{L_j,\delta}(r_j)$.

    Use the trace formula for $h_{\bfL,\delta}(r)$ to get
    \[\sum_k h_{\bfL,\delta}(r_k)S_{\phi_k}(a)=\sum_{\{\gamma\}}\int_{\calF_\gamma} a(p_z)(\prod_{j=1}^d
    f_{L_j,\delta}(p_{z_j}^{-1}\gamma_j
    z_j))dz,\]
    where $f_{L_j,\delta}=\calS_{n_j}^{-1}h_{L_j,\delta}\in C_{n_j}^\infty(\bbH)$ and $\calF_\gamma\subseteq\bbH\times\cdots\times\bbH$
    is the fundamental domain for $\Gamma_\gamma$.

First notice that the conjugacy class of the identity does not contribute anything. To see this write its contribution as
\[\int_{\calF} a(p_z)(\prod_{j=1}^d f_{L_j,\delta}(p_{z_j}^{-1}z_j))dz=(\prod_{j=1}^df_{L_j,\delta}(i))\int_{\calF}a(p_z)dz.\]
If there is some $n_j\neq 0$ then $f_{L_j,\delta}(i)=0$. Otherwise $a$ is $\calK$ invariant and $\int_{\calF}a(p_z)dz=\int_{Y}a(g)dg=0$.

Next, recall that the functions $f_{L_j,\delta}$ are compactly supported
so we can replace the noncompact domains $\calF_\gamma$ by compact
domains of the form
$\tilde\calF_\gamma=\set{z\in\calF_\gamma\colon
d(z_j,\gamma_jz_j)<M}$ for some constant $M=M(\delta)$ depending
on $\delta$. Denote by $l_{\gamma_j}=\inf_\bbH d(z_j,\gamma_jz_j)$ and note that there can be only a
finite number of conjugacy classes satisfying
that $\max_j l_{\gamma_j}\leq M$, hence, there are
only a finite number of conjugacy class that contribute to the
sum (the number depending again on $\delta$ but
not on $\bfL$).

We now use the inverse transform to estimate the size of
$f_{L_j,\delta}$,
\[f_{L_j,\delta}(p_{z_j}^{-1}\gamma_j
z_j)=\frac{1}{2\pi}\int_0^\infty h_{L_j,\delta}(r)\Phi_{r,n_j}(p_{z_j}^{-1}\gamma_j z_j)
r_j\tanh(\pi
r_j)dr_j.\]

Since we assume $\Gamma$ is irreducible and co-compact for any nontrivial conjugacy classes $\{\gamma\}$, we know that $\gamma_j$ is either hyperbolic or elliptic. If $\gamma_j$ is hyperbolic we can use Lemma
\ref{lBoundSPherical} to bound
$\Phi_{r_j,n_j}(p_{z_j}^{-1}\gamma_j
z_j)\ll_\delta \frac{1}{\sqrt{r_j}}$ uniformly in the annulus  $l_{\gamma_j}\leq
d(p_{z_j}^{-1}\gamma z_j,i)\leq M$.
We thus get the bound
\[f_{L_j,\delta}(p_{z_j}^{-1}\gamma_j z_j)\ll_\delta \int_0^\infty |h_{L_j,\delta}(r_j)|\sqrt{r_j}dr_j\ll_\delta\sqrt{L_j}.\]

In the case where $\gamma_j$ is elliptic, for any $\epsilon>0$ as before
we can bound $f_{L_j,\delta}(p_{z_j}^{-1}\gamma_j z_j)\ll_{\epsilon,\delta}\sqrt{L_j}$, uniformly  in
the annulus $\epsilon\leq
d(p_{z_j}^{-1}\gamma z_j,i)\leq M$. For $d(p_{z_j}^{-1}\gamma
z_j,i)<\epsilon$ (i.e., in an $\epsilon$-neighborhood of the fixed
point of $\gamma_j$) we use the trivial bound $f_{L_j,\delta}(p_{z_j}^{-1}\gamma_j
z_j)\ll L_j$ (coming from the estimate $\phi_{r,n_j}(p_{z_j}^{-1}\gamma_j
z_j)=O(1)$).

Plugging these estimates in the integral, for strictly hyperbolic
conjugacy classes
\[\int_{\tilde\calF_\gamma} a(p_z)(\prod_{j=1}^d
f_{L_j,\delta}(p_{z_j}^{-1}\gamma_j
z_j))dz=O_{\delta}(\sqrt{L_1\cdots L_d}),\]
and for mixed conjugacy classes (where some of the elements are
elliptic)
\[\int_{\tilde\calF_\gamma} a(p_z)(\prod_{j=1}^d
f_{L_j,\delta}(p_{z_j}^{-1}\gamma_j
z_j))dz=O_{\delta,\epsilon}(\sqrt{L_1\cdots L_d})+O(\epsilon L_1\cdots L_d).\]
This is true for any $\epsilon>0$, hence for any
conjugacy class
\[\int_{\tilde\calF_\gamma} a(p_z)(\prod_{j=1}^d
f_{L_j,\delta}(p_{z_j}^{-1}\gamma_j
z_j))dz=o(L_1\cdots L_d),\]
and thus for the whole sum
\[\sum_k  h_{\bfL,\delta}(r_k)S_{\phi_k}(a)=o(L_1\cdots L_d).\]
Taking the limit, recalling that $N(\bfL)\gg L_1\cdots L_d$ (Proposition \ref{pCOUNT2}) we get that
\[\lim_{\norm{\bfL}\to\infty}\frac{1}{N(\bfL)}\sum_k h_{\bfL,\delta}(r_k)S_{\phi_k}(a)=0.\]

The contribution from the exceptional eigenfunctions, where $r_{k,j}$ is imaginary, is
negligible (see Lemma \ref{lSmall1}), hence
\[\lim_{\norm{\bfL}\to\infty}\frac{1}{N(\bfL)}\sum_{r_{k}\in\bbR^d} h_{\bfL,\delta}(r_k)S_{\phi_k}(a)=0.\]
Because $h_{L_j,\delta}(r_j)$ are $\delta$-approximating the window functions around
$L_j$, by Proposition \ref{pSmooth}
\[\limsup_{\norm{\bfL}\to\infty}\frac{1}{N(\bfL)}\sum_{r_{k}\in\bbR^d}
|h_{\bfL,\delta}(r_{k})-\Theta(r_{k}-\bfL)|=O(\sqrt{\delta}),\]
implying that
\[\limsup_{\norm{\bfL}\to\infty}\frac{1}{N(\bfL)}\sum_{k\in\calI(L)} S_{\phi_k}(a)=O(\sqrt\delta).\]
This is true for any $\delta>0$, hence
\[\lim_{\norm{\bfL}\to\infty}\frac{1}{N(\bfL)}\sum_k S_{\phi_k}(a)=0.\]

\appendix
\setcounter{thm}{0}
\renewcommand{\thethm}{\Alph{thm}}
\section{Counting Eigenvalues}\label{scount}

    Let $X=\Gamma\bs\calH$ be a compact locally symmetric
    space with $\calH=\bbH\times\cdots\times\bbH$ a product of $d$ hyperbolic
    planes, $\G=\PSL(2,\bbR)^d$ the group of isometries,
    and $\Gamma\subseteq \G$ an irreducible
    co-compact lattice. Let $\{\phi_k\}$ be a basis for
    $L^2(X)$ composed of Laplacian eigenfunctions  (with eigenvalues $\lambda_{k,j}=\frac{1}{4}+r_{k,j}^2$).
    For $\bfL=(L_1,\ldots,L_d)\in[\frac{1}{2},\infty)^d$ let
    $$N(\bfL)=\sharp\set{k\colon \norm{r_{k}-\bfL}_\infty \leq \frac{1}{2}}.$$

    \begin{thm}\label{tCOUNT}
    As $\bfL\to\infty$,
      \[L_1\cdots L_d\ll N(\bfL)\ll L_1\cdots L_d\]
    \end{thm}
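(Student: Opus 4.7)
The plan is to apply the trace formula of Theorem \ref{tTRACE} with the constant observable $a \equiv 1$ (for which $n = 0$ and $S_{\phi_k}(1) = \pr{\phi_{k,0}}{\phi_{k,0}}_Y = 1$), and to sandwich the sharp window count $N(\bfL)$ by smooth nonnegative test functions in $\prod_j PW_0(\bbC)$. Concretely, take $h_{\bfL,\delta}(r) = \prod_j h_{L_j,\delta}(r_j)$ with each $h_{L_j,\delta}$ the nonnegative $\delta$-approximation of $\id_{[L_j - 1/2, L_j + 1/2]}$ from Proposition \ref{pSmooth1} for the upper bound, and a nonnegative bump supported strictly inside $[L_j - 1/2, L_j + 1/2]$ with unit mass for the lower bound. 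Proposition \ref{pSmooth} then yields
\[
N(\bfL) = \sum_k h_{\bfL,\delta}(r_k) + O\bigl(\sqrt{\delta}\, L_1 \cdots L_d\bigr),
\]
so it suffices to evaluate the trace sum up to the accuracy $\sqrt{\delta}$ of the anticipated main term.

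The identity conjugacy class contributes the main term. Since $p_z^{-1}(z) = i$ and $\Phi_{-r,0}(i) = \vphi_{-r}(i) = 1$, the integrand over $\calF_{\{e\}} = \calF$ reduces to the constant
\[
f_{\bfL,\delta}(i) = \prod_{j=1}^d \frac{1}{2\pi} \int_0^\infty h_{L_j,\delta}(r_j) \, r_j \tanh(\pi r_j) \, dr_j.
\]
Concentration of $h_{L_j,\delta}$ near $L_j$ together with $r_j \tanh(\pi r_j) = L_j + O(1)$ on the window makes each factor $\frac{L_j}{2\pi}(1 + O(\sqrt{\delta}))$, so the identity contribution equals $\frac{\vol(X)}{(2\pi)^d} L_1 \cdots L_d (1 + O(\sqrt{\delta}))$.

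For the remaining classes I would argue exactly as at the end of Section \ref{sLW}. The compact support of each $f_{L_j,\delta}$, of radius depending only on $\delta$, confines the integration to a bounded region and leaves only finitely many classes contributing (the number depending on $\delta$ but not on $\bfL$); irreducibility of $\Gamma$ ensures every projection $\gamma_j$ is hyperbolic or elliptic. For a purely hyperbolic $\gamma$, Lemma \ref{lBoundSPherical} gives $|f_{L_j,\delta}(p_{z_j}^{-1}\gamma_j z_j)| \ll_\delta \sqrt{L_j}$ uniformly on the support, yielding an integral $O_\delta(\sqrt{L_1 \cdots L_d})$. For a mixed class, I would excise an $\epsilon$-neighborhood of each elliptic fixed point in the corresponding $\bbH$ factor (where the trivial bound $|\Phi_{r,0}| \leq 1$ gives $f_{L_j,\delta} \ll L_j$ and a contribution $O_\delta(\epsilon\, L_1 \cdots L_d)$) and use the $1/\sqrt{r}$ bound on its complement ($O_{\delta,\epsilon}(\sqrt{L_1 \cdots L_d})$); sending $\epsilon \to 0$ after $\bfL \to \infty$ then makes every such class $o(L_1 \cdots L_d)$. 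Exceptional eigenvalues (some $r_{k,j}$ imaginary) cannot belong to $\calI(\bfL)$ once $L_j > 1$, so they are irrelevant to $N(\bfL)$ itself.

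Combining, $\sum_k h_{\bfL,\delta}(r_k) = \tfrac{\vol(X)}{(2\pi)^d} L_1 \cdots L_d (1 + O(\sqrt{\delta})) + o_\delta(L_1 \cdots L_d)$, and both one-sided bounds of Theorem \ref{tCOUNT} follow on letting $\bfL \to \infty$ and then $\delta \to 0$ (the lower bound uses the bump version instead, which produces the same identity-term shape with a smaller but positive constant). The principal technical obstacle is keeping all three error scales (the smoothing parameter $\delta$, the elliptic-excision parameter $\epsilon$, and the nonidentity integral estimates) uniform in $\bfL$ so that the limits in $\delta$ and $\epsilon$ commute cleanly with $\bfL \to \infty$; this is handled by the same machinery already assembled for the proof of Theorem \ref{tWL}.
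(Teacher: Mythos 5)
Your overall strategy---the trace formula with $a\equiv 1$, the main term from the identity class, and $o(L_1\cdots L_d)$ for the geometric side---is the right one and is in the spirit of the paper's appendix, but there are two genuine gaps. First, a circularity: Proposition \ref{pSmooth} (i.e.\ Proposition \ref{pSmooth2}) is itself proved \emph{using} the upper bound $N(\bfL)\ll L_1\cdots L_d$ of Proposition \ref{pCOUNT1} (it invokes $\sharp\calI(\bfL-\bfM)=O(\prod_j(L_j-M_j))$ and covers $\calI(\bfL)\setminus\calI(\bfL,1-\sqrt{\delta})$ by boxes whose eigenvalue counts are controlled by Proposition \ref{pCOUNT1}), so you cannot appeal to it to deduce the upper bound. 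The paper therefore proves the upper bound first and separately, by dominating the sharp window with a positive admissible function $h(\frac{r-L}{\epsilon})+h(\frac{-r-L}{\epsilon})$ and reading off the trace formula directly, with no smoothing proposition. Relatedly, your lower-bound test function, a ``nonnegative bump supported strictly inside $[L_j-\frac{1}{2},L_j+\frac{1}{2}]$,'' cannot be admissible: elements of $PW_0(\bbC)$ are entire of exponential type and cannot be compactly supported. One must instead use a Paley--Wiener approximation with tails and control the tail contribution via Proposition \ref{pSmooth}---which is legitimate for the lower bound, but only once the upper bound is already in hand.

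Second, the exceptional spectrum. You dismiss eigenfunctions with some $r_{k,j}$ imaginary on the grounds that they cannot lie in $\calI(\bfL)$; that is true, but they do appear on the spectral side $\sum_k h_{\bfL,\delta}(r_k)$ of the trace formula, and in this setting there may be infinitely many of them. To pass from $\sum_k h_{\bfL,\delta}(r_k)$ to $\sum_{r_k\in\bbR^d}h_{\bfL,\delta}(r_k)$ (which is what Proposition \ref{pSmooth} compares with $N(\bfL)$), one must show that their total contribution is $o(L_1\cdots L_d)$; this is precisely the content of Lemmas \ref{lSmall1} and \ref{lSmall2}, which require a separate trace-formula argument with test functions chosen to be positive on both the real and imaginary axes. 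This step cannot be waved away. By contrast, your treatment of the nonidentity classes (stationary-phase bounds on the spherical functions, excision near elliptic fixed points) is sound and gives $o(L_1\cdots L_d)$, which suffices, although the paper's appendix gets the sharper $O_\delta(1)$ directly from the compact support of $\hat h_{L_j,\delta}$ in the classical Selberg trace formula.
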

      \begin{rem}
      This theorem can be deduced from the analysis of
      Duistermaat, Kolk and Varadajan on the spectrum of compact locally symmetric spaces
      \cite[Theorem 7.3]{DuisteremaatKolkVaradarajan79}.
      However, for the sake of completeness, we will include here a self contained proof of this result.
      \end{rem}

    In order to prove Theorem \ref{tCOUNT}, we will prove separately the upper and lower
    bounds. For the upper bound, we consider the number of eigenvalues in a scaled
    window $N(\bfL,\epsilon)=\sharp\set{k\colon \norm{r_{k}-\bfL}_\infty\leq
    \frac{\epsilon}{2}}$.

    \begin{prop}\label{pCOUNT1}
    There is a constant $c_1>0$ such that for every $\epsilon>0$
    \[\limsup_{\norm{\bfL}\to\infty}\frac{N(\bfL,\epsilon)}{L_1\cdots L_d}\leq c_1\epsilon^{d}\]
    \end{prop}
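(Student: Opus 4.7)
The plan is to bound $N(\bfL,\epsilon)$ from above by $\sum_k h_{\bfL,\epsilon}(r_k)$ for a carefully chosen non-negative Paley-Wiener majorant $h_{\bfL,\epsilon}$ of the indicator $\prod_j \id_{[L_j-\epsilon/2,L_j+\epsilon/2]}(r_j)$, and then evaluate this spectral sum via the trace formula (Theorem \ref{tTRACE}) applied to the test function $a\equiv 1$. This is the classical Selberg-style approach, and the non-identity contributions can be estimated exactly as in the proof of Theorem \ref{tWL} given earlier in the paper.

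\textbf{Construction of the majorant.} Fix once and for all an even, real-valued function $\rho\in PW_0(\bbC)$ with $\rho(r)\geq 0$ on $\bbR$, $\rho(r)\geq 1$ on $[-1/2,1/2]$, and $\int_{\bbR}\rho(r)dr<\infty$ (for instance the square of a suitable real, even Paley-Wiener function). For each $L_j$ and $\epsilon>0$, set
\[
h_{L_j,\epsilon}(r)=\rho\!\left(\tfrac{r-L_j}{\epsilon}\right)+\rho\!\left(\tfrac{r+L_j}{\epsilon}\right)\in PW_0(\bbC),
\]
which is even (so satisfies the $n=0$ functional equation), non-negative on $\bbR$, dominates $\id_{[L_j-\epsilon/2,L_j+\epsilon/2]}$, and has exponential type depending only on $\epsilon$. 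Set $h_{\bfL,\epsilon}(r)=\prod_j h_{L_j,\epsilon}(r_j)$ and $f_{\bfL,\epsilon}=\prod_j\calS_0^{-1}h_{L_j,\epsilon}$, which is smooth and compactly supported on $\calH$ inside a ball of radius $R(\epsilon)$ independent of $\bfL$.

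\textbf{Applying the trace formula.} Since $a\equiv 1$ is $\calK$-invariant and $S_{\phi_k}(1)=1$, Theorem \ref{tTRACE} gives
\[
\sum_k h_{\bfL,\epsilon}(r_k)=\vol(X)\,f_{\bfL,\epsilon}(i)+\sum_{\{\gamma\}\neq\{1\}}\int_{\calF_\gamma}f_{\bfL,\epsilon}(p_z^{-1}\gamma z)\,dz.
\]
The identity contribution factors as $\vol(X)\prod_j\frac{1}{2\pi}\int_0^\infty h_{L_j,\epsilon}(r)r\tanh(\pi r)dr$; the substitution $r=L_j+\epsilon u$ and rapid decay of $\rho$ show that each factor equals $\epsilon L_j\int_{\bbR}\rho\,du+o(L_j)$ as $L_j\to\infty$, so the identity term is $c_1\epsilon^d L_1\cdots L_d(1+o(1))$ with $c_1=\vol(X)\bigl(\tfrac{1}{2\pi}\int\rho\bigr)^d$ independent of $\epsilon$ and $\bfL$.

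\textbf{Non-identity classes and exceptional spectrum.} Because $f_{\bfL,\epsilon}$ is supported in a ball of radius $R(\epsilon)$, only finitely many conjugacy classes (a number depending on $\epsilon$ alone) contribute. For each such class, the argument from Theorem \ref{tWL} applies verbatim: Lemma \ref{lBoundSPherical} yields $|\Phi_{r_j,0}(p_{z_j}^{-1}\gamma_j z_j)|\ll r_j^{-1/2}$ in any compact annulus away from $i$, so hyperbolic factors give $\calS_0^{-1}(h_{L_j,\epsilon})(p_{z_j}^{-1}\gamma_j z_j)\ll_\epsilon \sqrt{L_j}$; elliptic factors are handled by cutting out an $\eta$-neighborhood of the fixed point and using the trivial bound $\ll_\epsilon L_j$ there, which is then dominated by the $\eta\to 0$ limit. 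The resulting total non-identity contribution is $o(L_1\cdots L_d)$. Finally, the exceptional eigenvalues (those with some $r_{k,j}$ imaginary) are of density zero by Lemma \ref{lSmall1}, and on such eigenvalues $|h_{\bfL,\epsilon}(r_k)|=O_\epsilon(1)$ since $|r_{k,j}|\leq\tfrac{1}{2}$; hence they alter the sum by $o(L_1\cdots L_d)$. Combining, $N(\bfL,\epsilon)\leq\sum_{r_k\in\bbR^d}h_{\bfL,\epsilon}(r_k)\leq c_1\epsilon^d L_1\cdots L_d+o(L_1\cdots L_d)$, which proves the claim. The main technical hurdle is the elliptic case in the geometric side, but this is already settled in the Theorem \ref{tWL} proof.
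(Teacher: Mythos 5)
Your proposal follows essentially the same route as the paper: majorize the window indicator by a non-negative Paley--Wiener product $h_{\bfL,\epsilon}$ built by translating a fixed bump, feed it into the trace formula, read off the main term $\asymp \epsilon^d L_1\cdots L_d$ from the identity class, and show that the non-identity and exceptional contributions are negligible. The only cosmetic difference is that you run the geometric side through Theorem \ref{tTRACE} with $a\equiv 1$ and the stationary-phase bound of Lemma \ref{lBoundSPherical} (as in the proof of Theorem \ref{tWL}), obtaining $o(L_1\cdots L_d)$ per class, whereas the paper invokes the explicit Selberg orbital integrals $\tilde h_j(\gamma_j)$, whose compactly supported Fourier transforms give the sharper $O_\epsilon(1)$; both suffice.

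One step is too quick: the exceptional spectrum. You cite Lemma \ref{lSmall1} and argue that since exceptional eigenvalues have density zero and $|h_{\bfL,\epsilon}(r_k)|=O_\epsilon(1)$ on them, they alter the sum by $o(L_1\cdots L_d)$. This does not follow as stated, for two reasons. First, an exceptional eigenfunction has only \emph{some} coordinates $r_{k,j}$ imaginary; the remaining real coordinates range over all of $\bbR$, so there are infinitely many exceptional terms and a pointwise $O_\epsilon(1)$ bound is not enough --- you need the rapid decay of $h_{L_j,\epsilon}$ in the real coordinates summed over all translated windows, which is exactly Lemma \ref{lSmall2} (whose hypotheses your $h$ satisfies, being Paley--Wiener in a strip). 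Second, and more to the point for an upper bound, $h_{L_j,\epsilon}$ need not be non-negative at imaginary arguments (for $\rho=g^2$ one gets $2\,\mathrm{Re}\bigl(g((iy-L)/\epsilon)^2\bigr)$, which can be negative), so you cannot simply discard the exceptional terms from $\sum_k h_{\bfL,\epsilon}(r_k)$ to bound $\sum_{r_k\in\bbR^d}h_{\bfL,\epsilon}(r_k)$ from above; you must bound their absolute value, again via Lemma \ref{lSmall2}. With that substitution the argument closes, exactly as in the paper.
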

    In particular for $\epsilon=1$, $N(\bfL)\ll L_1\cdots L_d$. Now
    for the lower bound:
    \begin{prop}\label{pCOUNT2}
    There is a constant $c_2>0$, such that
    \[\liminf_{\norm{\bfL}\to\infty}\frac{N(\bfL)}{L_1\cdots L_d}\geq c_2\]
    \end{prop}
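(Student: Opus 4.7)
The plan is to apply the trace formula of Theorem \ref{tTRACE} with the constant observable $a\equiv 1$ (of $\calK$-type $n=0$) and the test function $h_{\bfL,\delta}(r)=\prod_j h_{L_j,\delta}(r_j)$ constructed in Proposition \ref{pSmooth1} with each $n_j=0$. Since the $P_n$ factor is trivial for $n=0$, the function $h_{L_j,\delta}(x)=\id_\delta(-x-L_j)+\id_\delta(x-L_j)$ is real and non-negative on $\bbR$. I then compare the spectral side $\sum_k h_{\bfL,\delta}(r_k)$ with the asymptotics of the geometric side.

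First I would isolate the identity conjugacy class on the geometric side. Since $\Phi_{r,0}(i)=\vphi_r(i)=1$, the inverse spherical transform at $z=i$ gives
\[
f_{L_j,\delta}(i)=\frac{1}{2\pi}\int_0^\infty h_{L_j,\delta}(r)\, r\tanh(\pi r)\,dr,
\]
which, because $h_{L_j,\delta}$ concentrates on $[L_j-\tfrac12,L_j+\tfrac12]$ where $r\tanh(\pi r)\sim r$, is asymptotic to $L_j/(2\pi)$ as $L_j\to\infty$. Therefore the identity class contributes
\[
\vol(X)\prod_{j=1}^d f_{L_j,\delta}(i)\;\sim\;c\,L_1\cdots L_d,\qquad c=\frac{\vol(X)}{(2\pi)^d}>0.
\]

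For the non-identity conjugacy classes, the compactness of the supports of the $f_{L_j,\delta}$ limits the sum to finitely many classes (depending only on $\delta$, not $\bfL$). On each such class I would reuse exactly the estimates from the proof of Theorem \ref{tWL}: the spherical function bound of Lemma \ref{lBoundSPherical} gives $f_{L_j,\delta}(p_{z_j}^{-1}\gamma_j z_j)=O_\delta(\sqrt{L_j})$ away from elliptic fixed points, yielding $O_\delta(\sqrt{L_1\cdots L_d})$ for strictly hyperbolic classes; for mixed and purely torsion classes, splitting off an $\epsilon$-neighborhood of the fixed locus and using the trivial bound $f_{L_j,\delta}=O(L_j)$ there produces a total contribution $O_{\delta,\epsilon}(\sqrt{L_1\cdots L_d})+O(\epsilon L_1\cdots L_d)$, which is $o(L_1\cdots L_d)$ upon taking $\epsilon\to 0$.

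Combining the two sides:
\[
\sum_k h_{\bfL,\delta}(r_k)\;=\;c\,L_1\cdots L_d\;+\;o(L_1\cdots L_d).
\]
Now I invoke Proposition \ref{pSmooth} (which rests on the upper bound of Proposition \ref{pCOUNT1}) together with Lemma \ref{lSmall1} controlling the exceptional (imaginary $r_{k,j}$) spectrum, to replace $\sum_k h_{\bfL,\delta}(r_k)$ by $N(\bfL)$ up to an error of $O(\sqrt{\delta})L_1\cdots L_d$. This gives
\[
\liminf_{\norm{\bfL}\to\infty}\frac{N(\bfL)}{L_1\cdots L_d}\;\geq\;c-O(\sqrt\delta),
\]
and sending $\delta\to 0$ yields the lower bound with $c_2=c/2$ (say). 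The main obstacle is the elliptic/torsion contributions, where the stationary phase bound of Lemma \ref{lBoundSPherical} breaks down near fixed points; it is resolved by the $\epsilon$-splitting since small balls in $\calH$ have volume $O(\epsilon^{2d})$, so the trivial estimate costs only $O(\epsilon L_1\cdots L_d)$, which can be made arbitrarily small before choosing $\delta$.
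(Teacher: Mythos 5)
Your proposal is correct and follows the same overall strategy as the paper: apply a trace formula to the smoothed window function $h_{\bfL,\delta}=\prod_j h_{L_j,\delta}$ with $n=0$, extract the main term $\asymp L_1\cdots L_d$ from the identity class, show the remaining classes are negligible, and then convert $\sum_k h_{\bfL,\delta}(r_k)$ into $N(\bfL)$ via Proposition \ref{pSmooth2} and the exceptional-spectrum lemmas before sending $\delta\to 0$. The one genuine difference is in how you treat the non-identity conjugacy classes: the paper invokes the classical Selberg trace formula, where the compact support of $\hat h_{L_j,\delta}$ restricts the sum to finitely many classes and the explicit orbital integrals $\tilde h_j(\gamma_j)$ are each $O_\delta(1)$ uniformly in $\bfL$; you instead route through Theorem \ref{tTRACE} with $a\equiv 1$ and reuse the stationary-phase bounds of Lemma \ref{lBoundSPherical} plus the $\epsilon$-splitting near elliptic fixed points, obtaining only $o(L_1\cdots L_d)$. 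Your bound is weaker but entirely sufficient for the $\liminf$ statement, and it has the advantage of not requiring the explicit hyperbolic/elliptic orbital-integral formulas. Two small points: for the weighted sum over the exceptional spectrum you should cite Lemma \ref{lSmall2} rather than Lemma \ref{lSmall1} (the latter controls the count $N(\calJ,\bfL)$, the former the weighted sums, which is what you need here); and once the limit in $\delta$ is taken you may keep $c_2=c$ itself rather than $c/2$, though this is immaterial.
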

    \subsection{Selberg Trace Formula}
    The main tool we use for the proof of Propositions \ref{pCOUNT1} and \ref{pCOUNT2} is the Selberg trace formula
    (see \cite[Sections 1-6]{Efrat87} for the full derivation of the trace formula in this setting).

    For any $\gamma\in\Gamma$ denote
    by $\{\gamma\}\in\Gamma^\sharp$ its conjugacy class, by
    $\Gamma_\gamma$ its centralizer in $\Gamma$, and by
    $\calG_\gamma$ it centralizer in $\calG$.
    Let $h_j(r_j)\in C^\infty(\bbR)$ be even and
    holomorphic in the strip $|\im(r_j)|\leq C$ for some fixed $C>\frac{1}{2}$. For
    any conjugacy class $\{\gamma\}\in\Gamma^\sharp$,
    let $c_\gamma=\vol(\Gamma_\gamma\bs
    G_\gamma)$. Recall that for any $\gamma\in\Gamma$ its
    projections to the different factors are either hyperbolic, $\gamma_j\sim \begin{pmatrix} e^{l_j/2} & 0\\ 0 &
    e^{-l_j/2}\end{pmatrix}$, or elliptic $\gamma_j\sim \begin{pmatrix} \cos\theta_j & \sin\theta_j\\
    -\sin\theta_j &
    \cos\theta_j \end{pmatrix}$.
    Define the functions $\tilde{h}_j(\gamma_j)$ by
    \[\tilde{h}_j(\gamma_j)=\frac{\hat{h}(l_{j})}{\sinh(l_{j}/2)},\]
    when
    $\gamma_j$ is hyperbolic, and
    \[\tilde{h}_j(\gamma_j)=\frac{1}{\sin\theta_j}\int_{-\infty}^{\infty}
    \frac{\cosh[(\pi-2\theta_j)r]}{\cosh(\pi r)}h(r)dr\] when $\gamma_j$ is
    elliptic.
    The Selberg trace formula, applied to the product $h(r)=\prod
    h_j(r_j)$, then takes the form
    \[\sum_k h(r_k)=\prod_j\left(\frac{1}{4\pi}\int_{\bbR} h_j(r_j)r_j\tanh(\pi
    r_j)dr_j\right)+\sum_{\{\gamma\}}c_\gamma
    \tilde{h}(\gamma),\]
    where the right hand sum is over the nontrivial conjugacy classes
    $\{\gamma\}\in\Gamma^\sharp$ and $\tilde{h}(\gamma)=\prod
    \tilde{h}_j(\gamma_j)$.

    \subsection{Exceptional eigenfunctions}
    Recall that an exceptional eigenfunction is an eigenfunction for
    which some of the eigenvalues are small $0<\lambda_{k,j}<\frac{1}{4}$ (or equivalently $r_{k_j}\in i(0,\frac{1}{2})$).
    We now do a separate treatment of the contribution of these eigenfunctions to the trace formula.
    We show that the exceptional eigenfunctions are of density zero,
    so that their contribution to the trace formula can be neglected.

    For any subset $\calJ\subset\set{1,\ldots,d}$, denote by
    $\calI(\calJ)$ the set of (exceptional) eigenfunctions $\phi_k$ for which the
    $j$'th partial eigenvalue is small for $j\in \calJ$ (and not small otherwise).
    That is
    \[\calI(\calJ)=\set{k\colon \forall j\in\calJ, \lambda_{k_j}< \frac{1}{4},\;\forall j\not\in\calJ, \lambda_{k,j}\geq \frac{1}{4}}.\]
    Also denote by
    \[\calI(\calJ,\bfL)=\set{k\colon \forall j\in\calJ, \lambda_{k_j}< \frac{1}{4},
    \;\forall j\not\in\calJ, |r_{k,j}-L_j|\leq \frac{1}{2}},\]
    and let $N(\calJ,\bfL)=\sharp\calI(\calJ,\bfL)$.

    \begin{lem}\label{lSmall1}
    For any nonempty subset $\calJ\subset\{1,\ldots,d\}$,
    \[\lim_{\norm{\bfL}\to\infty}\frac{N(\calJ,\bfL)}{L_1\cdots
    L_d}=0\]
    \end{lem}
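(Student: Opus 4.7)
The plan is to apply the Selberg trace formula with a nonnegative test function whose spectral sum dominates $N(\calJ,\bfL)$, and then show the geometric side grows only like $\prod_{j\notin\calJ}L_j$ rather than the full $L_1\cdots L_d$. To guarantee positivity on both the real and the imaginary axis I would take each factor to be of the form $h_j(r)=|g_j(r)|^2$. For $j\in\calJ$, fix once and for all an even $g_j\in PW_0(\bbC)$ of uniform exponential type such that $h_j(is)\geq 1$ for $s\in[0,1/2]$. For $j\notin\calJ$, set $h_{L_j}(r)=h_0(r-L_j)+h_0(r+L_j)$, where $h_0=|g_0|^2\in PW_0(\bbC)$ is a fixed even function with $h_0\geq 1$ on $[-1/2,1/2]$. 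With $h(r)=\prod_j h_j(r_j)$, we have $h(r_k)\geq 1$ for $k\in\calI(\calJ,\bfL)$ and $h(r_k)\geq 0$ for every $k$, so $N(\calJ,\bfL)\leq\sum_k h(r_k)$.

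Applying the Selberg trace formula stated above, the identity contribution factors as $\prod_j (4\pi)^{-1}\int_{\bbR} h_j(r)r\tanh(\pi r)\,dr$. For $j\in\calJ$ this is a fixed constant, while for $j\notin\calJ$ the integrand is concentrated in unit-length neighborhoods of $\pm L_j$ where $r\tanh(\pi r)\sim L_j$, giving $O(L_j)$; hence the identity term is $O\!\left(\prod_{j\notin\calJ}L_j\right)$. For the sum over nontrivial conjugacy classes, the key point is that $\hat h_{L_j}(l)=2\hat h_0(l)\cos(L_j l)$ has support independent of $L_j$ and is bounded uniformly. Thus for hyperbolic $\gamma_j$, $|\tilde h_j(\gamma_j)|\leq C/\sinh(l_{\gamma_j}/2)$ uniformly in $L_j$ and vanishes outside a fixed compact range of $l_{\gamma_j}$; for elliptic $\gamma_j$ with $j\notin\calJ$ the factor $\cosh[(\pi-2\theta_j)r]/\cosh(\pi r)\asymp e^{-2\theta_j|r|}$ forces $\tilde h_j(\gamma_j)=O(e^{-cL_j})$; and for elliptic $\gamma_j$ with $j\in\calJ$ the value is a fixed constant. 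By co-compactness of $\Gamma$ only finitely many conjugacy classes survive the support constraints on the hyperbolic factors, so the full geometric sum is $O(1)$ uniformly in $\bfL$.

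Combining, $N(\calJ,\bfL)=O\!\left(\prod_{j\notin\calJ}L_j\right)$, and dividing by $L_1\cdots L_d$ yields $O\!\left(\prod_{j\in\calJ}L_j^{-1}\right)\to 0$ since $\calJ$ is nonempty. The main technical obstacle is the uniform-in-$\bfL$ control of the geometric sum: while the compact support of each $\hat h_j$ kills conjugacy classes with large hyperbolic translation lengths, one must verify that mixed hyperbolic-elliptic classes are either confined to a finite set or suppressed exponentially in $L_j$, which relies essentially on the interplay between the $L_j$-independence of the support of $\hat h_j$ and the discreteness of $\Gamma$.
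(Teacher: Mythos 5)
Your overall strategy -- majorize $N(\calJ,\bfL)$ by $\sum_k h(r_k)$ for a test function that is nonnegative on both the real and imaginary axes and $\geq 1$ on the exceptional window, then run the Selberg trace formula -- is exactly the paper's. But there is a genuine gap in the final step. Your bound is $N(\calJ,\bfL)=O\bigl(\prod_{j\not\in\calJ}L_j\bigr)$, and you conclude by dividing by $L_1\cdots L_d$ to get $O\bigl(\prod_{j\in\calJ}L_j^{-1}\bigr)\to 0$. This does not follow: the limit in the lemma is only $\norm{\bfL}\to\infty$, so the coordinates $L_j$ with $j\in\calJ$ may stay bounded (e.g.\ $L_j=\tfrac12$) while some $L_{j'}$ with $j'\not\in\calJ$ tends to infinity. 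Note that $N(\calJ,\bfL)$ does not depend on $L_j$ for $j\in\calJ$ at all, so nothing forces $\prod_{j\in\calJ}L_j\to\infty$. In that regime your estimate only gives $\limsup N(\calJ,\bfL)/(L_1\cdots L_d)=O(1)$, which is consistent with the exceptional spectrum having \emph{positive} density -- precisely what the lemma must rule out. The root cause is that your factors $h_j$, $j\in\calJ$, are fixed, so their contribution to the identity term, $\frac{1}{4\pi}\int h_j(r)r\tanh(\pi r)\,dr$, is a fixed positive constant that cannot be beaten down.

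The missing idea is an auxiliary parameter: in the exceptional coordinates one must use a family $h_j=h_j^{(T)}$ with $h_j^{(T)}(is)\geq c>0$ for $s\in(0,\tfrac12)$ but with Plancherel mass $\int h_j^{(T)}(r)r\tanh(\pi r)\,dr=O(1/T)$ on the real axis. The paper takes $h_j^{(T)}(r)=e^{-Tr^2/2}$, which equals $e^{+T\tilde r^2/2}\geq 1$ on the imaginary axis while concentrating near $r=0$ on the real axis. This yields $\limsup_{\norm{\bfL}\to\infty}N(\calJ,\bfL)/(L_1\cdots L_d)=O(1/T)$ for every $T$, and letting $T\to\infty$ finishes the proof; no single fixed test function can do this. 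A secondary, fixable point: your choice $h_{L_j}(r)=h_0(r-L_j)+h_0(r+L_j)$ with $h_0=g_0^2$ need not be nonnegative on the imaginary axis, since there it equals $w^2+\bar w^2=2\,\mathrm{Re}(w^2)$ with $w=g_0(i\tilde r-L_j)$; you want the square of the sum, $(g_0(r-L_j)+g_0(r+L_j))^2=(2\,\mathrm{Re}\,w)^2\geq 0$, as in the paper's $(e^{-(r-L)^2/2}+e^{-(r+L)^2/2})^2$, to legitimately discard the non-exceptional terms of the spectral sum.
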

    \begin{proof}
    We will prove this for $\calJ=\set{1,\ldots,s-1}$ (the proof is
    analogous for any other subset).
    For any $T>0$ define the
    function
    \[h_{T,\bfL}(r)=e^{-\frac{T}{2}(r_1^2+\cdots+r_{s-1}^2)}\prod_{j=s}^d(e^{-\frac{(r_j-L_j)^2}{2}}+e^{-\frac{(r_j+L_j)^2}{2}})^2.\]
    When the coordinates $r_j$ are real or
    imaginary, the function $h_{T,\bfL}(r)$ is a positive real function. Moreover, if we assume that $r_j$ is imaginary
    for $1\leq j\leq s-1$, and that $|r_j-L_j|\leq \frac{1}{2}$ for $s\leq
    j\leq d$, then
    $h_{T,\bfL}(r)>e^{-\frac{d-s+1}{2}}\geq\frac{1}{e}$ is uniformly bounded away from zero.
    We can thus bound
    \[N(\calJ,\bfL)\ll\sum_{k}h_{T,\bfL}(r_k).\]
    Now, plugging the functions $h_{T,\bfL}$ in the Selberg trace
    formula we get
    \[N(\calJ,\bfL)\ll\int_{\bbR^d}h_{T,\bfL}(r)\prod_j r_j\tanh(\pi
    r_j)dr+\sum_{\{\gamma\}}c_\gamma
    \tilde{h}_{T,\bfL}(\gamma).\]
    The contribution from the nontrivial conjugacy classes is bounded
    by some constant depending on $T$ but not on $\bfL$, while the
    integral is bounded by
    $O(\frac{L_{s}\cdots L_d}{T})$.
    Dividing by $L_1\cdots L_d$ and taking $\bfL\to\infty$ we get
    \[\limsup_{\norm{\bfL}\to\infty} \frac{N(\calJ,\bfL)}{L_1\cdots
    L_d}=O(\frac{1}{T}).\]
    Now take $T\to\infty$ to conclude the proof.
    \end{proof}

    \begin{lem}\label{lSmall2}
    Let $h_j\in C^\infty(\bbR)$ be holomorphic and satisfy
    $|h_j(r_j)|\ll \frac{1}{|r_j|^3}$ uniformly in the strip $|\im(r)|\leq \frac{1}{2}$.
    Let $\calI_s$ denote the set of exceptional eigenfunctions.
    Define $h_\bfL(r)=\prod_jh_j(r_j-L_j)$, then
    \[\lim_{\norm{\bfL}\to\infty}\frac{1}{L_1\cdots
    L_d}\sum_{k\in\calI_s}h_\bfL(r_k)=0\]
    \end{lem}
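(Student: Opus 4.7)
The plan is to split $\calI_s$ according to which partial eigenvalues are small, exploit the cubic decay of each $h_j$ at the imaginary points via the holomorphy of $h_j$ in the strip, and then control the sum over the real coordinates by a Weyl-type count of exceptional eigenfunctions in unit windows.

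First I would decompose $\calI_s=\bigsqcup_{\emptyset\ne\calJ\subseteq\{1,\ldots,d\}}\calI(\calJ)$, using the sets $\calI(\calJ)$ defined just before Lemma \ref{lSmall1}, and treat each nonempty $\calJ$ separately. For $k\in\calI(\calJ)$ and $j\in\calJ$ the point $r_{k,j}-L_j$ lies in the strip $|\Im(r)|\le 1/2$ (where $h_j$ is holomorphic with $|h_j(r)|\ll|r|^{-3}$) and $|r_{k,j}-L_j|\ge L_j-\tfrac{1}{2}$; hence $|h_j(r_{k,j}-L_j)|\ll L_j^{-3}$. This gives
\[
|h_\bfL(r_k)|\ll\Bigl(\prod_{j\in\calJ}L_j^{-3}\Bigr)\prod_{j\notin\calJ}|h_j(r_{k,j}-L_j)|,
\]
so it suffices to prove $\sum_{k\in\calI(\calJ)}\prod_{j\notin\calJ}|h_j(r_{k,j}-L_j)|\ll\prod_{j\notin\calJ}L_j$: multiplying by $\prod_{j\in\calJ}L_j^{-3}$ then yields the bound $(L_1\cdots L_d)\prod_{j\in\calJ}L_j^{-4}=o(L_1\cdots L_d)$, since $\calJ$ is nonempty.

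Second, I would partition the real coordinates $r_{k,j}$, $j\notin\calJ$, into unit windows centered at integer vectors $\bfM=(M_j)_{j\notin\calJ}\in\bbN^{|\calJ^c|}$. Uniformly over each window the decay gives $|h_j(r_{k,j}-L_j)|\ll(|M_j-L_j|+1)^{-3}$. Writing $\tilde N(\calJ,\bfM)$ for the number of $k\in\calI(\calJ)$ with $|r_{k,j}-M_j|\le\tfrac{1}{2}$ for all $j\notin\calJ$, the problem reduces to bounding
\[
\sum_\bfM\tilde N(\calJ,\bfM)\prod_{j\notin\calJ}(|M_j-L_j|+1)^{-3}.
\]
Granted the uniform count $\tilde N(\calJ,\bfM)\ll\prod_{j\notin\calJ}M_j$, the sum factorizes over coordinates and each one-variable sum $\sum_{M\ge1}M(|M-L|+1)^{-3}\ll L$ delivers the desired $\prod_{j\notin\calJ}L_j$.

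To obtain the uniform count, I would adapt the Gaussian trace-formula argument from the proof of Lemma \ref{lSmall1}, applying the Selberg trace formula to the positive test function
\[
H_{T,\bfM}(r)=\prod_{j\in\calJ}e^{-Tr_j^2/2}\prod_{j\notin\calJ}\bigl(e^{-(r_j-M_j)^2/2}+e^{-(r_j+M_j)^2/2}\bigr)^2,
\]
which is bounded below by a positive constant on the set defining $\tilde N(\calJ,\bfM)$. The identity contribution is $O_T(\prod_{j\notin\calJ}M_j)$, while the Fourier transforms entering the hyperbolic and elliptic terms are Gaussians in the geodesic length $l$ modulated by $\cos(M_jl)$, so the geometric side is $O_T(1)$ uniformly in $\bfM$. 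The principal technical point is exactly this $\bfM$-uniformity of the geometric side: any $\bfM$-dependence there would be amplified by the tail sum over windows and destroy the target $\prod_{j\notin\calJ}L_j$ bound. The choice of test function — smoothing only in the imaginary coordinates and a Gaussian window in the real ones — is designed precisely so that the closed-geodesic and elliptic contributions depend on $T$ but not on the window location $\bfM$, matching the mechanism used in Lemma \ref{lSmall1}.
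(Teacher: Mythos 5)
Your decomposition into the sets $\calI(\calJ)$ and the windowing of the real coordinates match the paper's proof, but the mechanism you use to gain the needed $o(1)$ does not work in the required generality. You extract smallness from the decay $|h_j(r_{k,j}-L_j)|\ll L_j^{-3}$ on the exceptional coordinates $j\in\calJ$ and conclude that the total is $(L_1\cdots L_d)\prod_{j\in\calJ}L_j^{-4}=o(L_1\cdots L_d)$ ``since $\calJ$ is nonempty''. That last step implicitly assumes $L_j\to\infty$ for every $j\in\calJ$. But the limit in the lemma is only $\norm{\bfL}\to\infty$, which allows some coordinates of $\bfL$ to remain bounded --- and this is precisely the regime of interest for degenerate energy levels $\bfE$ with some $E_j=0$ (cf.\ Theorem \ref{tVAR2}). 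If $L_j$ stays bounded for some $j\in\calJ$, then $\prod_{j\in\calJ}L_j^{-4}$ is merely $O(1)$ (recall $L_j\ge\frac12$), and your argument yields only $O(L_1\cdots L_d)$, not $o(L_1\cdots L_d)$. Put differently: on such coordinates you get no decay, your uniform window count $\tilde N(\calJ,\bfM)\ll\prod_{j\notin\calJ}M_j$ is of the same order as the count of \emph{all} eigenfunctions in the window, and the tail sum $\sum_{M}M(|M-L|+1)^{-3}\ll L$ lands exactly at the critical order with no room to spare.

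The missing ingredient is the density-zero statement of Lemma \ref{lSmall1}: for every $\epsilon>0$ there is $R$ such that $N(\calJ,\bfL')\le\epsilon\,L_1'\cdots L_d'$ once $\norm{\bfL'}>R$. The paper's proof uses only the trivial bound $|h_j(r_{k,j}-L_j)|=O(1)$ on the coordinates $j\in\calJ$, and instead inserts this $\epsilon$-saving into the window count $N(\calJ,\bfL-\bfM)$ before summing over $\bfM$; the finitely many windows with $\norm{\bfL-\bfM}\le R$ contribute $O_R(\frac{1}{L_1\cdots L_d})$ after normalization, the remaining windows contribute $O(\epsilon)$, and then $\epsilon\to 0$. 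Your fixed-$T$ Gaussian test function gives the uniform count $\tilde N(\calJ,\bfM)\ll_T\prod_{j\notin\calJ}M_j$, but with $T$ fixed it cannot produce the $\epsilon$-improvement; you would need to let $T\to\infty$ as in Lemma \ref{lSmall1} (or simply quote that lemma) and restructure the final summation over windows accordingly.
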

    \begin{proof}
    It is sufficient prove this when taking the sum over $k\in\calI(\calJ)$ for
    an arbitrary nonempty subset $\calJ\subset\set{1,\ldots d}$.
    We will show this for $\calJ=\set{s+1,\ldots, d}$ (the proof for any other set is analogous).

    We can write the corresponding sum as
    \[\sum_{k\in\calI(\calJ)}h_\bfL(r_k)=\sum_{\bfM\in\bbZ^s}\sum_{k\in\calI(\calJ,\bfL-\bfM)}h_\bfL(r_k),\]
    where we embed $\bfM=(M_1,\ldots,M_s,0,\ldots,0)\subset\bbZ^d$ in the natural way.

    For fixed $k\in\calI(\calJ,\bfL-\bfM)$ and any $j\not\in\calJ$, $|r_{k,j}- L_j|\geq
    |M_j|-\frac{1}{2}$. We can thus deduce that $|h_j(r_{k,j}- L_j)|=O((\frac{1}{M_j^3}))$. For
    $j\in\calJ$, we have $r_{k,j}- L_j=i\tilde{r}_{k,j}- L_j$ with $\tilde{r}_{k,j}<\frac{1}{2}$
    bounded. Consequently,
    $|h_j(r_{k,j}- L_j)|=O(\frac{1}{L_j})=O(1)$ is bounded. We thus have
   \[\frac{1}{L_1\cdots L_d}\sum_{k\in\calI(\calJ)}h_\bfL(r_k)\ll
    \frac{1}{L_1\cdots L_d}\sum_{\bfM\in\bbZ^s}\frac{N(\calJ,\bfL-\bfM)}{\prod_{j\not\in\calJ}\max(M_j^3,1)}.\]

    On the other hand, from the previous lemma, for every
    $\epsilon>0$ there is $R>0$ so that for $\norm{\bfL}>R$,
    $N(\calJ,\bfL)\leq\epsilon L_1\cdots L_d$. Separate the sum into two terms,
    the first a finite sum over the terms $\bfM$ for which $\norm{\bfL-\bfM}\leq
    R$, and the second when $\norm{\bfL-\bfM}>R$.
    The first term is bounded by
     \[\frac{\sharp\set{\bfM\colon\norm{\bfM}\leq R}\cdot \max\set{N(\calJ,\bfM)\colon\norm{\bfM}\leq
    R}}{L_1\cdots L_d}=O_R(\frac{1}{L_1\cdots L_d}),\]
    and the second by
    \begin{eqnarray*}
    \frac{\epsilon}{L_1\cdots L_d}\sum_{\norm{\bfL-\bfM}>
    R}\frac{\prod_j|L_j-M_j|}{\prod_{j}\max(|M_j|^3,1)}&\leq& \epsilon
    \sum_{\bfM}\prod_{j}\frac{1}{\max(M_j^2,1)}\\
    &\leq & \epsilon (1+\sum_{M\neq 0}\frac{1}{M^2})^s.
    \end{eqnarray*}
    Therefore, when taking $\bfL\to\infty$
    \[\limsup_{\norm{\bfL}\to\infty}\frac{1}{L_1\cdots
    L_d}\sum_{k\in\calI_s}h_\bfL(r_k)=O(\epsilon)\]
    and taking $\epsilon\to 0$ concludes the proof.

    \end{proof}

    \subsection{Proof of Proposition \ref{pCOUNT1}}
    Fix a positive even smooth function $h\in C^\infty(\bbR)$,
    with Fourier transform $\hat{h}$ compactly supported. For each
    $L_j\geq\frac{1}{2},\epsilon>0$ let
    $h_j(r_j)=h_{L_j,\epsilon}(r_j)=h(\frac{r_j-L_j}{\epsilon})+h(\frac{-r_j-L_j}{\epsilon})$.

    For $r_j\in\bbR$ real, the function $h_{L_j,\epsilon}$ is a positive function,
    and for $|r_j-L_j|\leq \frac{\epsilon}{2}$ it is uniformly bounded away from $0$. We can thus bound
    \[N(\bfL,\epsilon)\ll\sum_{r_{k}\in\bbR^d} \prod h_{L_j,\epsilon}(r_{k,j}).\]
    From the previous lemma, the contributions of the exceptional eigenfunctions can be
    bounded by $o(L_1\cdots L_d)$ hence
    \[N(\bfL,\epsilon)\ll\sum_{k} \prod h_{L_j,\epsilon}(r_{k,j})+o(L_1\cdots L_d).\]
    For the full sum, by the Selberg trace formula, we get
    \begin{eqnarray*}
    \sum_{k} \prod h_{L_j,\epsilon}(r_{k,j})&=&\prod_{j=1}^d\left(\int_\bbR h_{L_j,\epsilon}(r_j)r_j\tanh(\pi
    r_j)dr_j\right)\\&+& \sum_{\{\gamma\}}c_\gamma\prod_{j=1}^d
    \tilde{h}_{L_j,\epsilon}(\gamma_j).
    \end{eqnarray*}

    Notice that the Fourier transform $\hat{h}_{L_j,\epsilon}(t)=2\epsilon\cos(L_jt)\hat{h}(\epsilon
    t)$, and since we assumed $\hat{h}$ compactly supported, there are only a finite number (depending on $\epsilon$)
    of nontrivial conjugacy classes contributing to the sum. Each contribution is bounded by some constant
    (not depending on $\bfL$), so that
    \[N(\bfL,\epsilon)\ll\prod_{j=1}^d\left(\int_\bbR h_{L_j,\epsilon}(r_j)r_j\tanh(\pi
    r_j)dr_j\right)+o(L_1\cdots L_d)+O_\epsilon(1).\]
   We can estimate the integral
    \[\int_\bbR h_{L_j,\epsilon}(r_j)r_j\tanh(\pi
    r_j)dr_j\ll \int_0^\infty h(\frac{r-L_j}{\epsilon})rdr\ll
    L_j\epsilon,\]
    to get the bound
    \[N(\bfL,\epsilon)\ll \epsilon^d L_1\cdots L_d+o(L_1\cdots L_d)+O_\epsilon(1).\]
    Now divide by $L_1\cdots L_d$ and take $\bfL\to\infty$ to get
    \[\limsup_{\norm{\bfL}\to\infty}\frac{N(\bfL,\epsilon)}{L_1\cdots L_d}\ll\epsilon^d.\]

    \subsection{Smoothing}\label{sSmooth}
    We now approximate the window function by a smoothed function
    admissible in the Selberg trace formula.
    Recall that a smooth function $h\in C^\infty(\bbR)$ is
        $\delta$-approximating the window function around $L\in\bbR$, if it satisfies for real $x>0$
        \[|h(x)-\id_{[L-\frac{1}{2},L+\frac{1}{2}]}(x)|=
        \left\lbrace\begin{array}{cc}
        O(\delta) & |x-L|\leq \frac{1}{2}-\sqrt{\delta}\\
        O(1) &  \frac{1}{2}-\sqrt{\delta}\leq |x-L|\leq
        \frac{1}{2}\\
        O_N(\delta(\frac{1}{|x-L|-1/2})^N) &
        |x-L|>\frac{1}{2}\\
        \end{array}\right.\]

    \begin{prop}\label{pSmooth2}
    Let $h_{L_j,\delta}\in C^\infty(\bbR)$ be $\delta$-approximating the window functions around
    $L_j$ respectively. Let $h_{\bfL,\delta}(r)=\prod_j h_{L_j,\delta}(r_j)$ be the corresponding approximation
    of the window function $\Theta(r-\bfL)$ around $\bfL$. Then
    \[\limsup_{\norm{\bfL}\to\infty}\frac{1}{L_1\cdots L_d}\sum_{r_{k}\in\bbR^d} |h_{\bfL,\delta}(r_{k})-\Theta(r_{k}-\bfL)|=O(\sqrt{\delta})\]
    \end{prop}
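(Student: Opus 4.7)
\medskip

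\noindent\textbf{Proof proposal.} The natural approach is to expand the product via the standard telescoping identity. Writing $g_j = h_{L_j,\delta} - \id_{[L_j-1/2,L_j+1/2]}$, so that $h_{L_j,\delta} = \id_{[L_j-1/2,L_j+1/2]} + g_j$, I would multiply out
\[
h_{\bfL,\delta}(r) - \Theta(r-\bfL) = \sum_{\emptyset\neq S\subseteq\{1,\ldots,d\}} \prod_{j\in S} g_j(r_j)\prod_{j\notin S}\id_{[L_j-1/2,L_j+1/2]}(r_j).
\]
Taking absolute values and summing over the real spectrum, it suffices to bound each of the $2^d-1$ terms by $O(\sqrt{\delta}\,L_1\cdots L_d)$. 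The exceptional eigenvalues contribute $o(L_1\cdots L_d)$ by Lemma \ref{lSmall2} applied to a suitable majorant of $h_{L_j,\delta}$ (for instance one can dominate it by a function with cubic decay at infinity), so they may be safely ignored.

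For a fixed nonempty $S$, I would further partition each coordinate $j\in S$ into the three regimes from the definition of $\delta$-approximation: (I) the interior $|r_j-L_j|\leq 1/2-\sqrt{\delta}$, where $|g_j|=O(\delta)$; (II) the boundary annulus $1/2-\sqrt{\delta}<|r_j-L_j|\leq 1/2$, where $|g_j|=O(1)$; and (III) the exterior $|r_j-L_j|>1/2$, where $|g_j|=O_N(\delta(|r_j-L_j|-1/2)^{-N})$. For $j\notin S$ the indicator restricts $r_j$ to an interval of length $1$. To count eigenvalues in a product of such intervals, I would use Proposition \ref{pCOUNT1}: a box with one side of length $\eta$ and the others of length $1$ can be tiled by $O(\eta^{-(d-1)})$ cubes of side $\eta$, each containing $O(\eta^d L_1\cdots L_d)$ eigenvalues, giving $O(\eta\, L_1\cdots L_d)$ in total; iterating this handles boxes where several sides are short.

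Combining the three regime bounds, each nonempty $S$ produces a sum that I expect to factor (in the counting bound) over coordinates. The regime (I) contribution on coordinate $j\in S$ contributes a factor $\delta\cdot L_j$; regime (II) contributes $O(1)\cdot O(\sqrt{\delta}L_j)=O(\sqrt{\delta}\,L_j)$ by the counting estimate above; regime (III) produces a dyadic sum $\sum_{m\geq 1} \delta m^{-N} \cdot O(L_j)\ll \delta L_j$ (using that each annulus $\{m\leq|r_j-L_j|-1/2< m+1\}$ contributes $O(L_j)$ eigenvalues with weight $O(\delta m^{-N})$). For $j\notin S$ the indicator simply gives a factor $O(L_j)$. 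Taking the worst of the three regimes $O(\sqrt{\delta})$, the total contribution of each nonempty $S$ is $O(\sqrt{\delta}\, L_1\cdots L_d)$, as required.

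The main subtlety is the middle boundary regime: this is where the $\sqrt{\delta}$ rate comes from, and it is sharp (smoothing a sharp indicator necessarily introduces an $O(\sqrt{\delta})$ error in an $\ell^1$-sense of this kind). A secondary technical point is to justify rigorously that Proposition \ref{pCOUNT1}, which is stated for cubes, extends to the thin anisotropic boxes above; the covering argument sketched — tile by smaller cubes — works cleanly because the estimate $N(\bfL,\epsilon)\ll \epsilon^d L_1\cdots L_d$ is essentially uniform in the center $\bfL$ in the regime $L_j\gg 1$. With these points handled, summing the $2^d-1$ contributions yields $\limsup_{\|\bfL\|\to\infty}(L_1\cdots L_d)^{-1}\sum_{r_k\in\bbR^d}|h_{\bfL,\delta}(r_k)-\Theta(r_k-\bfL)|=O(\sqrt{\delta})$.
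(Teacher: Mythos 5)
Your proposal is correct and follows essentially the same strategy as the paper: the paper partitions the spectrum into the unit boxes $\calI(\bfL-\bfM)$ rather than expanding the product over subsets $S$, but both arguments reduce to the same three estimates --- $O(\delta)$ in the interior of the window, rapid decay summed against Proposition \ref{pCOUNT1} in the exterior, and a covering of the $\sqrt{\delta}$-thick boundary layer by cubes of side $\sqrt{\delta}$, which is exactly where the $\sqrt{\delta}$ arises. (One side remark: the sum in the statement already runs only over $r_k\in\bbR^d$, so the appeal to Lemma \ref{lSmall2} for the exceptional eigenfunctions is not actually needed here.)
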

    \begin{proof}
    We can write the sum differently as
    \[\sum_{
    0\neq \bfM\in\bbZ^{d}}\sum_{k\in\calI(\bfL-\bfM)}|\prod_{j=1}^d h_{L_j,\delta}(r_{k,j})|+
    \sum_{k\in\calI(\bfL)}|\prod_{j=1}^d h_{L_j,\delta}(r_{k,j})-1|,\]

    In the first sum, for $k\in\calI(\bfL-\bfM)$, we can bound
    $h_{L_j,\delta}(r_{k,j})=O_N(\delta(M_j)^{-N})$ if $M_j\neq 0$ and $h_{L_j,\delta}(r_{k,j})=O(1)$ otherwise.
    We then evaluate $\sharp\calI(\bfL-\bfM)=O((L_1-M_1)\cdots (L_d-M_d))$ (Proposition \ref{pCOUNT1}) and get a bound
    on the first sum of order
   \[\sum_{\bfM\neq 0}(\prod_{j=1}^d \min(\delta\frac{L_j-M_j}{M_j^N},1))=O(\delta L_1\cdots L_d).\]

   We now evaluate the second sum. For any $\epsilon>0$ denote by
   \[\calI(L,\epsilon)=\set{k\colon \norm{r_{k}-\bfL}_\infty\leq
   \frac{\epsilon}{2}}.\]
   We can separate the sum over $\calI(\bfL)$ to a sum over
   $\calI(\bfL,1-\sqrt{\delta})$ and the rest. For
   $k\in\calI(\bfL,1-\sqrt{\delta})$ we can evaluate
   $h_{L_j,\delta}=1+O(\delta)$, and the number of such eigenvalues
   is bounded by $N(\bfL)=O(L_1\cdots L_d)$ implying that
   \[\sum_{k\in\calI(\bfL,1-\sqrt{\delta})}(\prod_{j=1}^d h_{L_j,\delta}(r_{k,j})-1)=O(\delta L_1\cdots L_d).\]

   We are left with the sum over $\calI(\bfL)\setminus\calI(\bfL,1-\sqrt{\delta})$.
   This set can be covered by $O(\delta^{-\frac{d-1}{2}})$ boxes of size $\delta^{\frac{d}{2}}$.
   Since the number of eigenvalues
   in each such box is bounded by $O(\delta^{\frac{d}{2}}L_1\cdots L_d)+o(L_1\cdots L_d)$ (Proposition \ref{pCOUNT1}) we can bound
   \[\sharp(\calI(\bfL)\setminus\calI(\bfL,1-\sqrt{\delta}))=O(\sqrt{\delta} L_1\cdots L_d)+o(L_1\cdots L_d).\]
    Since the functions $h_{L_j,\delta}=O(1)$ are bounded, this is also the bound for the remaining sum.

   We have thus seen that the difference
   \[\sum_{r_{k}\in\bbR^d} |h_{\bfL,\delta}(r_{k})-\Theta(r_{k}-\bfL)|=O(\sqrt{\delta} L_1\cdots L_d)+o(L_1\cdots L_d).\]
    Dividing by $L_1\cdots L_d$ and taking $\norm{\bfL}\to\infty$ concludes the proof.
    \end{proof}

    \subsection{Proof of Proposition \ref{pCOUNT2}}
    Fix a positive even holomorphic function $\rho\in PW_0(\bbC)$ with Fourier transform $\hat\rho$ supported in $[-1,1]$
    and $\hat\rho(0)=1$. For any $\delta>0$, let
    $\rho_\delta(x)=\frac{1}{\delta}\rho(\frac{x}{\delta})$ and
    define a smoothed window function by convolution
    $\id_\delta=\rho_\delta*\id_{[-\frac{1}{2},\frac{1}{2}]}$.
    For $j=1,\ldots, d$ the function
    \[h_{L_j,\delta}(r_j)=\id_\delta(-r_j-L_j)+\id_\delta(r_j-L_j),\]
    is $\delta$-approximating the window function around $L_j$, and the function
    $h_{\bfL,\delta}(r)=\prod_j h_j(L_j,\delta)(r_j)$ is
    admissible in the Selberg trace formula.
    Hence,
    \[\sum_k h_{\bfL,\delta}(r_{k})=\prod_{j=1}^d\left(\int_\bbR h_{L_j,\delta}(r_j)r_j\tanh(\pi
    r_j)dr_j\right)+\sum_{\{\gamma\}}c_\gamma
    \tilde{h}_{\bfL,\delta}(\gamma).\]
    As in the proof of Proposition \ref{pCOUNT1}, the contribution of the nontrivial conjugacy classes
    is bounded by $O_\delta(1)$. We can bound the integrals
     \[\int_\bbR h_{L_j,\delta}(r_j)r_j\tanh(\pi
    r_j)dr_j\gg L_j,\]
    uniformly for $L_j\geq \frac{1}{2}$. Therefore, there is $c>0$
    such that
    \[\sum_k h_{\bfL,\delta}(r_{k})\geq cL_1\cdots L_d +O_\delta(1).\]
    The contribution of the exceptional eigenfunctions is $o(L_1\cdots L_d)$,
    and by Proposition \ref{pSmooth2} the contribution of all other eigenvalues differ
    from $N(L)$ by $O(\sqrt{\delta} L_1\cdots L_d)+o(L_1\cdots L_d)$.
    We can deduce that

    \[\frac{N(L)}{L_1\cdots L_d}\geq
    c+O(\sqrt{\delta})+O_\delta(\frac{1}{L_1\cdots L_d})+o(1)\]
    Taking $\bfL\to\infty$, and then $\delta\to 0$ concludes the proof.

\bibliographystyle{amsplain}

\begin{thebibliography}{10}

\bibitem{BourgainLinden03}
Jean Bourgain and Elon Lindenstrauss, \emph{Entropy of quantum
limits}, Comm.
  Math. Phys. \textbf{233} (2003), no.~1, 153--171.

\bibitem{ColinDeVerdiere85}
Y.~Colin~de Verdi{\`e}re, \emph{Ergodicit\'e et fonctions propres du
  laplacien}, Bony-Sj\"ostrand-Meyer seminar, 1984--1985, \'Ecole Polytech.,
  Palaiseau, 1985, pp.~Exp.\ No. 13, 8.

\bibitem{DuisteremaatKolkVaradarajan79}
J.~J. Duistermaat, J.~A.~C. Kolk, and V.~S. Varadarajan,
\emph{Spectra of
  compact locally symmetric manifolds of negative curvature}, Invent. Math.
  \textbf{52} (1979), no.~1, 27--93.

\bibitem{Efrat87}
I.~Efrat, \emph{The Selberg trace formula for
$\mathrm{PSL}_2(\mathbb{R})^n$},
  Mem. Amer. Math. Soc. \textbf{65}.

\bibitem{Helgason81}
Sigurdur Helgason, \emph{Topics in harmonic analysis on homogeneous
spaces},
  Progress in Mathematics, vol.~13, Birkh\"auser Boston, Mass., 1981.

\bibitem{Lang85}
Serge Lang, \emph{{${\rm SL}\sb 2({\bf R})$}}, Graduate Texts in
Mathematics,
  vol. 105, Springer-Verlag, New York, 1985, Reprint of the 1975 edition.

\bibitem{Linden01}
Elon Lindenstrauss, \emph{On quantum unique ergodicity for
  {$\Gamma\backslash\mathbb H\times\mathbb H$}}, Internat. Math. Res. Notices
  (2001), no.~17, 913--933.

\bibitem{Linden06}
\bysame, \emph{Invariant measures and arithmetic quantum unique
ergodicity},
  Ann. of Math. (2) \textbf{163} (2006), no.~1, 165--219.


\bibitem{RudSarnak94}
Ze{\'e}v Rudnick and Peter Sarnak, \emph{The behaviour of
eigenstates of
  arithmetic hyperbolic manifolds}, Comm. Math. Phys. \textbf{161} (1994),
  no.~1, 195--213.

\bibitem{Shavel76}
Ira Shavel, \emph{On surfaces obtained from quaternion algebras over
real
  quadratic fields}, Bull. Amer. Math. Soc. \textbf{82} (1976), no.~5,
  727--730.

\bibitem{Shimizu}
Hideo Shimizu, \emph{On discontinuous groups operating on the
product of the
  upper half planes}, Ann. of Math. (2) \textbf{77} (1963), 33--71.

\bibitem{SilbermanVenkatesh04I}
Lior Silberman and Akshay Venkatesh, \emph{On quantum unique
ergodicity for locally symmetric spaces}, Geom. Funct. Anal. \textbf{17} (2007), no.~3, 960--998.

\bibitem{SilbermanVenkatesh04II}
\bysame, \emph{Entropy bounds for Hecke eigenfunctions on division algebras}, (2007) preprint .

\bibitem{Snirelman74}
A.~I. {\v{S}}nirel{\cprime}man, \emph{Ergodic properties of
eigenfunctions},
  Uspehi Mat. Nauk \textbf{29} (1974), no.~6(180), 181--182.

\bibitem{Watson01}
T.~Watson, \emph{Rankin triple products and quantum chaos},
Ph.D. thesis, Princeton University, 2001.

\bibitem{Zelditch87}
Steven Zelditch, \emph{Uniform distribution of eigenfunctions on
compact
  hyperbolic surfaces}, Duke Math. J. \textbf{55} (1987), no.~4, 919--941.

\bibitem{Zelditch88}
  \bysame, \emph{Selberg trace formulae, pseudodifferential operators, and
              geodesic periods of automorphic forms},
 Duke Math. J.  \textbf{56} (1988), no.~2, 295--344.



\bibitem{Zelditch92}
\bysame, \emph{On a ``quantum chaos'' theorem of {R}. {S}chrader and
{M}.
  {T}aylor}, J. Funct. Anal. \textbf{109} (1992), no.~1, 1--21.

\bibitem{Zimmer84}
Robert~J. Zimmer, \emph{Ergodic theory and semisimple groups},
Monographs in
  Mathematics, vol.~81, Birkh\"auser Verlag, Basel, 1984.

\end{thebibliography}
\def\cprime{$'$}
\providecommand{\bysame}{\leavevmode\hbox
to3em{\hrulefill}\thinspace}
\providecommand{\MR}{\relax\ifhmode\unskip\space\fi MR }
\providecommand{\MRhref}[2]{%
  \href{http://www.ams.org/mathscinet-getitem?mr=#1}{#2}
} \providecommand{\href}[2]{#2}

\end{document}